\newtheorem{theorem}{Theorem}
\newtheorem{definition}{Definition}
\newtheorem{assumption}{Assumption}
\newtheorem{lemma}{Lemma}
\newtheorem{corollary}{Corollary}
\newtheorem{remark}{Remark}
\begin{document}

\twocolumn[
\icmltitle{Tractable $n$-Metrics for Multiple Graphs}




\begin{icmlauthorlist}
\icmlauthor{Sam Safavi}{to}
\icmlauthor{Jos\'e Bento}{to}
\end{icmlauthorlist}

\icmlaffiliation{to}{Department of Computer Science, Boston College, Chestnut Hill, MA, USA}

\icmlcorrespondingauthor{Jos\'e Bento}{jose.bento@bc.edu}

\icmlkeywords{Machine Learning, ICML}

\vskip 0.3in
]



\printAffiliationsAndNotice{}  

\begin{abstract}
Graphs are used in almost every scientific discipline to express relations among a set of objects. Algorithms that compare graphs, and output a closeness score, or a correspondence among their nodes, are thus extremely important. Despite the large amount of work done, many of the scalable algorithms to compare graphs do not produce closeness scores that satisfy the intuitive properties of metrics. This is problematic since non-metrics are known to degrade the performance of algorithms such as distance-based clustering of graphs \cite{bento2018family}. On the other hand, the use of metrics increases the performance of several machine learning tasks \cite{indyk1999sublinear,clarkson1999nearest,angiulli2002fast,ackermann2010clustering}. In this paper, we introduce a new family of multi-distances (a distance between more than two elements) that satisfies a generalization of the properties of metrics to multiple elements. 
In the context of comparing graphs, we are the first to show the existence of multi-distances that simultaneously incorporate the useful property of \emph{alignment consistency} \cite{nguyen2011optimization}, and a generalized metric property. Furthermore, we show that these multi-distances can be relaxed to convex optimization problems, without losing the generalized metric property.
%

\end{abstract}

\vspace{-1cm}
\section{Introduction}\label{sec:intro}

A canonical way to check if two graphs $G_1$ and $G_2$ are similar, is to try to find
a map $P$ from the nodes of $G_2$ to the nodes of $G_1$ such that, for many pairs of nodes in $G_2$, their images in $G_1$ through $P$ have the same connectivity relation (connected/disconnected)~\cite{deza2009encyclopedia}. For {equal-sized} graphs, this can be formalized as 
\begin{equation}\label{eq:chem_dist}
d(G_1,G_2) \hspace{-0.9mm} \triangleq \hspace{-0.9mm}\min_{P}\{ \vvvert A_1 - PA_2P^\top\vvvert \hspace{-1mm}=\hspace{-1mm} \vvvert A_1P - PA_2\vvvert  \},
\end{equation}
where $A_1$ and $A_2$ are the adjacency matrices of $G_1$ and $G_2$, $P$ and its transpose $P^\top$ are permutation matrices, and, here, $\vvvert\cdot\vvvert$ is the Frobenius norm. A map $P^*$ that minimizes \eqref{eq:chem_dist} is called an optimal \emph{alignment} or \emph{match} between $G_1$ and $G_2$. 
If $d(G_1,G_2)$ is small (resp. large), we say $G_1$ and $G_2$ are topologically similar (resp. dissimilar). Computing $d$, or $P^*$, is hard \cite{klau2009new}. Determining if $d(G_1,G_2)=0$, which is the graph isomorphism problem, is not known to be in P, or in NP-hard \cite{babai2016graph}. 

%
%
Scalable alignment algorithms, which find an approximation $P$ to an optimal alignment  $P^*$, or find a solution to a tractable variant of \eqref{eq:chem_dist}, e.g., \cite{klau2009new,bayati2013message,singh2008global,el2015natalie}, have mostly been developed with no concern {as to} whether the closeness score $d$ obtained from the  alignment $P$, e.g., computed via $d(G_1,G_2) = \|A_1P - PA_2\|$, results in a non-metric. An exception is the recent work in \cite{bento2018family}. Indeed for the methods in, e.g.,  \cite{klau2009new,bayati2013message,singh2008global,el2015natalie},  the work of \cite{bento2018family} shows that one can find two graphs that are individually similar to a third one, but not similar to each other, according to $d$. Furthermore, \cite{bento2018family} shows how the lack of the metric properties can lead to a degraded performance in a clustering task to automatically classify different graphs into the categories: Barabasi Albert, Erdos-Renyi, Power Law Tree, Regular graph, and Small World. At the same time, the metric properties allow us to solve several machine learning tasks efficiently \cite{indyk1999sublinear,clarkson1999nearest,angiulli2002fast,ackermann2010clustering}, as we now illustrate.

{\bf Diameter estimation:} Given a set $S$ with $|S|$ graphs, we can compute the maximum diameter $\Delta \triangleq \max_{G_1,G_2 \in S} d(G_1,G_2)$ by computing $\binom{|S|}{2}$ distances. However, if $d$ is a metric, we know that there are at least $\Omega(|S|)$ pairs of graphs with $d \geq \Delta /2$. Indeed, if $d(G^*,G_*) = \Delta$, then, by the triangle inequality, for any $G \in S$, we cannot have both $d(G^*,G) < \Delta/2$ and $d(G_*,G) < \Delta/2$ . Therefore, if we evaluate $d$ on random pairs of graphs, we are guaranteed to find an $1/2$-approximation of $\Delta$ with only $\mathcal{O}(|S|)$ distance computations, on average.

Being able to compare two graphs is important in many fields such as biology~\cite{kalaev2008networkblast,zaslavskiy2009global,kelley2004pathblast,weskamp2007multiple}, object recognition~\cite{conte2004thirty},
dealing with ontologies ~\cite{hu2008matching,wang2016topicpanorama}, computer vision~\cite{conte2004thirty}, and social networks~\cite{zhang2015multiple}, and graph clustering \cite{ma2016multi}, to name a few. In many applications, however, one needs to jointly compare multiple graphs. This is the case, for example, in aligning protein-protein interaction networks~\cite{singh2008global}, recommendation systems, in the collective analysis of networks, or 
in the alignment of graphs obtained from brain MRI~\cite{papo2014complex}.  
 
The problem of jointly comparing $n$ graphs, $n \geq 3$, is harder, and has been studied far less than when $n=2$. Examples and applications include \cite{pachauri2013solving,douglas2018metrics,yan2015consistency,gold1996graduated,hu2016distributable,park2016encouraging,huang2013consistent,sole2011models,williams1997multiple,hashemifar2016joint,heimann2018node,nassar2017multimodal,feizi2016spectral,chen2014near}.

\vspace{0mm}
Consider the search for a function $d(G_1,...,G_n)$ that scores how close $G_1,...,G_n$ are. New questions arise when $n\geq 3$:
\vspace{-0.5cm}
\begin{enumerate}
\item If $d$ produces alignments between each pair of graphs in $\{G_1,\dots,G_n\}$,  should these alignments be related? What properties should they satisfy? \label{quest_1}
\vspace{-2mm}
\item Should $d$ satisfy similar properties to that of a metric? What properties?\label{quest_2}
\vspace{-2mm}
\item Is it possible to find a $d$ that is tractable? Is it possible to impose on $d$ the properties from \ref{quest_1} and \ref{quest_2} above without losing tractability?
\end{enumerate}
\vspace{-0.2cm}

Multi-graph alignment scores, are important in many applications. For example, many problems require clustering using $n$th order interaction \textcolor{black}{\citep{leordeanu2012efficient}}, i.e., clustering based on the similarity of groups of $n$ elements, not just groups of two elements, as in spectral, or hierarchical clustering. Furthermore, having a score function $d(G_1,\dots,G_n)$ with some form of generalized metric property can have advantages, similar to what \cite{bento2018family} showed for metrics (cf. Section \ref{sec:n-metrics}).

In this paper, we are the first to provide a family of similarity scores for jointly comparing multiple graphs that simultaneously (a) give intuitive joint alignments between graphs, (b) satisfy similar properties to those of metrics, and (c) can be computed using convex optimization methods.

%
%
%
\section{Related work}
Consider three graphs $G_1$, $G_2$, and $G_3$, and three permutation matrices $P_{1,2}$, $P_{2,3}$ and $P_{1,3}$, where the map $P_{i,j}$ is an alignment between the nodes of graphs $G_i$ and $G_j$. 
An intuitive property that is often required for these alignments is that if $P_{1,2}$ maps (the nodes of) $G_1$ to $G_2$, and if $P_{2,3}$ maps $G_2$ to $G_3$, then $P_{1,3}$ should map $G_1$ to $G_3$. Mathematically, $P_{1,3} = P_{1,2}P_{2,3}$. This property is often called \emph{alignment consistency}. Papers that enforce this constraint, or variants of it, include \cite{huang2013consistent,pachauri2013solving,chen2014near,yan2015matrix,yan2015consistency,zhou2015multi,hu2016distributable}. Most of these papers focus on computer vision, i.e., the task of producing alignments between shapes, or reference points among different figures, although most of the ideas can be easily adapted to aligning graphs. The proposed alignment algorithms are not all equally easy to solve, some involve convex problems, others involve non-convex or integer-valued problems. None of these works care about the alignment scores satisfying metric-like properties.

There are several papers that propose procedures for generating multi-distances from pairwise distances, and  prove that these multi-distances satisfy 
intuitive generalizations of the metric properties to $n\geq 3$ elements.
These allow us to use the existing works on two-graph comparisons to produce distances between multiple graphs. The simplest method is to define $d(G_1,\dots,G_n) = \sum_{i,j \in [n]} d(G_i,G_j)$. The problem with this approach is that if $d(G_i,G_j)$ also produces an alignment $P_{i,j}$, e.g., in \eqref{eq:chem_dist}, these alignments are unrelated, and hence do not satisfy consistency constrains that are usually desirable. An approach studied by \cite{kiss2016generalization} is to define $d(G_1,\dots,G_n) = \min_{G} \sum_{i\in[n]} d(G_i,G)$. If each $d(G_i,G)$ also produces an alignment $P_i$, and if we define $P_{i,j} = P_i P^\top_j$, then $\{P_{i,j}\}$ is a set of alignments that satisfy the aforementioned consistency constraint. The problem with this approach is that it tends to lead to computationally harder problems, even after several relaxations are applied (cf. Fermat distance in Section \ref{sec:n-metrics}). A few other works that study metrics and their generalizations are \cite{kiss2016generalization, martin2011functionally,akleman1999generalized}. 

The work of \cite{bento2018family} defines a family of metrics for comparing two graphs. Several metrics in this family are tractable, or can be reduced to solving a convex optimization problem. However, \cite{bento2018family} does not consider comparing $n\geq 3$ graphs.
We refer the reader to~\cite{khamsi2015generalized} that surveys generalized metric spaces, and~\cite{deza2009encyclopedia} that provides an extensive review of many distance functions along with their applications in different fields, and, in particular, discusses the generalizations of the concept of metrics in different areas such as topology, probability, and algebra. The authors in~\cite{deza2009encyclopedia} also discuss several distances for comparing two graphs, most of which are not tractable.

\section{Notation and preliminaries}\label{sec:prel}

\begin{table}[]
\begin{tabular}{|ll|ll|}
\hline
$G_i$        & $i$th graph               & $P_{i,j}$                 & Alig. of  $G_i$ and $G_j$       \\ \hline
$A_i$        & Adj. mat. of $G_i$ & $\mathcal{P}$             & Set of alig. mats.           \\ \hline
$n$          & \# of graphs              & $d$                       & Dist. among $n$ graphs         \\ \hline
$m$          & \# of nodes               & $\Omega$                  & Set of adj. mats. \\ \hline
$s$          & Alig. score           & $S$                       & Set of sets of alig. mats.   \\ \hline
${\bf P}$ & Mat. of $\{P_{i,j}\}$   & $\|| \cdot\||$ & Mat. norm              \\ \hline
$\|\cdot\|$ & Vec. norm  & {\bf tr} & Trace              \\ \hline
\end{tabular}
\caption{Summary of main notation used.}
\vspace{-0.8cm}
\end{table}

We focus on comparing graphs of equal size.
A canonical way to deal with graphs with different sizes is to add dummy nodes to make them equal-sized. Many
applied papers, e.g., \textcolor{black}{\cite{zaslavskiy2009global,zaslavskiy2009path,narayanan2011link,zaslavskiy2010many,zhou2012factorized,gold1996softmax,yan2015general,sole2010graduated,yan2015consistency}}, follow this approach.

Comparing equal-sized graphs, without adding dummy nodes is still important. One application in computer vision is to establish a correspondence among the nodes of $n$ graphs, each representing a geometrical relation among $m$ special points in $n$ images of the same object. The user (or detection algorithm), by design, finds the same number, $m$, of special points in each image. \textcolor{black}{See, e.g., the numerical experiments in \cite{hu2016distributable,shen2015person}. Other papers that only consider equal-sized graphs include: \cite{lyzinski2016graph,pachauri2013solving}.}
We also point the reader to the remark on comparing graphs of unequal size at end of Section \ref{sec:tractabilty}.

Let $[m] = \{1,\dots,m\}$. A graph, $G = (V\equiv [m],E)$, with node set $V$ and edge set $E$, is represented by a matrix,~$A$, whose entries are indexed by the nodes  in $V$. We denote the set that contains all such matrices by $\Omega \subseteq \mathbb{R}^{m \times m}$. E.g.,~$\Omega$ can be the set of adjacency matrices, or of the matrices containing hop-distances between all pairs of nodes.

Consider a set of $n$ graphs, $\mathcal{G} = \{G_1,G_2,\ldots,G_n\}$. Given two graphs, $G_i = (V_i,E_i)$ and $G_j=(V_j,E_j)$, from the set~$\mathcal{G}$, we denote {a pairwise matching matrix} between $G_i$ and $G_j$ by $P_{i,j}$. The rows and columns of $P_{i,j}$ are indexed by the nodes in $V_i$ and $V_j$, respectively. Note that we can extract a relation between $E_i$ and $E_j$, from a relation between $V_i$ and $V_j$. We denote the set of all pairwise matching matrices by $\mathcal{P} = \{ \{P_{i,j}\}_{i,j \in [n]} : P_{i,j} \in \mathbb{R}^{m \times m}\}$.  For example, $\mathcal{P}$ might be all \emph{permutation matrices} on~$m$ elements.

Let $1{:}n$ denote the sequence $1,\ldots, n$. For $A_1,\dots,A_n \in \Omega$, we denote the {ordered} sequence $(A_1,\ldots,A_n)$ by $A_{1:n}$. The notation $A_{1:n,n+1}^i$ corresponds to the sequence $A_{1:n}$, in which the $i$th element, $A_i$, is removed and replaced by $A_{n+1}$. If~$\sigma$ is a permutation, i.e., a bijection from $1{:}n$ to $1{:}n$ such that~$\sigma(i) = j$, then~$A_{\sigma(1:n)}$ represents a sequence, whose $i$th element is $A_j$. 
{In this paper, we use $\Vert \cdot \Vert$ and $\vvvert \cdot \vvvert$ to denote vector norms and matrix norms, respectively.}
We now provide the following definitions that will be used in the next sections of the paper. In what follows, equality of graphs means that they are isomorphic.

\begin{definition}\label{def1}
A map $d: {\Omega}^2\mapsto {\mathbb{R}}$, is a metric, if and only if, for all $A,B,C \in \Omega$: \emph{(i)} $d(A,B) \geq 0$; \emph{(ii)} $d(A,B) =0, \text{ iff } A=B$; \emph{(iii)} $d(A,B)=d(B,A)$; and \emph{(iv)} $d(A,C) \leq d(A,B) + d(B,C)$.
%
\end{definition}
%
%
\begin{definition}\label{def1.1}
A map $d: {\Omega}^2\mapsto {\mathbb{R}}$ is a pseudometric, if and only if it satisfies properties \emph{(i), (iii)} and \emph{(iv)} in \emph{Definition \ref{def1}}, and if $d(A,A)=0 \;\; \forall A \in \Omega$.
%
%
%
\end{definition}
\vspace{0mm} 
\noindent Given a  pseudometric $d$ on two graphs, we define the equivalence relation $\sim_d$ in $\Omega$  as $A \sim_d B$ if and only if $d(A,B) = 0$.
Using the fact that $d$ is a pseudometric, it is immediate to {verify} that the binary relation $\sim_d$ satisfies {\emph{reflexivity}}, \emph{symmetry} and {\emph{transitivity}}. We denote by $\Omega'=\Omega \backslash \sim_d$ the quotient space $\Omega$ {modulo} $\sim_d$, and, for any $A\in \Omega$, we let $[A]\subseteq \Omega$ denote the equivalence class of $A$. Given $A_{1:n}$, we let $[A]_{1:n}$ denote $ ([A_1],\dots,[A_n])$, an ordered set of sets.

\begin{definition}\label{def2}
A map $s: {\Omega}^2 \times \mathcal{P} \mapsto {\mathbb{R}}$ is called a $P$-score, if and only if, $\mathcal{P}$ is closed under inversion, and for any $P, P' \in \mathcal{P}$, and $A, B, C \in \Omega$, $s$ satisfies the properties:\vspace{0mm}
\begin{align}
&s(A,B,P) \geq 0\label{ds1},\\
&s(A,A,I) =0\label{s3},\\
&{s(A,B,P) = s(B,A,P^{-1})}, \label{s1}\\
&s(A,B,P) + s(B,C,P') \geq s(A,C,PP')\label{s2}.
\end{align}
\end{definition}
\vspace{0mm} 
\noindent For example, if $\mathcal{P}$ is the set of permutation matrices, and $\vvvert \cdot \vvvert$ is an element-wise matrix $p$-norm, then $s(A,B,P) = \vvvert AP-BP\vvvert$ is a $P$-score.
\begin{definition}[\hspace{-0.1mm}\cite{bento2018family}]
The SB-distance function induced by the norm $\vvvert \cdot \vvvert:\mathbb{R}^{m \times m}\mapsto \mathbb{R}$, the matrix $D \in \mathbb{R}^{m \times m}$, and the set $\mathcal{P}\subseteq\mathbb{R}^{m \times m}$ is the map $d_{SB}: \Omega^2 \mapsto \mathbb{R}$, such that
\vspace{0mm} 
\begin{equation*}\label{eq:SB}
d_{SB}(A,B) = {\min_{P \in \mathcal{P}} \vvvert AP - PB\vvvert} + \text{tr}(P^\top D).
\end{equation*}
\end{definition}
\vspace{0mm} 
\noindent The authors in \cite{bento2018family}, prove several conditions on $\Omega$, $\mathcal{P}$, the norm $\vvvert \cdot \vvvert$, and the matrix $D$, such that $d_{SB}$ is a metric, or a pseudometric. For example, if $\vvvert \cdot \vvvert$ is {an arbitrary entry-wise or operator norm, $\mathcal{P}$ is the set of $n \times n$ doubly stochastic matrices, $\Omega$ is the set of symmetric matrices, and $D$ is a \textit{distance matrix}, then $d_{SB}$ is a pseudometric.}

%

%
%

\section{{$n$-metrics} for multi-graph {alignment}}\label{sec:n-metrics}

One can generalize the notion of a {(pseudo)} metric to $n\geq~3$ elements. 
To this aim, we consider the following definitions.
\begin{definition}\label{def5}
A map $d: {\Omega}^n\mapsto \mathbb{R}$, is an $n$-metric, if and only if, for all $A_1,\ldots, A_n \in \Omega$,
\begin{align}
	&d(A_{1:n}) \geq 0,\label{h1}\\
	&d(A_{1:n}) = 0, \text{ iff } A_1   =\ldots = A_n, \label{h2-n}\\
	&d(A_{1:n}) = d(A_{\sigma(1:n)}),\label{h3}\\
	&d(A_{1:n}) \leq \textstyle\sum_{i=1}^{n} d(A^i_{1:n,n+1}).\label{h4}
\end{align}
\end{definition}
\noindent According to Definition \ref{def5}, a $2$-metric is a metric as per Definition \ref{def1}. {In the sequel, we refer to properties \eqref{h1}, \eqref{h2-n}, \eqref{h3}, and \eqref{h4}, as non-negativity, identity of indiscernibles, symmetry, and generalized triangle equality (GTI), respectively.}

\begin{definition}
A map $d: {\Omega}^n\mapsto \mathbb{R}$, is a pseudo $n$-metric, if and only if it satisfies properties \eqref{h1}, \eqref{h3} and \eqref{h4}, and for any $A \in \Omega$, {$d$ satisfies the property of \emph{self-identity}}
\vspace{-0.2cm}
\begin{equation}\label{eq:self_identity}
d(A,\cdots,A)=0.
\end{equation}
\end{definition}

{\bf Revisiting diameter estimation:} $n$-metrics have several advantages over non-$n$-metrics. For $n=2$, this is shown by \cite{bento2018family} and references therein: metrics allow several ML algorithms to finish faster, and improve the accuracy in tasks such as clustering graphs. Some of these advantages also extend to $n>2$. For example, it is  straightforward to see that, if we generalize the  diameter estimation problem in Sec. \ref{sec:intro} to $n=3$, we can compute a $1/3$-approximation of
$\max_{G_1,G_2,G_3 \in S} d(G_1,G_2,G_3)$ in expected time $O(n^2)$, compared to $O(n^3)$ for a non-$n$-metric. Considering the runtime of distance-based clustering using \emph{$n$th order interaction} \citep{7582510}, and
just like for $n=2$, $n$-metrics, $n>2$, also improve runtime, because the GTI lets us avoid dealing with all $n$-distances.

We now define two functions that satisfy the properties of {(pseudo)} $n$-metrics.

\subsection{A first attempt: Fermat distances}
\begin{definition}\label{def6}
Given a map $d:\Omega^2\mapsto {\mathbb{R}}$, the 
Fermat distance function induced by $d$, is the map $d_F:\Omega^n \mapsto \mathbb{R}$, defined by
\vspace{-0.5cm}
\begin{equation}\label{eq:fermat}
	d_F(A_{1:n}) = \min\limits_{B \in \Omega} \sum_{i=1}^{n} d(A_i,B).
\end{equation}
\end{definition}
\vspace{-2mm}
{In the context of multiple graph alignment, $d$ is an alignment score between two graphs, and $d_F$ aims to find a graph, represented by $B$, that aligns well with all the graphs, represented by $A_{1{:}n}$.~Thus,  $d_F(A_{1:n})$ can be interpreted as an alignment score computed as the sum of alignment scores between each $A_i$ and $B$. If we think of $A_{1:n}$ as a cluster of graphs, we can think of $B$ as its center.}

\begin{theorem}\label{thm1}
	If $d$ is a {pseudometric}, then the Fermat distance function induced by $d$ is a {pseudo $n$-metric}.
\end{theorem}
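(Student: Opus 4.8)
The plan is to verify the four defining properties of a pseudo $n$-metric for $d_F$, three of which follow almost immediately from the pseudometric axioms of $d$, leaving the generalized triangle inequality as the real content. Non-negativity \eqref{h1} is inherited termwise: each $d(A_i,B)\ge 0$, so the sum and its minimum over $B\in\Omega$ are non-negative. Symmetry \eqref{h3} holds because the objective $\sum_{i=1}^n d(A_i,B)$ is invariant under permuting the arguments $A_{1:n}$, and the feasible set for $B$ is unchanged. Self-identity \eqref{eq:self_identity} follows by taking $B=A$ in the minimization, which yields $\sum_{i=1}^n d(A,A)=0$; combined with non-negativity this forces $d_F(A,\dots,A)=0$.

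The hard part is the generalized triangle inequality \eqref{h4}, which involves the $n+1$ elements $A_{1:n+1}$. First I would introduce, for each $i\in[n]$, an optimal center $B_i$ for the modified sequence $A^i_{1:n,n+1}$, so that $D_i\triangleq d_F(A^i_{1:n,n+1})=\sum_{j\ne i} d(A_j,B_i)+d(A_{n+1},B_i)$. The key idea is that an optimal center of one modified sequence can be reused as a \emph{feasible} (generally suboptimal) center for the original sequence $A_{1:n}$. Concretely, fixing any index $k$ and plugging $B=B_k$ into the definition of $d_F(A_{1:n})$ gives $d_F(A_{1:n})\le d(A_k,B_k)+\sum_{j\ne k} d(A_j,B_k)=d(A_k,B_k)+D_k-d(A_{n+1},B_k)$. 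Applying the triangle inequality of $d$ with intermediate point $A_{n+1}$, namely $d(A_k,B_k)\le d(A_k,A_{n+1})+d(A_{n+1},B_k)$, the term $d(A_{n+1},B_k)$ cancels and I obtain the clean bound $d_F(A_{1:n})\le D_k+d(A_k,A_{n+1})$.

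It then remains to absorb the stray pairwise term $d(A_k,A_{n+1})$ into the remaining Fermat distances. The observation is that for every $i\ne k$ the element $A_k$ still appears in the sequence $A^i_{1:n,n+1}$, so $D_i\ge d(A_k,B_i)+d(A_{n+1},B_i)\ge d(A_k,A_{n+1})$ by non-negativity of the other summands followed by the triangle inequality and symmetry of $d$. Since $n\ge 2$, at least one index $i_0\ne k$ exists, and $d(A_k,A_{n+1})\le D_{i_0}\le\sum_{i\ne k}D_i$. Substituting back yields $d_F(A_{1:n})\le D_k+\sum_{i\ne k}D_i=\sum_{i=1}^n D_i$, which is exactly \eqref{h4}. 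I expect the main obstacle to be the triangle-inequality bookkeeping, specifically recognizing that one should reuse a center $B_k$ rather than attempt to synthesize a new center (which is impossible in a general pseudometric space lacking any averaging operation), and noticing that the leftover term $d(A_k,A_{n+1})$ is dominated by any single one of the other $D_i$.
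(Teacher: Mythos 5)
Your proof is correct, and for non-negativity, symmetry, and self-identity it matches the paper's argument essentially verbatim; the genuine divergence is in the generalized triangle inequality \eqref{h4}. The paper's proof (Lemma~\ref{lem6}, adapted from \cite{kiss2016generalization}) discards every term $d(A_{n+1},B_i)$ at the outset, harvests from the remaining double sum one pair $d(A_1,B_i)+d(A_{i+1},B_i)\geq d(A_1,A_{i+1})$ per index, and closes by using $A_1$ itself as a feasible center via $d_F(A_{1:n})\leq\sum_{i=1}^{n}d(A_1,A_i)$. You instead reuse the optimal center $B_k$ of one modified sequence as a feasible center for $A_{1:n}$, so that $d(A_{n+1},B_k)$ cancels exactly against $d(A_k,B_k)\leq d(A_k,A_{n+1})+d(A_{n+1},B_k)$, and you then absorb the single leftover term $d(A_k,A_{n+1})$ into any one other $D_{i_0}$. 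Both routes are valid and of comparable length, but yours proves the strictly stronger statement $d_F(A_{1:n})\leq D_k+D_{i_0}$ for every pair of distinct indices $k,i_0\in[n]$; averaging over such pairs gives $d_F(A_{1:n})\leq\tfrac{2}{n}\sum_{i=1}^{n}D_i$, which is directly relevant to the $(C,n)$-metric discussion in Section~\ref{sec:conc} and matches (for odd $n$, slightly improves on) the bound $C\leq 1/\lfloor n/2\rfloor$ quoted there. The only caveats, shared with the paper's own proof, are the implicit assumption that the minima defining the $D_i$ are attained (otherwise pass to $\varepsilon$-approximate centers) and the requirement $n\geq 2$ so that an index $i_0\neq k$ exists.
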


The proof of Theorem \ref{thm1} is a direct adaptation of the one in~\cite{kiss2016generalization}, and is included in Appendix \ref{sec:proof_thm1} for completeness. 

For example, the Fermat distance function induced by an SB-distance function with a distance matrix $D=0$~is
\vspace{-3mm}
\begin{equation*}
d_F(A_{1:n}) =\hspace{-0.3cm} \min_{\substack{B \in \Omega,\{P_i\}\in \mathcal{P}^n}}  \sum^n_{i=1}  {\vvvert A_iP_i - P_iB \vvvert} .
\end{equation*}
Despite its simplicity, the above optimization problem 
is not easy to solve in general, {even when} it is a continuous smooth optimization problem. For example, if~$\mathcal{P}$ is the set of doubly stochastic matrices, $B$ is the set of real matrices with entries in $[0,1]$, and $\vvvert \cdot \vvvert$ is the Frobenius norm, the problem is non-convex due to the product $PB$ that appears in the objective function. The potential complexity of computing $d_F$ motivates the following alternative definition.

\subsection{A better approach: $\mathcal{G}$-align distances}

\begin{definition}\label{def7}
Given a map $s:\Omega^2\times \mathcal{P} \mapsto \mathbb{R}$, the $\mathcal{G}$-align distance function induced by $s$, is the map $d_\mathcal{G}:\Omega^n \mapsto~{\mathbb{R}}$, defined by\vspace{-2mm}
\begin{equation}\label{eq:galign}
d_\mathcal{G}(A_{1:n})=\min\limits_{P \in S }\frac{1}{2}\sum\limits_{{{i,j \in [n]}}
} s(A_i,A_j,P_{i,j}),\vspace{-3mm}
\end{equation}
where\vspace{-1mm}
\begin{align}\label{eq:def_of_S}
&S\hspace{-0.cm}=\hspace{-0.cm} \{\{P_{i,j}\}_{i,j\in[n]}\hspace{-0.1cm}:\hspace{-0.cm} P_{i,j}\in \mathcal{P},\hspace{-0.cm}\forall i,j \in [n],~P_{i,k}P_{k,j} = P_{i,j},\nonumber\\ &\forall i,
j,k \in [n], P_{i,i}=I, \forall i\in[n] \}.
\end{align}
\end{definition}
\vspace{-3mm}
\begin{remark}
From the definition of $S$, it is implied that $I \in \mathcal{P}$ and that, if $P\in S$, then $P_{i,j}P_{j,i} =P_{i,i} =I\Leftrightarrow (P_{i,j})=(P_{j,i})^{-1} \forall i,j \in [n]$, hence $\{P_{i,j}\}$ are invertible.
\end{remark}
\begin{remark}
In \eqref{eq:def_of_S}, we refer to the property $P_{i,j}P_{j,k} = P_{i,k}, \forall i,j,k \in [n]$, as \emph{the alignment consistency} {of $P\in S$}.
\end{remark}
The following Lemma, provides an alternative definition for the $\mathcal{G}$-align distance function.
\begin{lemma}\label{th:d_G_compact_sum}
If $s$ is a $P$-score, then
\begin{equation}\label{eq:th:d_G_compact_sum}
d_\mathcal{G}(A_{1:n}) = \min_{P\in S} \sum_{i,j\in[n],~i<j} s(A_i,A_j,P_{i,j}).
\end{equation}
\end{lemma}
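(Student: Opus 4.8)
The plan is to show that the factor of $\tfrac12$ in \eqref{eq:galign} together with the symmetric double sum over all ordered pairs $i,j\in[n]$ collapses exactly to the single sum over unordered pairs $i<j$ in \eqref{eq:th:d_G_compact_sum}, by exploiting the symmetry property \eqref{s1} of the $P$-score and the inversion structure of $S$ noted in the first remark. Concretely, I would first split the ordered double sum $\sum_{i,j\in[n]}$ into the diagonal terms $i=j$, the terms with $i<j$, and the terms with $i>j$, and argue that the constraints on $P\in S$ force the diagonal contributions to vanish.

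First I would handle the diagonal. For any $P\in S$ we have $P_{i,i}=I$ by the definition \eqref{eq:def_of_S}, so each diagonal summand is $s(A_i,A_i,I)$, which equals $0$ by property \eqref{s3} of a $P$-score. Hence the diagonal contributes nothing to the objective, and the minimization in \eqref{eq:galign} is unaffected by dropping those terms. Next I would pair up the off-diagonal terms: for each unordered pair with $i<j$, the ordered double sum contains both $s(A_i,A_j,P_{i,j})$ and $s(A_j,A_i,P_{j,i})$. Using the inversion relation $P_{j,i}=(P_{i,j})^{-1}$ from the first remark, together with the symmetry property \eqref{s1}, which states $s(A,B,P)=s(B,A,P^{-1})$, I get
\begin{equation*}
s(A_j,A_i,P_{j,i}) = s(A_j,A_i,(P_{i,j})^{-1}) = s(A_i,A_j,P_{i,j}).
\end{equation*}
So the two ordered copies of each unordered pair are equal, and their sum is exactly twice $s(A_i,A_j,P_{i,j})$.

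Putting these together, the full ordered sum equals $2\sum_{i<j} s(A_i,A_j,P_{i,j})$, so the $\tfrac12$ prefactor in \eqref{eq:galign} cancels the doubling and yields $\sum_{i<j} s(A_i,A_j,P_{i,j})$ pointwise for every feasible $P\in S$. Since the objectives of the two minimization problems coincide on the identical feasible set $S$, their minima are equal, which is precisely \eqref{eq:th:d_G_compact_sum}. I do not anticipate a serious obstacle here; the only point requiring care is to confirm that the inversion identity $P_{j,i}=(P_{i,j})^{-1}$ genuinely holds for all feasible $P$ (so that \eqref{s1} applies) — this is exactly what the first remark records as a consequence of the consistency constraints and $P_{i,i}=I$, together with the closure of $\mathcal{P}$ under inversion assumed in Definition \ref{def2}. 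Everything else is the routine bookkeeping of re-indexing a symmetric sum.
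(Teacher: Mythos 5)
Your proof is correct and follows essentially the same route as the paper's: split the ordered double sum into diagonal and off-diagonal parts, kill the diagonal via $P_{i,i}=I$ and property \eqref{s3}, and fold the two ordered copies of each unordered pair together via $P_{j,i}=(P_{i,j})^{-1}$ and property \eqref{s1}, so the $\tfrac12$ cancels. No gaps.
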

\vspace{-5mm}
\begin{proof}
\begin{align}
&~~\sum_{i,j\in[n]} s(A_i,A_j,P_{i,j}) =  
\sum_{i\in [n]} s(A_i,A_i,P_{i,i})  + \nonumber\\
&
\sum_{i,j\in[n]:~i<j} (s(A_i,A_j,P_{i,j}) + s(A_j,A_i,P_{j,i})).
\end{align}
If $P\in S$, then $P_{i,i}=I$ and $P_{j,i} = (P_{i,j})^{-1}$. Thus, since $s$ is a $P$-score, $s(A_i,A_i,P_{i,i}) = s(A_i,A_i,I) = 0$, by property \eqref{s3}, and $s(A_j,A_i,P_{j,i}) = s(A_i,A_j,P_{i,j})$, by property \eqref{s1}. Therefore, 
\begin{equation*}
\sum_{i,j\in[n]} s(A_i,A_j,P_{i,j}) = 2\sum_{i,j\in[n],~i<j} s(A_i,A_j,P_{i,j}),
\end{equation*}
and the proof follows.
\end{proof}

\vspace{-2mm}
Note that, if $s(A,B,P) = {\vvvert AP-PB \vvvert}$, for some element-wise matrix norm, $n=2$, and $\mathcal{P}$ is the set of permutations on $m$ elements, then according to Lemma \ref{th:d_G_compact_sum}, $d_\mathcal{G}(A,B) = d_{SB}(A,B)$, for $D=0$. In general, we can define a generalized SB-distance function induced by a matrix $D$, a  set $\mathcal{P}\subseteq \mathbb{R}^{m \times m}$ and a map $s:\Omega^2\times\mathcal{P}\mapsto \mathbb{R}$ as 
\begin{equation}\label{SB2}
d_{SB}(A,B)= \min_{P\in \mathcal{P}} s(A,B,P) + \text{tr}(P^\top D),
\end{equation}
and investigate the conditions on $s$, $\mathcal{P}$ and $D$, under which \eqref{SB2} represents a {(pseudo)} metric.

{The following lemma leads to an equivalent definition for the $\mathcal{G}$-align distance function, which, among other things, reduces the optimization problem in \eqref{eq:galign}, to finding $n$ different matrices rather that $n^2-n$ matrices that need to satisfy the alignment consistency.} 

\begin{lemma}\label{lem1}
If $S'=\{\{P_{i,j}\}_{i,j\in[n]}: P_{i,j}\in \mathcal{P} \text{ and } P_{i,j} = Q_{i}(Q_{j})^{-1},\forall i,j \in [n], 
\text{ for some matrices } \{Q_{i}\} \subseteq \mathcal{P}\}$, then $S' = S$.
\end{lemma}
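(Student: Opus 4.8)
The plan is to prove the set equality $S' = S$ by double inclusion, treating the two containments separately since they require different arguments. Throughout, I would use the characterization of $S$ from Definition~\ref{def7}, namely that a family $\{P_{i,j}\}$ lies in $S$ exactly when each $P_{i,j} \in \mathcal{P}$, the consistency $P_{i,k}P_{k,j} = P_{i,j}$ holds for all $i,j,k$, and $P_{i,i} = I$ for all $i$.

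First I would show $S' \subseteq S$. Take any family in $S'$, so there exist $\{Q_i\} \subseteq \mathcal{P}$ with $P_{i,j} = Q_i (Q_j)^{-1}$. I must verify the three defining properties of $S$. Membership $P_{i,j} \in \mathcal{P}$ is given by hypothesis in the definition of $S'$. The identity property is immediate: $P_{i,i} = Q_i (Q_i)^{-1} = I$. For consistency, I would compute $P_{i,k} P_{k,j} = Q_i (Q_k)^{-1} Q_k (Q_j)^{-1} = Q_i (Q_j)^{-1} = P_{i,j}$, where the middle factors telescope. This direction is essentially a routine verification.

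The reverse inclusion $S \subseteq S'$ is where the real content lies, and I expect it to be the main obstacle. Given a family $\{P_{i,j}\} \in S$, I need to exhibit matrices $\{Q_i\} \subseteq \mathcal{P}$ realizing $P_{i,j} = Q_i (Q_j)^{-1}$. The natural construction is to pick a fixed reference index, say $1$, and set $Q_i \triangleq P_{i,1}$. I would then check that $Q_i (Q_j)^{-1} = P_{i,1} (P_{j,1})^{-1}$ equals $P_{i,j}$. Using the Remark after Definition~\ref{def7}, consistency gives $(P_{j,1})^{-1} = P_{1,j}$, so this product becomes $P_{i,1} P_{1,j}$, which by consistency (with $k=1$) equals $P_{i,j}$, as required. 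I also need $Q_i = P_{i,1} \in \mathcal{P}$, which holds because every $P_{i,1} \in \mathcal{P}$ by the definition of $S$.

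The one subtle point I would be careful about is that $S'$ as written requires $P_{i,j} \in \mathcal{P}$ \emph{and} the factorization, so for $S \subseteq S'$ I must confirm both that the chosen $Q_i$ lie in $\mathcal{P}$ and that the resulting $P_{i,j}$ already known to be in $\mathcal{P}$ is consistent with the factored form; both follow from the telescoping identity and the closure assumptions already baked into $S$. A second point worth flagging is that the inverses $(Q_j)^{-1}$ are well-defined precisely because $\mathcal{P}$ is closed under inversion (part of the $P$-score setup in Definition~\ref{def2} and used in the preceding Remark), so no invertibility issue arises. Having established both inclusions, I would conclude $S' = S$.
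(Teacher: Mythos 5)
Your proof is correct and follows essentially the same route as the paper: a routine telescoping verification for $S' \subseteq S$, and for $S \subseteq S'$ the construction $Q_i = P_{i,r}$ for a fixed reference index $r$ (the paper takes $r=n$ where you take $r=1$, an immaterial difference). Nothing further is needed.
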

\vspace{-4mm}
\begin{proof}
We first prove that $S \subseteq S'$. 
Let $P \in S$. Define $Q_i = P_{i,n} \in \mathcal{P}$ for all $i \in [n]$. If $i,j \in [n-1]$, then, by definition, $P_{i,j} = P_{i,n}P_{n,j} = P_{i,n}(P_{j,n})^{-1} = Q_i (Q_j)^{-1}$. This proves that $P \in S'$.
\vspace{-1mm}

We now prove that $S' \subseteq S$. 
Let $P\in S'$. For any $i,j,k\in[n]$, we have
$P_{i,k}P_{k,j} = Q_{i} (Q_{k})^{-1} Q_{k} (Q_{j})^{-1} = 
Q_{i} (Q_{j})^{-1} = P_{i,j}$. It also follows that $P_{i,j} = Q_i(Q_j)^{-1} = (Q_j(Q_i)^{-1})^{-1} = (P_{j,i})^{-1}$, and $P_{i,i} = Q_i (Q_i)^{-1} =I$. Therefore, $P \in S$.
\end{proof}
\vspace{-2mm}
We complete this section with the following theorem, whose detailed proof is provided in Appendix~\ref{sec:proof_thm2}.

\begin{theorem}\label{thm2}
If $s$ is a $P$-score, then the $\mathcal{G}$-align function induced by $s$ is a pseudo $n$-metric.
\end{theorem}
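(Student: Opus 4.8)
The plan is to verify the four defining properties of a pseudo $n$-metric in turn, disposing of non-negativity, self-identity, and symmetry quickly and concentrating the real work on the generalized triangle inequality. For non-negativity \eqref{h1}, every summand in \eqref{eq:galign} is non-negative by \eqref{ds1}, so the objective is non-negative for each feasible $P$, and therefore so is its minimum. For self-identity \eqref{eq:self_identity}, when $A_1=\dots=A_n$ the choice $P_{i,j}=I$ for all $i,j$ lies in $S$ (it satisfies $P_{i,k}P_{k,j}=I=P_{i,j}$ and $P_{i,i}=I$), and by \eqref{s3} every term equals $s(A,A,I)=0$; combined with non-negativity this gives $d_\mathcal{G}(A,\dots,A)=0$. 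For symmetry \eqref{h3}, given a permutation $\sigma$ and any feasible $P\in S$ for $A_{1:n}$, I would set $P'_{i,j}=P_{\sigma(i),\sigma(j)}$, check that $P'\in S$ (both consistency and $P'_{i,i}=I$ are inherited under relabeling), and observe that reindexing the double sum shows the two objectives coincide; taking minima yields $d_\mathcal{G}(A_{1:n})=d_\mathcal{G}(A_{\sigma(1:n)})$.

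The core is the generalized triangle inequality \eqref{h4}. I would work with the compact objective of \lemref{th:d_G_compact_sum} and the parametrization $P_{a,b}=Q_aQ_b^{-1}$ from \lemref{lem1}, so that producing a feasible $P\in S$ reduces to choosing a single matrix $Q_a\in\mathcal P$ per index. The idea is to use the extra graph $A_{n+1}$ as a common hub. For each $i\in[n]$, fix an optimal solution $\{P^{(i)}_{k,l}\}$ of the $i$th subproblem $d_\mathcal{G}(A^i_{1:n,n+1})$, in which position $i$ holds $A_{n+1}$; then $P^{(i)}_{a,i}$ aligns $A_a$ with $A_{n+1}$ for each $a\neq i$, and discarding all remaining (non-negative) terms gives
\[
d_\mathcal{G}(A^i_{1:n,n+1}) \ \geq\ \sum_{a\in[n]\setminus i} s(A_a,A_{n+1},P^{(i)}_{a,i}).
\]
For each $a$ I would then select the subproblem that aligns $A_a$ to $A_{n+1}$ most cheaply, $i_a\in\argmin_{i\neq a} s(A_a,A_{n+1},P^{(i)}_{a,i})$, set $Q_a:=P^{(i_a)}_{a,i_a}\in\mathcal P$, and take the feasible full-problem matching $P_{a,b}:=Q_aQ_b^{-1}$.

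With this choice, the $P$-score triangle inequality \eqref{s2} together with \eqref{s1} bounds each routed term through the hub, $s(A_a,A_b,Q_aQ_b^{-1})\leq s(A_a,A_{n+1},Q_a)+s(A_b,A_{n+1},Q_b)$. Summing over $a<b$, where each spoke $s(A_a,A_{n+1},Q_a)$ is counted exactly $n-1$ times, and using feasibility of the constructed $P$, gives
\[
d_\mathcal{G}(A_{1:n}) \ \leq\ (n-1)\sum_{a=1}^n s(A_a,A_{n+1},Q_a) \ =\ (n-1)\sum_{a=1}^n \min_{i\neq a} s(A_a,A_{n+1},P^{(i)}_{a,i}).
\]
The final step invokes the elementary bound $(n-1)\min_{i\neq a}(\cdot)\leq\sum_{i\neq a}(\cdot)$ (the minimum of $n-1$ numbers times $n-1$ is at most their sum) to turn the hub spokes back into subproblem objectives:
\[
(n-1)\sum_a \min_{i\neq a}s(A_a,A_{n+1},P^{(i)}_{a,i}) \leq \sum_a\sum_{i\neq a}s(A_a,A_{n+1},P^{(i)}_{a,i}) = \sum_i\sum_{a\neq i}s(A_a,A_{n+1},P^{(i)}_{a,i})\leq \sum_i d_\mathcal{G}(A^i_{1:n,n+1}),
\]
which is exactly \eqref{h4}.

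I expect the main obstacle to be precisely this last reconciliation: unlike the star-shaped Fermat objective, the complete-graph objective of $d_\mathcal{G}$ forces each hub spoke to be paid for $n-1$ times, while only $n$ subproblems are available. The resolution is to let the hub matching for each index come from its own cheapest subproblem and to absorb the factor $n-1$ through the min-versus-average inequality. A secondary point to keep in mind is feasibility: $Q_a\in\mathcal P$ is immediate, but $P_{a,b}=Q_aQ_b^{-1}\in\mathcal P$ (required for $P\in S'=S$ by \lemref{lem1}) needs $\mathcal P$ to be closed under products as well as under inversion, which holds in the primary setting where $\mathcal P$ is the group of permutation matrices.
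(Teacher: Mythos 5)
Your proof is correct, and while the three easy properties (non-negativity, self-identity, symmetry) are handled exactly as in the paper, your argument for the generalized triangle inequality \eqref{h4} takes a genuinely different and substantially shorter route. The paper also parametrizes the candidate matching as $\Gamma_i\Gamma_j^{-1}$ via \lemref{lem1} and routes each pair through the hub $A_{n+1}$, but it commits to one fixed, cyclic assignment of spokes to subproblems ($\Gamma_i=P^{i-1\ast}_{i,n+1}$ in \eqref{eq:G_tri_7}--\eqref{eq:i1}), then splits some spokes further using \eqref{s2} so that every resulting term lands in a distinct slot of $\sum_\ell d_\mathcal{G}(A^\ell_{1:n,n+1})$; verifying this injection is the bulk of the paper's proof (thirteen nontrivial non-coincidence checks plus a membership check). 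You instead pay each spoke $s(A_a,A_{n+1},\cdot)$ its full multiplicity $n-1$ up front, choose for each $a$ the cheapest of the $n-1$ copies available across the subproblems $i\neq a$, and absorb the factor $n-1$ via $(n-1)\min_{i\neq a}(\cdot)\leq\sum_{i\neq a}(\cdot)$, discarding all interior terms $s(A_a,A_b,P^{(i)}_{a,b})$ of each subproblem. The counting closes exactly ($n(n-1)$ spoke-uses against $n(n-1)$ available spokes), so no bookkeeping of coincidences is needed; the trade-off is that your bound is term-by-term looser (it throws away the interior terms the paper partially exploits), which could matter if one wanted the sharpest constant $C$ in the $(C,n)$-metric question raised in the conclusion, but for \eqref{h4} itself it is all that is required. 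The feasibility caveat you flag --- that $Q_aQ_b^{-1}\in\mathcal{P}$ needs $\mathcal{P}$ closed under products as well as inverses, beyond what Definition 3 states --- is real, but it is equally implicit in the paper's own step \eqref{eq:G_tri_6}, so you are not introducing a new gap; it would be worth stating this closure hypothesis explicitly in either version.
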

%
%

\textcolor{black}{In Appendix \ref{sec:ortho_matrices},  we discuss the special case of $\mathcal{P}$ being the set of orthogonal matrices. In this case, we can simplify both eq. \eqref{eq:fermat}, and eq. \eqref{eq:galign}, and compute them efficiently.}

\section{$n$-metrics on quotient spaces}\label{sec:quotient}
The theorems in Section \ref{sec:n-metrics} are stated for pseudometrics. However, it is easy to obtain an $n$-metric from a pseudo $n$-metric for both $d_F$ and $d_\mathcal{G}$ using {quotient spaces}. In these spaces, \eqref{h2-n} holds {almost trivially} (with $A_i$ replaced by its equivalent class $[A_i]$), and the important question is  whether the {equivalent classes of graphs} are meaningful and useful. The proofs for the theorems in this section are Appendices  \ref{app:proofs_for_quotient_spaces_1} and \ref{app:proofs_for_quotient_spaces_2}.

\begin{theorem}\label{th:quotient_1}
Let $d$ be a pseudometric for two graphs, $d_F$ be the
Fermat distance function for $n$ graphs induced by $d$, and
$\Omega' = \Omega \backslash \sim_d$. 
Let $d'_F:\Omega'^n \mapsto \mathbb{R}$ be such that
\begin{equation}\label{eq:dF_equiv_class}
d'_F([A]_{1:n}) = d_F(A_{1:n}).
\end{equation}
Then, $d'_F$ is an $n$-metric.
\end{theorem}

\begin{theorem}\label{th:quotient_2}
{Let $s$ be a $P$-score.
Let ${d_{\mathcal{G}_2}:\Omega^2 \mapsto \mathbb{R}}$ be the $\mathcal{G}$-align distance {function} for two graphs
induced by $s$, and  $d_\mathcal{G}:\Omega^n \mapsto \mathbb{R}$ be the $\mathcal{G}$-align distance {function} for $n$ graphs
induced by $s$.}  Let $\Omega' = \Omega \backslash \sim_{d_{\mathcal{G}_2}}$,  
and $d'_{\mathcal{G}}:\Omega'^n \mapsto \mathbb{R}$ be such that
\begin{equation}\label{eq:dG_equiv_class}
d'_{\mathcal{G}}([A]_{1:n}) = d_{\mathcal{G}}(A_{1:n}).
\end{equation}
Then, $d'_{\mathcal{G}}$ is an $n$-metric.
\end{theorem}
\vspace{-3mm}

\begin{figure*}[t!]
\centering
\includegraphics[height=3.3cm]{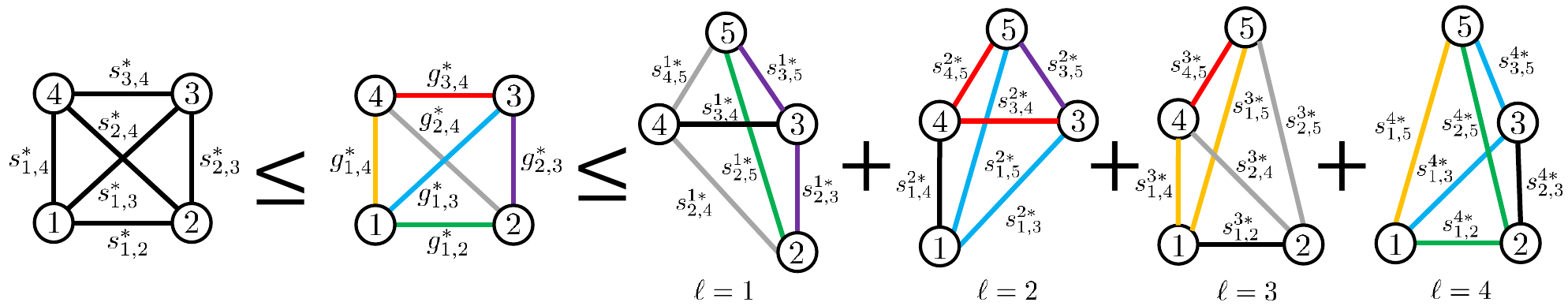}
	\caption{Generalized triangle equality of $d_\mathcal{G}$ for $n=4$ graphs.}
	\label{fig1}
\end{figure*}

\section{The generalized triangle inequality for~$d_\mathcal{G}$: an illustrative example}\label{sec:n=4}
While it is straightforward to show that $d_\mathcal{G}$ satisfies the properties of non-negativity, symmetry and {self-identity}, the proof for the generalized triangle inequality is more involved.  
To give the reader a flavor of the proof, we now prove that the $\mathcal{G}$-align function satisfies the generalized triangle inequality when $n=4$.

We consider a set of $n=4$ graphs, $\mathcal{G} = \{G_1,G_2,G_3, G_4\}$, and a reference graph $G_5$, represented by matrices, $A_1, A_2, A_3, A_4 \in \Omega$ and $A_5 \in \Omega$, respectively. 
We will show that \vspace{-2mm}
\begin{equation}\label{eq:4-1}
d_\mathcal{G}(A_{1:4}) \leq \sum_{\ell=1}^{4} d_\mathcal{G}(A^\ell_{1:4,5}).
\end{equation}
Let $P^\ast = \{P_{i,j}^{\ast}\} \in S$ be an optimal value for $P$ in the optimization problem corresponding to the left-hand-side (l.h.s) of \eqref{eq:4-1}. 
We define $s^{\ast}_{i,j} = s(A_i,A_j,P^{\ast}_{i,j})$ for all $i,j \in~[4]$. 
We also define $s^{\ell*}_{i,j} = s(A_i,A_j,P^{\ell*}_{i,j})$ for all $i,j \in [5], ~\ell\in[4]\backslash\{i,j\}$, in which $P^{\ell\ast}=\{P_{i,j}^{\ell\ast}\} \in S$ is an optimal value for 
$P$ in the optimization problem associated to $d_\mathcal{G}(A^\ell_{1:4,5})$ on the r.h.s of \eqref{eq:4-1}. {Note that, according to \eqref{s1}, and the fact that $P_{i,j}^{\ast} = {(P_{j,i}^{\ast})}^{-1}$ (since $P^{\ast} \in~S$), we have}\vspace{-1mm}
\begin{equation}
s^{\ast}_{i,j} = s^{\ast}_{j,i}, \text{ and } s^{\ell\ast}_{i,j} = s^{\ell\ast}_{j,i}. \label{eq:p2}
\end{equation}
Moreover, according to \eqref{s2}, we have
\begin{equation}
s(A_i,A_j,P_{i,k}^{\ell\ast}P_{k,j}^{\ell'\ast}) \leq s_{i,k}^{\ell\ast} +  s_{k,j}^{\ell'\ast},\label{eq:p3}
\end{equation}
and, in the particular case when $\ell = \ell'$, we have
\begin{equation}
s^{\ell\ast}_{i,j} \leq s_{i,k}^{\ell\ast} +  s_{k,j}^{\ell\ast}.\label{eq:p4}
\end{equation}
\vspace{-2mm}
From the definition of $d_\mathcal{G}$ in Lemma~\ref{th:d_G_compact_sum}, we have
\begin{equation}\label{eq:4-2}
\sum\limits_{{{i,j \in [4]},~i<j}
} s^{\ast}_{i,j} \leq
\sum\limits_{\substack{{i,j \in [4]},~i<j}
} s(A_i,A_j,\Gamma_{i,j}),
\end{equation}
where $\Gamma_{i,j} = \Gamma_i \Gamma_j^{-1}$, and $\{\Gamma_i\}$ are any set of invertible matrices in $\mathcal{P}$. Note that from Lemma~\ref{lem1},~$\{\Gamma_{i,j}\} \in S$.
Consider the following choices for $\Gamma_i$'s :
\begin{align}\label{eq:4-3}
\Gamma_1 = P_{1,5}^{4\ast};~
\Gamma_2 = P_{2,5}^{1\ast};~
\Gamma_3 = P_{3,5}^{2\ast};~
\Gamma_4 = P_{4,5}^{3\ast}.
\end{align}
We define $g_{i,j}^{\ast} = s(A_i,A_j,\Gamma_i \Gamma_j^{-1})$, in which $\Gamma_i$'s are chosen according to \eqref{eq:4-3}. We can then rewrite \eqref{eq:4-2} as
\begin{equation}\label{eq:4-3n}
\sum\limits_{{{i,j \in [4]},~i<j}
} s^{\ast}_{i,j} \leq
\sum\limits_{\substack{{i,j \in [4]},~i<j}
} g_{i,j}^{\ast}.
\end{equation}
We use Fig.~\ref{fig1} to bookkeep all the terms involved in proving \eqref{eq:4-1}. In particular, the first inequality in Fig.~\ref{fig1} provides a pictorial representation of \eqref{eq:4-3n}. In this figure, each circle represents a graph in $\mathcal{G}$, and a line between $G_i$ and $G_j$ represents the $P$-score between $A_i$ and $A_j$. In the diagram on the left, each $P$-score corresponds to the optimal pairwise matching between $G_i$ and $G_j$ associated to $d_\mathcal{G}(A_{1:4})$ in \eqref{eq:4-1}, whereas in the diagram in the middle, each $P$-score corresponds to the suboptimal matching between $G_i$ and $G_j$, where the pairwise matching matrices are chosen according to \eqref{eq:4-3}.
Using \eqref{eq:p3}, followed by \eqref{eq:p2} we get
\begin{align*}
\sum\limits_{{{i,j \in [4]},~i<j}
} \hspace{-5mm} g^{\ast}_{i,j} \leq~ 
\hspace{-2mm}&\textcolor{Green}{(s^{4\ast}_{1,5}+s^{1\ast}_{2,5})}\hspace{-1mm}+\hspace{-1mm}\textcolor{Cerulean}{(s^{4\ast}_{1,5}+s^{2\ast}_{3,5})}\hspace{-1mm}+\hspace{-1mm}\textcolor{Dandelion}{(s^{4\ast}_{1,5}+s^{3\ast}_{4,5})}+\nonumber\\[-3\jot]
\hspace{-2mm}&\textcolor{Violet}{(s^{1\ast}_{2,5}+s^{2\ast}_{3,5})}\hspace{-1mm}+\hspace{-1mm}\textcolor{Gray}{(s^{1\ast}_{2,5}+s^{3\ast}_{4,5})}\hspace{-1mm}+\hspace{-1mm}\textcolor{red}{(s^{2\ast}_{3,5}+s^{3\ast}_{4,5})}.
\end{align*}
The above inequality is also depicted in Fig.~\ref{fig1}, where each diagram on the r.h.s of the second inequality represents $d_\mathcal{G}(A^\ell_{1:5})$ in \eqref{eq:4-1} for a different $\ell \in [4]$. 
Applying \eqref{eq:p4} to the r.h.s. of the above inequality,
one can see that each one of the terms in parenthesis, distinguished with a different color, is upper bounded by the sum of the terms with the same color in 
the diagram in 
the r.h.s of 
the second inequality in Fig.~\ref{fig1}. 
This completes the proof.
\section{{Moving towards tractability}}\label{sec:tractabilty}
The following lemmas are the building blocks towards a relaxation of $d_{\mathcal{G}}$ that is also easy to compute for choices of $\mathcal{P}$ other than orthonormal matrices.  In this section, $\vvvert\cdot\vvvert_*$ denotes the nuclear norm.

\begin{lemma}\label{th:rank_relaxation}
Given $\{P_{i,j}\}_{i,j\in[n]}$ such that $P_{i,j}\in \mathbb{R}^{m\times m}$ for all $i,j\in[n]$, let ${\bf P} \in \mathbb{R}^{nm\times nm}$ have $n^2$ blocks, such that the $(i,j)$th block is $P_{i,j}$. 
Let 
\begin{align}\label{eq:low_rank_S}
&S'' = \{\{P_{i,j}\}_{i,j\in[n]}: \text{rank}({\bf P})=m, P_{i,j}\in\mathcal{P}, \forall i,j\in[{n}],\nonumber \\
&P_{i,i}=I, \forall i\in [n]\}.
\end{align}
We have that $S'' = S$, where $S$ is as defined in \eqref{eq:def_of_S}. 
\end{lemma}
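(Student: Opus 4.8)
The plan is to prove the two inclusions $S \subseteq S''$ and $S'' \subseteq S$ separately, the second being the substantive one. Throughout I would keep in mind that the only structural facts available are: membership $P_{i,j}\in\mathcal{P}$, the identity diagonal $P_{i,i}=I$, the rank constraint $\text{rank}({\bf P})=m$, and—crucially—the factorization supplied by \lemref{lem1}.

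For $S \subseteq S''$, I would start from \lemref{lem1}, which lets me write any $P \in S$ as $P_{i,j} = Q_i Q_j^{-1}$ for invertible matrices $\{Q_i\} \subseteq \mathcal{P}$. Stacking the $Q_i$ into a tall block matrix $\mathbf{U} \in \mathbb{R}^{nm \times m}$ (with $i$th block $Q_i$) and the $Q_j^{-1}$ into a wide block matrix $\mathbf{V} \in \mathbb{R}^{m \times nm}$ (with $j$th block $Q_j^{-1}$), one checks immediately that ${\bf P} = \mathbf{U}\mathbf{V}$, since the $(i,j)$th block of $\mathbf{U}\mathbf{V}$ is $Q_i Q_j^{-1} = P_{i,j}$. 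This exhibits ${\bf P}$ as a product through an intermediate dimension $m$, so $\text{rank}({\bf P}) \leq m$; and since each $Q_i$ is invertible, $\mathbf{U}$ has full column rank $m$ (and $\mathbf{V}$ full row rank $m$), forcing $\text{rank}({\bf P}) = m$. Together with $P_{i,i} = Q_i Q_i^{-1} = I$ and $P_{i,j} \in \mathcal{P}$ by hypothesis, this gives $P \in S''$.

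The harder inclusion is $S'' \subseteq S$, where I must recover the alignment consistency $P_{i,k}P_{k,j} = P_{i,j}$ purely from the rank constraint and the identity diagonal. The key observation is that the $i$th block row $[P_{i,1}, \ldots, P_{i,n}]$ contains the block $P_{i,i} = I$, hence has an $m \times m$ identity submatrix and therefore rank exactly $m$ (it has only $m$ rows). Since $\text{rank}({\bf P}) = m$ as well, the row space of ${\bf P}$ must coincide with the row space of each individual block row. Consequently, for any two indices $i,k$, the $k$th block row lies in the row space of the $i$th block row, i.e.\ there is a matrix $M \in \mathbb{R}^{m \times m}$ with $P_{k,j} = M P_{i,j}$ for all $j$. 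Evaluating at $j = i$ and using $P_{i,i}=I$ identifies $M = P_{k,i}$, so that $P_{k,j} = P_{k,i} P_{i,j}$ for all $i,j,k$, which is exactly the alignment consistency after relabeling. Combined with the given $P_{i,i} = I$ and $P_{i,j}\in\mathcal{P}$, this places $P \in S$.

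The main obstacle is the row-space step in the second inclusion: the crux is recognizing that the identity diagonal block pins the rank of every block row to exactly $m$, so that the global rank-$m$ constraint forces all block rows to share a single common row space, from which consistency follows by simply reading off the multiplier at the diagonal index. By contrast, the first inclusion is essentially a bookkeeping verification once \lemref{lem1} supplies the low-rank factorization ${\bf P} = \mathbf{U}\mathbf{V}$.
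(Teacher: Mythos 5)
Your proof is correct, but the harder inclusion $S'' \subseteq S$ is argued by a genuinely different route than the paper's. The paper uses the rank condition to produce a factorization ${\bf P} = AB^\top$ with $A,B \in \mathbb{R}^{mn\times m}$ (via the SVD), reads off $P_{i,j} = A_i B_j^\top$, uses the diagonal blocks $A_iB_i^\top = I$ to conclude $B_i^\top = A_i^{-1}$ and hence $P_{i,j} = A_iA_j^{-1}$, and then invokes \lemref{lem1} to get consistency. You instead work entirely with row spaces: the identity block pins each block row $[P_{i,1},\dots,P_{i,n}]$ at rank exactly $m$, the global rank-$m$ constraint forces all block rows to share one common row space, so $R_k = M R_i$ for a unique $M$, and evaluating at the diagonal column identifies $M = P_{k,i}$, giving $P_{k,j}=P_{k,i}P_{i,j}$ directly. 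Your argument is more elementary (no SVD) and more self-contained; in particular it sidesteps a small looseness in the paper, which applies \lemref{lem1} to factors $A_i$ that are not known to lie in $\mathcal{P}$ (harmless, since that direction of \lemref{lem1} only uses the algebraic cancellation $Q_iQ_k^{-1}Q_kQ_j^{-1}=Q_iQ_j^{-1}$, but your route avoids the issue entirely). The easy inclusion $S \subseteq S''$ is essentially identical in both: \lemref{lem1} supplies $P_{i,j}=Q_iQ_j^{-1}$, the stacked factorization bounds the rank by $m$ from above, and invertibility of the $Q_i$ (or simply the presence of the identity diagonal blocks) bounds it by $m$ from below.
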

\vspace{-2mm}
\begin{proof}
Let ${\bf P} \in \mathbb{R}^{nm \times nm}$, with blocks $\{P_{i,j}\}_{i,j\in[n]} \in S''$.
Since $\text{rank}({\bf P})=m$, from the singular value decomposition of ${\bf P}$,  we can write ${\bf P} = A B^\top$ where $A,B\in\mathbb{R}^{mn\times m}$. Let $A = [A_1;\dots;A_n]$, where $A_i\in \mathbb{R}^{m\times m}$ and, similarly, let $B = [B_1;\dots;B_n]$, where $B_i\in \mathbb{R}^{m\times m}$. It follows that $P_{i,j} = A_i B^\top_j$. Since $P_{i,i} = I$, we have $A_i B^\top_i = I$, which implies that 
$P_{i,j} = A_i A^{-1}_j$. By Lemma \ref{lem1}, this in turn implies that 
$\{P_{i,j}\}_{i,j\in[n]}$ satisfy the 
alignment consistency property. Therefore, $\{P_{i,j}\}_{i,j\in[n]}\in S$, and thus $S'' \subseteq S$.

Let $P = \{P_{i,j}\}_{i,j\in[n]}\in S$.
 By Lemma \ref{lem1}, $P_{i,j} = Q_i Q^{-1}_j$ for some invertible matrices $\{Q_i\}_{i\in[n]}$. Let $A,B\in\mathbb{R}^{mn\times m}$, with $A = [Q_1;\dots;Q_n]$ and $B = [(Q^{-1}_1)^\top,\dots,(Q^{-1}_n)^\top]$.
  Let ${\bf P}$ denote the $mn\times mn$ block matrix with $P_{i,j}$ as the $(i,j)$th block. We have ${\bf P} = AB^\top$. Thus $m \geq \text{rank}({\bf P}) \geq \text{rank}(A) \geq \text{rank}(Q_1)=m$,  which implies that $\{P_{i,j}\}_{i,j\in[n]}\in S''$, and therefore $S \subseteq S''$.
\end{proof}

\begin{lemma}\label{lm:pos_def_for_perms}\emph{[\hspace{-0.4mm}\cite{huang2013consistent}, Proposition 1]}
Let $\mathcal{P}$ be the set of $m\times m$ permutation matrices. Given $\{P_{i,j}\}_{i,j\in[n]}$ such that $P_{i,j}\in\mathcal{P}$ for all $i,j\in[n]$, let ${\bf P} \in \mathbb{R}^{nm\times nm}$ have $n^2$ blocks, such that the $(i,j)$th block is $P_{i,j}$. 
Let 
\begin{align}\label{eq:low_rank_S}
&S''' = \{\{P_{i,j}\}_{i,j\in[n]}: P_{i,j}\in\mathcal{P}, \forall i,j\in[{n}],{\bf P}\succeq 0, \nonumber \\
&P_{i,i}=I, \forall i\in [n]\}.
\end{align}
We have that $S''' = S$, where $S$ is as defined in \eqref{eq:def_of_S}.  
\end{lemma}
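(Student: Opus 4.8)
The plan is to prove the set equality $S''' = S$ by establishing the two inclusions separately, exploiting throughout that a permutation matrix $P$ is orthogonal, so that $P^{-1} = P^\top$.

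For the inclusion $S \subseteq S'''$, I would start from a consistent family $\{P_{i,j}\} \in S$ and invoke Lemma~\ref{lem1} to write $P_{i,j} = Q_i Q_j^{-1} = Q_i Q_j^\top$ for some permutation matrices $\{Q_i\}$. Stacking these into $Q = [Q_1;\dots;Q_n] \in \mathbb{R}^{nm\times m}$, the block matrix factors as ${\bf P} = Q Q^\top$, which is manifestly positive semidefinite; moreover the diagonal blocks are $P_{i,i} = Q_i Q_i^\top = I$. Hence $\{P_{i,j}\} \in S'''$.

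For the reverse inclusion $S''' \subseteq S$, which is the substantive direction, I would use the Gram factorization of the positive semidefinite matrix. Write ${\bf P} = G^\top G$ and partition the columns of $G$ into $n$ blocks of width $m$, so that $G = [G_1 \mid \dots \mid G_n]$ with $G_i^\top G_j = P_{i,j}$. The identity diagonal blocks give $G_i^\top G_i = P_{i,i} = I$, so the columns of each $G_i$ are orthonormal and $G_i G_i^\top$ is the orthogonal projector onto the column space of $G_i$. The key step is a norm comparison: $\text{tr}(G_j^\top G_j) = m$, while the projection of $G_j$ onto the column space of $G_i$ equals $G_i G_i^\top G_j = G_i P_{i,j}$, whose squared Frobenius norm $\text{tr}(P_{i,j}^\top G_i^\top G_i P_{i,j}) = \text{tr}(P_{i,j}^\top P_{i,j}) = m$ because $P_{i,j}$ is orthogonal. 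Since a projection never increases norm and here it preserves it, each column of $G_j$ must already lie in the column space of $G_i$, forcing $G_j = G_i P_{i,j}$ for every pair $i,j$.

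With this relation in hand, alignment consistency is immediate: for any $i,j,k$ we have $G_i P_{i,j} = G_j = G_k P_{k,j} = G_i P_{i,k} P_{k,j}$, and left-multiplying by $G_i^\top$ (using $G_i^\top G_i = I$) yields $P_{i,j} = P_{i,k} P_{k,j}$; combined with $P_{i,i} = I$ this shows $\{P_{i,j}\} \in S$. Alternatively, since $G_j = G_i P_{i,j}$ places every block inside the single $m$-dimensional column space of $G_i$, one sees $\text{rank}({\bf P}) = m$ and could instead conclude via Lemma~\ref{th:rank_relaxation}. I expect the main obstacle to be precisely the norm/projection argument establishing $G_j = G_i P_{i,j}$: this is where positive semidefiniteness must be coupled with the orthogonality of the off-diagonal permutation blocks, and it is the only place where the permutation hypothesis, rather than mere invertibility, is genuinely used.
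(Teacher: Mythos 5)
Your proof is correct, but note that the paper does not actually prove this lemma: it is imported verbatim as Proposition~1 of \cite{huang2013consistent}, so there is no in-paper argument to compare against. What you have done is reconstruct a complete, self-contained proof of the cited result. The easy inclusion $S \subseteq S'''$ via Lemma~\ref{lem1} and the factorization ${\bf P} = QQ^\top$ with $Q=[Q_1;\dots;Q_n]$ is exactly right (and mirrors the factorization the paper itself uses in the proofs of Lemmas~\ref{th:rank_relaxation} and~\ref{th:nuc_norm_for_orth_matrices}). The substantive direction $S''' \subseteq S$ is also sound: from the Gram factorization ${\bf P}=G^\top G$ the diagonal blocks force $G_i^\top G_i = I$, the projection $G_iG_i^\top G_j = G_iP_{i,j}$ has squared Frobenius norm $\mathrm{tr}(P_{i,j}^\top P_{i,j}) = m = \|G_j\|_F^2$ because $P_{i,j}$ is orthogonal, and since an orthogonal projection preserves the norm of a vector only if the vector lies in the range, each column of $G_j$ lies in the column space of $G_i$, giving $G_j = G_iP_{i,j}$ and hence consistency after left-multiplying by $G_i^\top$. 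Your closing observation is also accurate: the argument uses only that the off-diagonal blocks are orthogonal matrices, so it actually proves the statement for any $\mathcal{P}$ consisting of orthogonal matrices, not just permutations. Two small points worth making explicit: (i) the Gram factorization presupposes that ${\bf P}\succeq 0$ is read in the standard sense of symmetric positive semidefinite, which in particular forces $P_{j,i}=P_{i,j}^\top$; and (ii) your alternative ending via $\mathrm{rank}({\bf P})=m$ and Lemma~\ref{th:rank_relaxation} is equally valid once $G_j = G_iP_{i,j}$ is established. Neither affects correctness.
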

\vspace{-2mm}
\begin{lemma} \label{th:lower_bound_on_nuc_norm}
For any ${\bf P} \in \mathbb{R}^{nm\times nm}$ with ${\bf P}_{ii}=1$ for all $i\in[nm]$, we have $\vvvert{\bf P}\vvvert_* \geq nm$.
\end{lemma}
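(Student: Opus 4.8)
The plan is to exploit the duality between the nuclear norm and the operator norm, combined with the diagonal constraint, which pins down the trace of ${\bf P}$. Since every diagonal entry equals $1$ and ${\bf P}$ is $nm\times nm$, we immediately have $\text{tr}({\bf P}) = \sum_{i\in[nm]}{\bf P}_{ii} = nm$. The whole proof then rests on the single inequality $\text{tr}({\bf P}) \leq \vvvert{\bf P}\vvvert_*$.

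First I would establish this inequality via the variational (dual-norm) characterization of the nuclear norm,
\begin{equation*}
\vvvert{\bf P}\vvvert_* = \max_{\vvvert Q\vvvert_{\text{op}}\leq 1}\text{tr}(Q^\top{\bf P}),
\end{equation*}
where $\vvvert\cdot\vvvert_{\text{op}}$ is the operator $2$-norm, which is the dual of the nuclear norm under the Frobenius pairing $\langle Q,{\bf P}\rangle=\text{tr}(Q^\top{\bf P})$. The key step is to pick a single convenient feasible point: taking $Q=I$, the identity has $\vvvert I\vvvert_{\text{op}}=1$, so it is feasible, and it extracts exactly the diagonal of ${\bf P}$, yielding
\begin{equation*}
\vvvert{\bf P}\vvvert_*\geq \text{tr}(I^\top{\bf P})=\text{tr}({\bf P})=nm.
\end{equation*}

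An equally short alternative, avoiding duality, is to compare eigenvalues with singular values directly: by Weyl's majorization inequality, $\sum_i|\lambda_i({\bf P})|\leq\sum_i\sigma_i({\bf P})=\vvvert{\bf P}\vvvert_*$, and hence $nm=|\text{tr}({\bf P})|=\bigl|\sum_i\lambda_i({\bf P})\bigr|\leq\vvvert{\bf P}\vvvert_*$.

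There is no serious obstacle in this argument; the only point deserving care is the justification of the lower bound $\text{tr}({\bf P})\leq\vvvert{\bf P}\vvvert_*$ itself, i.e., deciding which standard fact to invoke (nuclear/operator-norm duality, or the majorization $\sum_i|\lambda_i|\leq\sum_i\sigma_i$). Once that is in place, the diagonal hypothesis does all of the remaining work in a single line, so I would keep the writeup to essentially these two steps.
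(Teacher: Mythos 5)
Your proof is correct, but it reaches the key inequality $\mathrm{tr}({\bf P})\leq\vvvert{\bf P}\vvvert_*$ by a different route than the paper. The paper does not invoke nuclear/operator-norm duality or Weyl's majorization for general (possibly non-normal) matrices; instead it symmetrizes, setting ${\bf P}'=\tfrac{1}{2}({\bf P}+{\bf P}^\top)$, notes $\mathrm{tr}({\bf P})=\mathrm{tr}({\bf P}')=\sum_i\lambda_i({\bf P}')\leq\sum_i|\lambda_i({\bf P}')|=\vvvert{\bf P}'\vvvert_*$ (using only the elementary fact that a real symmetric matrix's singular values are the absolute values of its eigenvalues), and then bounds $\vvvert{\bf P}'\vvvert_*\leq\tfrac{1}{2}(\vvvert{\bf P}\vvvert_*+\vvvert{\bf P}^\top\vvvert_*)=\vvvert{\bf P}\vvvert_*$ by the triangle inequality and transpose-invariance of the nuclear norm. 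Your duality argument (take $Q=I$ in $\vvvert{\bf P}\vvvert_*=\max_{\vvvert Q\vvvert_{\mathrm{op}}\leq 1}\mathrm{tr}(Q^\top{\bf P})$) is arguably the cleanest and most reusable version of the lemma; your Weyl-majorization alternative is also valid but leans on a heavier spectral fact for non-normal matrices. The paper's symmetrization buys self-containedness: it needs nothing beyond the spectral theorem for symmetric matrices and the norm axioms. All three arguments then finish identically from $\mathrm{tr}({\bf P})=nm$.
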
 
\begin{proof}
Let ${\bf P}' = \frac{1}{2}({\bf P} + {\bf P}^\top)$. We have $nm = \text{tr}({\bf P})=\text{tr}({\bf P}') = \sum_{i \in [nm]}\lambda_i({\bf P}')\leq \sum_{i \in [nm]}|\lambda_i({\bf P}')|=\sum_{i \in [nm]}\sigma_i({\bf {\bf P}}')=\vvvert {\bf P}'\vvvert_*\leq \frac{1}{2}(\vvvert {\bf P}\vvvert_* + \vvvert {\bf P}^\top\vvvert_*) = \vvvert {\bf P}\vvvert_*$, where $\lambda_i(\cdot)$ and $\sigma_i(\cdot)$ denote the $i$th eigenvalue and the $i$th singular value of $(\cdot)$, respectively.
\end{proof}

\begin{lemma} \label{th:nuc_norm_for_orth_matrices}
Let $\mathcal{P}$ be a subset of the orthogonal matrices.
Let $\{P_{i,j}\}_{i,j\in[n]}\in S$, and ${\bf P}$ be the $mn\times nm$ block matrix with $P_{i,j}$ as the $(i,j)$th block. {We have $\vvvert {\bf P}\vvvert _* = mn$.}
\end{lemma}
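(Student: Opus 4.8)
The plan is to exploit the factored structure that membership in $S$ forces on ${\bf P}$. By \lemref{lem1}, since $\{P_{i,j}\}_{i,j\in[n]}\in S$, there exist matrices $\{Q_i\}_{i\in[n]}\subseteq\mathcal{P}$ with $P_{i,j}=Q_iQ_j^{-1}$; and because $\mathcal{P}$ consists of orthogonal matrices, $Q_j^{-1}=Q_j^\top$, so $P_{i,j}=Q_iQ_j^\top$. First I would stack these blocks into a single tall matrix $Q=[Q_1;\dots;Q_n]\in\mathbb{R}^{mn\times m}$ and observe that the $(i,j)$th block of $QQ^\top$ is exactly $Q_iQ_j^\top=P_{i,j}$, i.e. ${\bf P}=QQ^\top$.

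The next step is to read off the spectrum of ${\bf P}$ through this factorization. Since each $Q_i$ is orthogonal, $Q^\top Q=\sum_{i\in[n]}Q_i^\top Q_i=nI_m$, whose eigenvalues are all equal to $n$ with multiplicity $m$. The nonzero eigenvalues of $QQ^\top$ coincide with those of $Q^\top Q$, so ${\bf P}$ has eigenvalue $n$ with multiplicity $m$ and eigenvalue $0$ with multiplicity $mn-m$. Because ${\bf P}=QQ^\top$ is symmetric and positive semidefinite, its singular values equal its eigenvalues, hence the singular values of ${\bf P}$ are $n$ (with multiplicity $m$) and $0$. Summing them gives $\vvvert{\bf P}\vvvert_*=mn$, as claimed.

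I do not expect a serious obstacle here; the argument is short once the factorization ${\bf P}=QQ^\top$ is in hand. The only step that requires a little care is the identification of singular values with eigenvalues, which is valid precisely because orthogonality makes ${\bf P}$ symmetric positive semidefinite; it would fail for a general subset $\mathcal{P}$ (e.g. doubly stochastic matrices). This is consistent with \lemref{th:lower_bound_on_nuc_norm}, which only yields the inequality $\vvvert{\bf P}\vvvert_*\geq mn$ in the general case, so the orthogonality assumption is exactly what upgrades that lower bound to an equality.
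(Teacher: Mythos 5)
Your proposal is correct and follows essentially the same route as the paper: both obtain the factorization ${\bf P}=QQ^\top$ with orthogonal blocks $Q_i$ (the paper takes $Q_i=P_{i,n}$ directly from alignment consistency) and then use positive semidefiniteness to equate singular values with eigenvalues. The only cosmetic difference is that you compute the spectrum explicitly via $Q^\top Q=nI_m$, whereas the paper shortcuts to $\vvvert{\bf P}\vvvert_*=\mathrm{tr}(QQ^\top)=mn$; both are valid.
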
 
\vspace{-4mm}
\begin{proof}
Since $\{P_{i,j}\}_{i,j\in[n]}\in S$ are alignment-consistent, we can write $P_{i,j} = P_{i,n} P^{-1}_{j,n}$ for all $i,j\in[n]$. Since $P_{j,n}\in\mathcal{P}$, it must be  orthogonal. Hence, $P_{i,j} = P_{i,n} P^\top_{j,n}$,
and we can write~${\bf P} = A A^\top$, where $A = [Q_1;\dots;Q_n]\in\mathbb{R}^{nm\times m}$, and $Q_i = P_{i,n}$. Since ${\bf P}$ is positive semi-definite, its eigenvalues are equal to its singular values, which are non-negative, and thus ${\vvvert{\bf P}\vvvert_* }= \text{tr}(A A^\top)=\text{tr}(A^\top A) = \sum_{i\in[n]} \text{tr}(Q^\top_i Q_i) = \sum_{i\in[n]} \text{tr}(I)=mn$.
\end{proof}
\vspace{-2mm}
Inspired by Lemmas \ref{th:rank_relaxation}, \ref{th:lower_bound_on_nuc_norm}, and \ref{th:nuc_norm_for_orth_matrices}, to obtain a continuous relaxation of $d_{\mathcal{G}}$, we relax the rank constraint $\text{rank}({\bf P}) \leq m$ to $\vvvert {\bf P}\vvvert_*\leq mn$, use a function $s$ that is a continuous function of $P$, and use a set $\mathcal{P}$ that is compact and contains a non-empty ball around $I$. Alternatively, we can impose that $P_{j,i} = P^\top_{i,j}$, which was the case when $\mathcal{P}$ only contained orthonormal matrices, and relax the rank constraint to $\text{tr}({\bf P}) \leq mn$ and ${\bf P} \succeq 0$, i.e., ${\bf P}$ is a symmetric matrix with non-negative eigenvalues.
Note that since we want  $P_{i,i} = I$ for all $i\in [n]$, we can drop the trace constraint. The relaxation to ${\bf P} \succeq 0$ can also be justified by Lemma \ref{lm:pos_def_for_perms} and relaxing the constraint that $\mathcal{P}$ must be the set of permutations.

\begin{definition}\label{def9}
Let $\mathcal{P} \subseteq \mathbb{R}^{m \times m}$ be compact and contain a non-empty ball around $I$.
Let $P_{i,j} \in \mathcal{P}$ for all $i,j\in[n]$, and ${\bf P}$ be the $mn\times nm$ block matrix with $P_{i,j}$ as the $(i,j)$th block.
Given a map $s:\Omega^2\times \mathcal{P} \mapsto \mathbb{R}$, such that $s(\cdot,\cdot,P)$ is continuous for all $P\in \mathcal{P}$, the continuous $\mathcal{G}$-align distance function induced by $s$, is the map $d_{c\mathcal{G}}:\Omega^n \mapsto~{\mathbb{R}}$, defined by\vspace{-2mm}
\begin{equation}\label{eq:g_align_cont}
d_{c\mathcal{G}}(A_{1:n})=\hspace{-0.5cm}\min\limits_{\substack{P_{i,j} \in \mathcal{P}~\forall i,j\in[n],\\
P_{i,i}=I~\forall i\in[n],\\ \vvvert{\bf P}\vvvert_* \leq mn }}\frac{1}{2}\sum\limits_{{{i,j \in [n]}}
} s(A_i,A_j,P_{i,j}),
\end{equation}
and the symmetric continuous  $\mathcal{G}$-align distance function induced by $s$, is the map $d_{sc\mathcal{G}}:\Omega^n \mapsto{\mathbb{R}}$, defined~by
\begin{equation}\label{eq:g_align_sym_and_cont}
d_{sc\mathcal{G}}(A_{1:n})=\hspace{-0.5cm}\min\limits_{\substack{P_{i,j} \in \mathcal{P}~\forall i,j\in[n],\\
P_{i,i}=I~\forall i\in[n],\\ 
{\bf P} \succeq 0}}\frac{1}{2}\sum\limits_{{{i,j \in [n]}}
} s(A_i,A_j,P_{i,j}).
\end{equation}
\end{definition}
\vspace{-2mm}
\begin{remark}
Both optimization problems are continuous optimization problems, although they are potentially non-convex. However, for several natural choices of $s$, e.g., $s(A,B,P) = \vvvert AP-PB\vvvert$, and convex $\mathcal{P}$, both 
\eqref{eq:g_align_cont} and \eqref{eq:g_align_sym_and_cont} can be computed via convex optimization. 
\end{remark}
\vspace{-2mm}
We finish this section, by showing that the above continuous distance functions, $d_{c\mathcal{G}}$ and $d_{sc\mathcal{G}}$, are pseudo $n$-metrics. 
In what follows, we let $\|\cdot\|$ and $\vvvert \cdot \vvvert_2$ denote the Euclidean norm and matrix operator norm, respectively. We will use the following definition.
\begin{definition}\label{def:mod_s_score}
A map $s: {\Omega}^2 \times \mathcal{P} \mapsto {\mathbb{R}}$ is called a modified $P$-score, if and only if, $\mathcal{P}$ is closed under transposition and multiplication, for any $P \in \mathcal{P}$, $\vvvert P\vvvert_2 \leq 1$, and for any $P, P' \in \mathcal{P}$, and $A, B, C \in \Omega$, $s$ satisfies the properties:\vspace{-2mm}
\begin{align}
&s(A,B,P) \geq 0\label{mod:ds1},\\
&s(A,A,I) =0\label{mod:s3},\\
&{s(A,B,P) = s(B,A,P^\top)}, \label{mod:s1}\\
&s(A,B,P) + s(B,C,P') \geq s(A,C,PP')\label{mod:s2}.
\end{align}
\end{definition}
\vspace{-2mm}
\noindent For example, if $\mathcal{P}$ is the set of doubly stochastic matrices, $\Omega$ is a subset of the symmetric matrices, and $\vvvert \cdot \vvvert$ is an element-wise matrix $p$-norm, then $s(A,B,P) = \vvvert AP-BP\vvvert$ is a modified $P$-score.

We now provide the main result of this section.
\begin{theorem}\label{thm:pos_def}
If $s$ is a modified $P$-score, then the symmetric continuous $\mathcal{G}$-align distance function induced by $s$ is a pseudo $n$-metric.
\end{theorem}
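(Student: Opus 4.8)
The plan is to check the four defining properties of a pseudo $n$-metric for $d_{sc\mathcal{G}}$, dispatching non-negativity, self-identity and symmetry quickly and concentrating on the generalized triangle inequality \eqref{h4}. First I would note that the all-$I$ assignment $P_{i,j}=I$ is feasible: it is exactly the construction of Lemma~\ref{lm:towards_tract} with $\Gamma_i=I$, so the corresponding $\mathbf{P}\succeq 0$, while $P_{i,i}=I$ and $I\in\mathcal{P}$. Since $s\geq 0$ by \eqref{mod:ds1}, the objective is a non-negative sum over a non-empty feasible set, giving non-negativity. For self-identity, the same all-$I$ assignment makes every summand $s(A,A,I)=0$ by \eqref{mod:s3}, hence $d_{sc\mathcal{G}}(A,\dots,A)=0$. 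For symmetry \eqref{h3}, relabelling the blocks of $\mathbf{P}$ by a permutation $\sigma$ amounts to a congruence of $\mathbf{P}$ by an orthogonal block-permutation matrix, which preserves $\mathbf{P}\succeq 0$, the constraints $P_{i,i}=I$ and membership in $\mathcal{P}$, while permuting the summands; thus the minimum is invariant.

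For the generalized triangle inequality the key idea is that the relaxation to $\mathbf{P}\succeq 0$ lets me bypass the alignment-consistency bookkeeping of Section~\ref{main_proof} and instead route every pairwise comparison through the reference matrix $A_{n+1}$. Writing $\delta_i=\inf_{\Gamma\in\mathcal{P}} s(A_i,A_{n+1},\Gamma)$, I would first upper-bound the left-hand side. For any $\Gamma_i\in\mathcal{P}$, set $P_{i,j}=\Gamma_i\Gamma_j^{\top}$ for $i\neq j$ and $P_{i,i}=I$; closure of $\mathcal{P}$ under transposition and multiplication gives $P_{i,j}\in\mathcal{P}$, and since $\vvvert\Gamma_i\vvvert_2\leq 1$, Lemma~\ref{lm:towards_tract} yields $\mathbf{P}\succeq 0$, so this point is feasible for $d_{sc\mathcal{G}}(A_{1:n})$. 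The diagonal terms vanish by \eqref{mod:s3}, and for $i\neq j$ properties \eqref{mod:s2} and \eqref{mod:s1} give $s(A_i,A_j,\Gamma_i\Gamma_j^{\top})\leq s(A_i,A_{n+1},\Gamma_i)+s(A_j,A_{n+1},\Gamma_j)$. Summing, each term $s(A_i,A_{n+1},\Gamma_i)$ occurs $n-1$ times, so $d_{sc\mathcal{G}}(A_{1:n})\leq(n-1)\sum_{i\in[n]} s(A_i,A_{n+1},\Gamma_i)$; as the right-hand side is separable in the $\Gamma_i$, taking the infimum over each $\Gamma_i$ gives $d_{sc\mathcal{G}}(A_{1:n})\leq(n-1)\sum_{i\in[n]}\delta_i$.

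Next I would lower-bound the right-hand side. Fix $\ell\in[n]$ and let $P^{\ell\ast}$ attain $d_{sc\mathcal{G}}(A^\ell_{1:n,n+1})$. Discarding, by $s\geq 0$, every summand except those comparing some $A_a$ with the reference $A_{n+1}$, and using \eqref{mod:s1} with closure under transposition to get $s(A_a,A_{n+1},P^{\ell\ast}_{a,n+1})\geq\delta_a$ and $s(A_{n+1},A_a,P^{\ell\ast}_{n+1,a})\geq\delta_a$, I obtain $d_{sc\mathcal{G}}(A^\ell_{1:n,n+1})\geq\sum_{a\in[n]\setminus\{\ell\}}\delta_a$. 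Summing over $\ell$, each $\delta_a$ is counted once for every $\ell\neq a$, i.e. $n-1$ times, so $\sum_{\ell=1}^n d_{sc\mathcal{G}}(A^\ell_{1:n,n+1})\geq(n-1)\sum_{i\in[n]}\delta_i$. Chaining the two bounds gives $d_{sc\mathcal{G}}(A_{1:n})\leq(n-1)\sum_i\delta_i\leq\sum_{\ell}d_{sc\mathcal{G}}(A^\ell_{1:n,n+1})$, which is \eqref{h4}.

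The main obstacle is conceptual rather than computational: because $P^{\ell\ast}$ need not satisfy alignment consistency, the inequality $s^{\ell\ast}_{i,j}\leq s^{\ell\ast}_{i,k}+s^{\ell\ast}_{k,j}$ that drove the $d_{\mathcal{G}}$ proof of Section~\ref{main_proof} is unavailable, so the intricate redistribution of terms there cannot be reused. The resolution is to route uniformly through $A_{n+1}$, which produces a clean factor $n-1$ on \emph{both} sides; the only points requiring care are that Lemma~\ref{lm:towards_tract} certifies feasibility of the routed construction through $\mathbf{P}\succeq 0$, and that closure of $\mathcal{P}$ under transposition and multiplication keeps every block $\Gamma_i\Gamma_j^{\top}$ inside $\mathcal{P}$.
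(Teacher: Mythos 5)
Your proof is correct, and for the only non-trivial property, the generalized triangle inequality, it takes a genuinely different route from the paper. The paper plugs in the specific choices $\Gamma_i=P^{i-1\ast}_{i,n+1}$ and then reuses the coincidence bookkeeping from the proof of Theorem~\ref{thm2}: each routed term is split again via \eqref{mod:s2} over the index sets $I_1,I_2,I_3$, and one checks that the resulting terms are pairwise non-coincident and all appear in $\sum_\ell d_{sc\mathcal{G}}(A^\ell_{1:n,n+1})$. You instead decouple the problem through the scalars $\delta_i=\inf_{\Gamma\in\mathcal{P}}s(A_i,A_{n+1},\Gamma)$, proving $d_{sc\mathcal{G}}(A_{1:n})\le(n-1)\sum_i\delta_i$ from the routed feasible point $P_{i,j}=\Gamma_i\Gamma_j^{\top}$ (feasibility correctly certified by closure of $\mathcal{P}$ under transposition and multiplication together with Lemma~\ref{lm:towards_tract}, which needs $\vvvert\Gamma_i\vvvert_2\le 1$ from Definition~\ref{def:mod_s_score}), and $d_{sc\mathcal{G}}(A^\ell_{1:n,n+1})\ge\sum_{a\ne\ell}\delta_a$ by discarding all non-reference summands and using \eqref{mod:s1}; the factor $n-1$ then matches on both sides. (The $\tfrac12$ prefactor is absorbed because both ordered terms $s(A_a,A_{n+1},\cdot)$ and $s(A_{n+1},A_a,\cdot)$ are bounded below by $\delta_a$.) Your route buys two things: it eliminates the thirteen-case coincidence analysis entirely, and --- as you correctly observe --- it avoids the within-block inequalities \eqref{eq:G_tri_10}--\eqref{eq:G_tri_12}, which rest on $P^{\ell\ast}_{i,j}P^{\ell\ast}_{j,n+1}=P^{\ell\ast}_{i,n+1}$ and therefore do not transfer verbatim to the relaxed feasible set of $d_{sc\mathcal{G}}$, where the optimizer $P^{\ell\ast}$ need not be alignment-consistent; the paper's ``the rest of the proof follows'' glosses over exactly this point, so your argument is not just shorter but more robust. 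The only cosmetic caveat is that the minimum defining $d_{sc\mathcal{G}}(A^\ell_{1:n,n+1})$ may not be attained under the stated continuity assumptions; your use of infima for $\delta_i$ means the lower bound survives by running it with an $\epsilon$-optimal $P^{\ell}$, an issue the paper's proof shares.
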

\begin{remark}
A theorem with slightly different assumptions can be stated and proved about the $d_{c\mathcal{G}}$. Under appropriately defined equivalent classes, we can also obtain $n$-metrics from \eqref{eq:g_align_cont} and \eqref{eq:g_align_sym_and_cont} (cf. Section \ref{sec:quotient}).
\end{remark}

{\bf Graphs of different sizes:} We note that in this section, unlike in Sec. \ref{sec:n-metrics}, $P_{ij}$ does not need to be invertible. Therefore, it is possible to extend the (symmetric) continuous $\mathcal{G}$-align distance function to consider graphs of unequal sizes. We could, e.g., allow $P_{ij}$ to be rectangular of size $m_i$ by $m_j$ (resp. the node sizes of graph $G_i$ and $G_j$), which would still result in {$\bf P$} being square. If $P_{ij}$'s were previously doubly stochastic matrices, now, the row sums (or column sums, but not both) would be allowed to be $ \leq 1$. This would model unmatched nodes, and avoid non-trivial solutions for Eqs.~\eqref{eq:g_align_cont} and \eqref{eq:g_align_sym_and_cont}, i.e., $P_{i,j} = 0$ when $i \neq j$.

\section{Numerical experiments}\label{sec:numerics}

We do two experiments comparing our tool against two state-of-the-art non-$n$-metrics (from computer vision) and one simpler approach. Code for these comparison can be found in \url{http://github.com/bentoayr/n-metrics}. This repository includes code to compute some of our $n$-metrics, as well as code for the other methods, which is publicly available and that can be found through links in their respective papers, and which was copied into our repository for convenience.

The two competing algorithms are \emph{matchSync} \cite{pachauri2013solving}, and \emph{mOpt} \cite{yan2015consistency}. The simpler approach, \emph{Pairwise}, defines $d(G_1,...,G_{n}) = \sum_{i>j} d(G_i,G_j)$, where each $d(G_i,G_j)$, is computed using  \cite{cho2010reweighted}. All of these algorithms output a set of permutation matrices $\{P_{i,j}\}$, where $P_{i,j}$ tells how the nodes of graph $i$ and $j$ are matched. Both \emph{matchSync}, and \emph{mOpt} try to enforce the alignment consistency property on $\{P_{i,j}\}$, while \emph{Pairwise} computes each $P_{i,j}$ independently.
For our algorithm, we use \eqref{eq:g_align_cont}, with $\mathcal{P}$ being the set of doubly stochastic matrices, and $s(A,B,P) = \vvvert AP - PB \vvvert_{\text{Fro}}$. For comparison sake, after we compute $\{P_{i,j}\}$ using our algorithm, we sometimes project each $P_{i,j}$ onto the set of permutation matrices, which amounts to solving a \emph{maximum weight matching problem}.

\subsection{Multiple graph alignment experiment}\label{sec:align_exp}

We generate one Erd\"{o}s-–R\'enyi graph with edge probability $0.5$, and $7$ other graphs which are a small perturbation of the original graph (we flip edges with $0.05$ probability), such that we know the joint optimal alignment of these $n=8$ graphs, i.e. $P^*_{i,j} =I$. We then randomly permute the labels of these graphs such that the new joint optimal alignment is known but non-trivial, i.e. $P^*_{i,j}  \neq I$. We then use our $n$-metric, and the other non-$n$-metrics, to find an alignment between the graphs. Finally, we compare the alignments produced by the different methods to the optimal alignment. We repeat this $30$ times, on random instances.

For each set of permutations $\{P_{i,j}\}$ given by the different algorithms we compute the \emph{alignment quality} (AQ) and the \emph{alignment consistency} (AC).
\begin{align*}
&\text{AQ}  = 1- \frac{\sum^{n-1}_{i= 1}\sum^n_{j= i+1}\vvvert P_{i,j} -P^*_{i,j} \vvvert/2}{mn(n-1)/2},\\
& \text{AC} =1 - \frac{\sum^n_{r = 1} \sum^{n-1}_{i=1} \sum^n_{j=i+1} \vvvert P_{i,j} - P_{i,r}P_{r,j}\vvvert/2}{m n^2(n-1)/2},
\end{align*}
where $\vvvert \cdot \vvvert$ is the Frobenius norm.
We obtain the following average accuracy (over $30$ tests), and standard deviations. 
{
\begin{table}[h!]\centering
\begin{tabular}{|l|l|l|l|l|} 
\hline
 &{\bf \it Ours} &  {\bf \it mOpt} & {\bf \it matchSync} & {\bf \it Pairwise} \\ \hline
\hspace{-0.1cm}AQ\hspace{-0.1cm} & $0.94 \hspace{-0.1cm}\pm\hspace{-0.1cm} 0.01$ & $0.91\hspace{-0.1cm}\pm\hspace{-0.1cm}0.02$ & $0.90\hspace{-0.1cm}\pm\hspace{-0.1cm}0.02$ & $0.88\hspace{-0.1cm}\pm\hspace{-0.1cm}0.02\hspace{-0.1cm}$ \\ \hline
\hspace{-0.1cm}AC\hspace{-0.1cm} & $0.92 \hspace{-0.1cm}\pm\hspace{-0.1cm} 0.07$ & $1.0\hspace{-0.1cm}\pm\hspace{-0.1cm} 0.0$ & $1.0\hspace{-0.1cm}\pm\hspace{-0.1cm}0.0$ & $0.85\hspace{-0.1cm}\pm\hspace{-0.1cm}0.02\hspace{-0.1cm}$ \\ \hline
\end{tabular}
\end{table}
}
Note that,  by design, \emph{mOpt} and \emph{matchSync} have AC = 1.

In Appendix \ref{app:distribution_align_exp}, we include an histogram with the distribution of values for these two quantities.

%
%
\subsection{Graph clustering via hypergraph cut experiment}\label{sec:clust_exp}

We build two clusters of graphs, each obtained by generating (i) a Erd\"{o}s-–R\'enyi graph with edge probability $0.7$ as the cluster center, and (ii) $9$ other graphs that are a small perturbation of (i). Graphs in (ii) are generated just like in Section \ref{sec:align_exp}. We then try to recover the true clusters using different $n$-distances.

For each $n$-distance, we build a hypergraph with $20$ nodes ($1$ node per graph) and $100$ hyperedges. Each hyperedge is built by randomly connecting $3$ nodes (out of $20$), for which the distance between their graphs is below a certain threshold. This threshold is later tuned to minimize each algorithm's clustering error (define below). 
Ideally, most hyperedges should not include graphs in different clusters. We then use the algorithm of \cite{vazquez2009finding}, whose code can be found in \cite{HGC_software} and which is included in our repositories for convenience, to find a minimum cut of the hypergraph that divides it into two equal-sized parts. These hyper-subgraphs are our predicted clusters. The \emph{clustering error} is the fraction of misclassified graphs \emph{times two}, such that the worst possible algorithm, a random guess, gives an avg. error of $1$. We repeat this $50$ times. For each algorithm, we use the same threshold in all $50$ repetitions. 

This experiment does not require an alignment between graphs but only a distance $d$. For algorithms that output an alignment $\{P_{i,j}\}$, this distance is computed as $\frac{1}{2}\sum_{i,j}\vvvert A_i P_{i,j} - P_{i,j}A_j \vvvert_{\text{Fro}}$. For our algorithm, we calculate this distance by first projecting $\{P_{i,j}\}$ onto the permutation matrices, which we denote as $\emph{Ours}$, and we also calculate this distance directly as in \eqref{eq:g_align_cont}, which we denote as $\emph{Ours*}$.

We report the average error in the following table. The standard deviation of the mean are all $0.04$ except for Ours* which is $0.05$.
{
\begin{table}[h!]\centering
\begin{tabular}{|l|l|l|l|l|} 
\hline
 {\bf \it Ours*} & {\bf \it Ours} & {\bf \it mOpt} & {\bf \it matchSync} & {\bf \it Pairwise} \\ \hline
 $0.40$  &
 $0.44$  & $0.44$ & $0.49$ & $0.46$ \\\hline
\end{tabular}
\end{table}
}

In Appendix \ref{app:distribution_clust_exp} we include an histogram with the distribution of errors for the different algorithms.


\section{Future work}\label{sec:conc}

It is possible to define the notion of a (pseudo) \emph{$(C,n)$-metric}, as a map that satisfies the following more stringent generalization of the generalized triangle inequality:
%
%
$
d(A_{1:n}) \leq C\times \textstyle\sum_{i=1}^{n}  d(A^i_{1:n,n+1}).$

%
The authors in ~\cite{kiss2016generalization} prove that the $d_F$ is a (pseudo) $(C,n)$-metric with $\frac{1}{n-1} \leq C \leq \frac{1}{\lfloor{\frac{n}{2}}\rfloor}$. Any (pseudo)  $(C,n)$-metric with $C \leq 1$ is also a (pseudo)  $n$-metric. It is an open problem to determine the largest constant $C$, for which $d_\mathcal{G}$, $d_{c\mathcal{G}}$ or $d_{sc\mathcal{G}}$  are a (pseudo)  $(C,n)$-metric, and whether $C < 1$?

We also plan to test if the claim in \cite{vijayan2017pairwise}, which states that in several scenarios calculating and using pairwise alignments is better than calculating and using joint alignments, holds for the $n$-metrics we introduced.

We plan to develop fast and scalable solvers to compute our $n$-metrics. The objective function of our $n$-metrics involves a large number of sums, in turn  involving variables that are coupled by the alignment consistency constraint, or its relaxed equivalent. This makes the use of decomposition-coordination methods very attractive. In particular, we plan to test solvers based on the Alternating Direction Method of Multipliers (ADMM). Although not strictly a first -order method, it is very fast and, with proper tuning, it achieves a convergence rate that is as fast as the fastest possible first-order method \cite{francca2016explicit,nesterov2013introductory}. Furthermore, it has been used as an heuristic to solve many non-convex, even combinatorial, problems \cite{bento2013message,bento2015proximal,zoran2014shape,mathysparta}, and can be less affected by the topology of the communication network in a cluster than, e.g. Gradient Descent \cite{francca2017markov,francca2017distributed}. Finally, ADMM parallelizes well on share-memory multiprocessor systems, GPUs, and computer clusters \cite{boyd2011distributed,parikh2014block,hao2016testing}.

\newpage

\bibliography{biblio}
\bibliographystyle{plainnat}

\newpage
\clearpage

\appendix

{\bf \Large Supplementary material for ``Tractable $n$-Metrics for Multiple Graphs''}


\section{Special case of orthogonal matrices}\label{sec:ortho_matrices}
{In this section, we discuss the special case, where the pairwise matching matrices are orthogonal. This will further illustrate 
why computing $d_F$ is harder than computing $d_\mathcal{G}$.
We consider the following assumption.}
\begin{assumption} \label{ass:ortho}
$\Omega$ is the set of real symmetric matrices, namely, $\Omega=\{A\in\mathbb{R}^{m\times m}:A=A^{\top}\}$.
$\mathcal{P}$ is the set of orthogonal matrices, namely, $\mathcal{P} = \{P\in\mathbb{R}^{m\times m}:P^\top=P^{-1}\}$.
$s(A,B,P) = {\vvvert AP-PB \vvvert}~\forall A,B\in \Omega, P\in \mathcal{P}$, where $\vvvert \cdot \vvvert$ is the Frobenius norm or the operator norm, which are orthogonal invariant, and
$d(A,B) = \min_{P\in \mathcal{P}} s(A,B,P)$.
\end{assumption}

{We now provide the main results of this section in the following theorems, and provide the detailed proofs in Appendix \ref{App_thm5}-\ref{App_thm7}.}
\begin{theorem}\label{thm:5}
Under Assumption \ref{ass:ortho}, $d_F$ induced by $d$, and $d_\mathcal{G}$ induced by $s$, are pseudo $n$-metrics.
\end{theorem}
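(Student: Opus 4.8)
The plan is to reduce the statement entirely to the two master results already established: \thref{thm1}, which guarantees that $d_F$ is a pseudo $n$-metric whenever $d$ is a pseudometric, and \thref{thm2}, which guarantees that $d_\mathcal{G}$ is a pseudo $n$-metric whenever $s$ is a $P$-score. Thus it suffices to verify two facts under \assref{ass:ortho}: that $s(A,B,P) = \vvvert AP - PB\vvvert$ is a $P$-score, and that $d(A,B) = \min_{P\in\mathcal{P}} s(A,B,P)$ is a pseudometric. Both verifications hinge on the orthogonal invariance of $\vvvert\cdot\vvvert$, i.e.\ $\vvvert UXV\vvvert = \vvvert X\vvvert$ for orthogonal $U,V$, together with the elementary facts that $I$ is orthogonal and that the orthogonal matrices are closed under inversion and under products.

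First I would check that $s$ is a $P$-score. Non-negativity \eqref{ds1} and $s(A,A,I)=\vvvert A-A\vvvert=0$ in \eqref{s3} are immediate, and $\mathcal{P}$ is trivially closed under inversion. For the symmetry property \eqref{s1}, the key observation is the conjugation identity $P^{-1}(AP-PB)P^{-1} = P^{-1}A - BP^{-1} = -(BP^{-1}-P^{-1}A)$; applying orthogonal invariance with $U=V=P^{-1}$, together with $\vvvert -X\vvvert = \vvvert X\vvvert$, yields $s(B,A,P^{-1}) = \vvvert BP^{-1}-P^{-1}A\vvvert = \vvvert AP-PB\vvvert = s(A,B,P)$. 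For the generalized triangle inequality \eqref{s2}, I would use the telescoping decomposition
\begin{equation*}
A(PP') - (PP')C = (AP-PB)P' + P(BP'-P'C),
\end{equation*}
apply the triangle inequality of the norm, and then strip off the outer orthogonal factors $P'$ and $P$ by invariance to obtain $s(A,C,PP') \leq s(A,B,P) + s(B,C,P')$. This completes the check that $s$ is a $P$-score.

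Next I would verify that $d$ is a pseudometric, leveraging the $P$-score structure just established and the compactness of the orthogonal group, which ensures the defining minimum is attained. Non-negativity and $d(A,A)=0$ follow by taking $P=I$. Symmetry $d(A,B)=d(B,A)$ follows from \eqref{s1} by reindexing the minimization over $P^{-1}$, which sweeps out all of $\mathcal{P}$ since $\mathcal{P}$ is closed under inversion. The triangle inequality follows by choosing minimizers $P^\ast,P'^\ast$ for $d(A,B)$ and $d(B,C)$ and bounding $d(A,C) \leq s(A,C,P^\ast P'^\ast) \leq s(A,B,P^\ast) + s(B,C,P'^\ast) = d(A,B)+d(B,C)$, where $P^\ast P'^\ast\in\mathcal{P}$ because orthogonal matrices are closed under products. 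With $d$ a pseudometric and $s$ a $P$-score in hand, \thref{thm1} and \thref{thm2} immediately give the claim. I expect the main obstacle to be the two algebraic identities underlying \eqref{s1} and \eqref{s2}: identifying the correct conjugation and telescoping splittings, and confirming that every factor removed via orthogonal invariance is genuinely orthogonal, is where the argument has real content, while everything else is routine bookkeeping.
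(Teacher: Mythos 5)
Your proposal is correct and follows essentially the same route as the paper: both reduce the claim to Theorems~\ref{thm1} and~\ref{thm2} by verifying that $s$ is a $P$-score via orthogonal invariance (the same conjugation identity for property~\eqref{s1} and the same insertion of a middle term for property~\eqref{s2}). The only cosmetic difference is that you derive the pseudometric property of $d$ directly from the $P$-score axioms, whereas the paper outsources that step to a cited theorem of \cite{bento2018family}; your version is slightly more self-contained but not a different argument.
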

\begin{theorem}\label{th:orthogonal_form_for_df}
Let $\Lambda_{A_i}\in\mathbb{R}^m$ be the vector of eigenvalues of $A_i$, ordered from largest to smallest. Then, under Assumption~\ref{ass:ortho}, 
\begin{equation}
d_F(A_{1:n}) = \min_{\Lambda_C\in\mathbb{R}^m} \sum^n_{i=1} \|\Lambda_{A_i}-\Lambda_C\|.\label{eq:df_for_orth}
\end{equation}
%
\end{theorem}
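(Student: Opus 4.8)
The plan is to use the orthogonal invariance of $\vvvert\cdot\vvvert$ to strip away the matching matrices, reduce $d_F$ to a joint minimization over a reference matrix and over the orthogonal orbits of the $A_i$, and then invoke the Mirsky (Hoffman--Wielandt) inequality for orthogonally invariant norms to collapse everything onto eigenvalues. First I would rewrite the inner distance: under Assumption~\ref{ass:ortho}, for any $P_i\in\mathcal{P}$, orthogonal invariance gives $\vvvert A_iP_i - P_iB\vvvert = \vvvert P_i^\top A_i P_i - B\vvvert$, so that $d(A_i,B) = \min_{P_i\in\mathcal{P}}\vvvert P_i^\top A_i P_i - B\vvvert$. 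As $P_i$ ranges over the orthogonal group, $Q_i := P_i^\top A_i P_i$ ranges exactly over the symmetric matrices orthogonally similar to $A_i$, i.e.\ those with eigenvalue vector $\Lambda_{A_i}$. Substituting into \eqref{eq:fermat} and merging the independent inner minimizations yields
\[
d_F(A_{1:n}) = \min_{B\in\Omega}\ \min_{\{Q_i\}}\ \sum_{i=1}^n \vvvert Q_i - B\vvvert,
\]
where each $Q_i$ is constrained to have eigenvalues $\Lambda_{A_i}$.

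Next I would fix the identification of the vector norm appearing on the right-hand side of \eqref{eq:df_for_orth}: define $\|v\| := \vvvert\mathrm{diag}(v)\vvvert$ for $v\in\mathbb{R}^m$, which is a genuine vector norm and is precisely the symmetric gauge function associated with the orthogonally invariant norm $\vvvert\cdot\vvvert$. In particular $\vvvert\mathrm{diag}(u)-\mathrm{diag}(w)\vvvert = \|u-w\|$ for all $u,w\in\mathbb{R}^m$, which is the bridge between the two sides of \eqref{eq:df_for_orth}.

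For the lower bound, fix any feasible $B$ and $\{Q_i\}$, and let $\Lambda_B$ denote the eigenvalues of $B$ sorted from largest to smallest. Since $Q_i$ and $B$ are real symmetric with sorted eigenvalue vectors $\Lambda_{A_i}$ and $\Lambda_B$, Mirsky's inequality for orthogonally invariant norms gives $\vvvert Q_i - B\vvvert \ge \vvvert \mathrm{diag}(\Lambda_{A_i}) - \mathrm{diag}(\Lambda_B)\vvvert = \|\Lambda_{A_i}-\Lambda_B\|$. Summing over $i$ and noting that $\Lambda_B$ is one admissible choice of $\Lambda_C$ gives $\sum_i \vvvert Q_i - B\vvvert \ge \min_{\Lambda_C\in\mathbb{R}^m}\sum_i \|\Lambda_{A_i}-\Lambda_C\|$; minimizing the left-hand side over $B$ and $\{Q_i\}$ then yields $d_F(A_{1:n}) \ge \min_{\Lambda_C}\sum_i \|\Lambda_{A_i}-\Lambda_C\|$. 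For the matching upper bound, let $\Lambda_C^\ast$ attain the right-hand minimum and take the feasible point $B=\mathrm{diag}(\Lambda_C^\ast)\in\Omega$ together with $Q_i=\mathrm{diag}(\Lambda_{A_i})$ (which has eigenvalues $\Lambda_{A_i}$, hence is orthogonally similar to $A_i$). Then $\sum_i\vvvert Q_i-B\vvvert = \sum_i\|\Lambda_{A_i}-\Lambda_C^\ast\|$, so $d_F(A_{1:n})$ is no larger than the right-hand side, and combining the two bounds gives \eqref{eq:df_for_orth}.

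The main obstacle is the lower bound: one must certify that matching eigenvalues in sorted order is simultaneously optimal for \emph{every} orthogonally invariant norm, which is exactly the content of Mirsky's generalization of the Hoffman--Wielandt theorem (the Frobenius case being the classical Hoffman--Wielandt estimate). Everything else is routine---the orbit characterization of $\{Q_i\}$ follows from the spectral theorem, and the upper bound is an explicit diagonal construction. A secondary point that must be handled carefully, lest the two sides of \eqref{eq:df_for_orth} involve incomparable norms, is the identification of the right-hand vector norm $\|\cdot\|$ with the symmetric gauge function of $\vvvert\cdot\vvvert$.
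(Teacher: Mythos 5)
Your proposal is correct and follows the same skeleton as the paper's proof: use orthogonal invariance to move the matching matrices out of the way, apply an eigenvalue-perturbation inequality for the lower bound, and exhibit a diagonal construction for the upper bound. The differences are worth noting. First, where you invoke Mirsky's theorem for general orthogonally invariant norms and identify the vector norm $\|\cdot\|$ on the right-hand side as the symmetric gauge function $\|v\|=\vvvert\mathrm{diag}(v)\vvvert$, the paper instead proves only the two special cases it cares about --- the Frobenius norm via Hoffman--Wielandt (its Lemma~\ref{th:aux_lemma_eigen_values}) and the operator $2$-norm via Weyl's inequality (its Lemma~\ref{th:aux_lemma_eigen_values_op_norm}) --- and closes by matching these to the Euclidean and $\infty$ vector norms explicitly. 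Your route is therefore slightly more general (it covers every norm allowed by Assumption~\ref{ass:ortho}, not just those two) and cleaner about what $\|\cdot\|$ actually denotes in \eqref{eq:df_for_orth}, at the cost of importing a heavier classical theorem. Second, your orbit reformulation ($Q_i=P_i^\top A_iP_i$ ranging over the matrices orthogonally similar to $A_i$) lets you dispense with the paper's Corollary~\ref{th:better_to_sort_first}: since your lower bound compares against the sorted $\Lambda_B$, which is merely one admissible $\Lambda_C$ in an unconstrained minimum, you never need to argue that sorting is optimal, whereas the paper first restricts $\Lambda_C$ to be sorted and then uses the corollary to lift that restriction. Third, your upper bound ($B=\mathrm{diag}(\Lambda_C^\ast)$, $Q_i=\mathrm{diag}(\Lambda_{A_i})$) is a cleaner version of the paper's choice $C=P_i^\top U_i\mathrm{diag}(\Lambda_C)U_i^\top P_i$, which as written appears to depend on $i$ and tacitly requires choosing the $P_i$ consistently; your construction avoids that awkwardness entirely.
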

\begin{theorem}\label{thm:7}
Let $\Lambda_{A_i}\in\mathbb{R}^m$ be the vector of eigenvalues of~$A_i$, ordered from largest to smallest. Then, under Assumption~\ref{ass:ortho}, 
\begin{equation}
d_\mathcal{G}(A_{1:n}) =  \frac{1}{2}\sum_{i,j\in[n]} \|\Lambda_{A_i}-\Lambda_{A_j}\|.\label{eq:dG_for_orth}
\end{equation}
%
\end{theorem}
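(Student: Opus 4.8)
The plan is to mirror the structure of the proof of Theorem~\ref{th:orthogonal_form_for_df}, exploiting the fact that orthogonal invariance lets us reduce every pairwise $P$-score to a comparison of eigenvalue vectors. First I would rewrite $d_\mathcal{G}$ in the compact form of Lemma~\ref{th:d_G_compact_sum}, and then apply Lemma~\ref{lem1} to parametrize the feasible set $S$ by a single family $\{Q_i\}\subseteq\mathcal{P}$ through $P_{i,j}=Q_iQ_j^{-1}$. This gives
\begin{equation*}
d_\mathcal{G}(A_{1:n}) = \min_{\{Q_i\}\subseteq\mathcal{P}} \sum_{i<j} \vvvert A_i Q_iQ_j^{-1} - Q_iQ_j^{-1} A_j \vvvert.
\end{equation*}
Multiplying each summand on the left by $Q_i^{-1}$ and on the right by $Q_j$, and invoking Lemma~\ref{th:ortho_matrix_norm_mult}, each term becomes $\vvvert Q_i^{-1}A_iQ_i - Q_j^{-1}A_jQ_j \vvvert$.

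For the lower bound, I would observe that $Q_i^{-1}A_iQ_i$ is symmetric with the same eigenvalues $\Lambda_{A_i}$ as $A_i$, for every orthogonal $Q_i$. Applying Lemma~\ref{th:aux_lemma_eigen_values} (Frobenius norm) or Lemma~\ref{th:aux_lemma_eigen_values_op_norm} (operator norm) to the Hermitian pair $Q_i^{-1}A_iQ_i$ and $Q_j^{-1}A_jQ_j$ bounds each summand below by $\|\Lambda_{A_i}-\Lambda_{A_j}\|$, sorting both eigenvalue vectors consistently. Since this bound holds termwise for \emph{every} feasible $\{Q_i\}$, summing and taking the minimum yields $d_\mathcal{G}(A_{1:n})\geq \sum_{i<j}\|\Lambda_{A_i}-\Lambda_{A_j}\|$.

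For the matching upper bound, I would choose $Q_i=U_i$, the orthonormal matrix of eigenvectors from the decomposition $A_i = U_i\,\text{diag}(\Lambda_{A_i})\,U_i^{-1}$. This is a valid point of $S$ by Lemma~\ref{lem1}, and it simultaneously diagonalizes every conjugated matrix, $Q_i^{-1}A_iQ_i=\text{diag}(\Lambda_{A_i})$. Each summand then equals $\vvvert \text{diag}(\Lambda_{A_i})-\text{diag}(\Lambda_{A_j})\vvvert$, whose value is exactly $\|\Lambda_{A_i}-\Lambda_{A_j}\|$ in both the Frobenius and operator norms, since the difference is diagonal. Combining the two bounds and converting $\sum_{i<j}$ to $\tfrac{1}{2}\sum_{i,j}$ gives the claimed identity.

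The crucial point --- and the reason $d_\mathcal{G}$ admits a closed form whereas $d_F$ does not --- is that the single family $\{Q_i=U_i\}$ attains the per-pair lower bound for all $\binom{n}{2}$ pairs at once, so no joint optimization over a center is needed. The only mild subtlety to verify is the consistency between the sorting convention used in Lemmas~\ref{th:aux_lemma_eigen_values}--\ref{th:aux_lemma_eigen_values_op_norm} and the decreasing order adopted in the statement; this is harmless, because reversing both eigenvalue vectors simultaneously leaves the Euclidean and $\infty$-norm distances unchanged.
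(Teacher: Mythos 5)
Your proposal is correct and follows essentially the same route as the paper's proof: orthogonal invariance of the norm (Lemma~\ref{th:ortho_matrix_norm_mult}) plus the Hoffman--Wielandt-type bounds (Lemmas~\ref{th:aux_lemma_eigen_values} and~\ref{th:aux_lemma_eigen_values_op_norm}) for the lower bound, and the choice $P_{i,j}=U_iU_j^{-1}$, justified as a member of $S$ via Lemma~\ref{lem1}, for the matching upper bound. The only cosmetic difference is that you conjugate each $A_i$ by its own $Q_i$ symmetrically, while the paper conjugates by $U_i$ and compares $\text{diag}(\Lambda_{A_i})$ against $U_i^{-1}P_{i,j}A_jP_{i,j}^{-1}U_i$; the two computations are equivalent.
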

\vspace{-4mm}
{Note that $d_F = d_\mathcal{G} = 0$ if and only if $A_{1:n}$ {share the same spectrum}. 
}

\vspace{-2mm}
{The function $d_F$ is related to the geometric median of the spectra of $A_{1:n}$. In order to write \eqref{eq:dG_for_orth} as an optimization problem similar to $d_F$ in \eqref{eq:df_for_orth}}, it is tempting to define $d_\mathcal{G}$ using $s^2$ instead of $s$, and take a square root.
Let us call the resulting function $\bar{d}_\mathcal{G}$.
A straightforward calculation allows us to
write \vspace{-2mm}
\begin{align*}
&(\bar{d}_\mathcal{G}(A_{1:n}))^2=\frac{1}{2}\sum_{i,j\in[n]} \|\Lambda_{A_i}-\Lambda_{A_j}\|^2\\
&= n^2 \left(\frac{1}{n}\sum_{i\in[n]}\Big\|\Lambda_{A_i} -  \frac{1}{n}\sum_{j\in[n] }\Lambda_{A_j}\Big\|^2\right)\equiv n^2\text{Var}(\Lambda_{A_{1:n}})\nonumber\\
&=n \min_{\Lambda_C\in\mathbb{R}^m} \frac{1}{2}\sum_{i\in[n]} \|\Lambda_{A_i}-\Lambda_C\|^2,
\end{align*}
where we use $\text{Var}(\Lambda_{A_{1:n}})$  to denote the geometric sample variance of the vectors $\{\Lambda_{A_i}\}$. This leads to a definition very close to \eqref{eq:df_for_orth}, and a connection between $\bar{d}_\mathcal{G}$ and the geometric sample variance.
%

At this point it is important to note that sample variances can be computed exactly in $\mathcal{O}(n)$ steps involving only sums and products of numbers. Contrastingly, although there are fast approximation algorithms for the geometric median \cite{cohen2016geometric}, there are no procedures to compute it exactly in a finite number of simple algebraic operations \cite{bajaj1986proving,cockayne1969euclidean}.
%


%
%

\section{Proof of Theorem \ref{thm1}}\label{sec:proof_thm1}
In the following lemmas, we show that the Fermat distance function satisfies properties~\eqref{h1},~\eqref{h3},~\eqref{h4}, and {\eqref{eq:self_identity}}, and hence is a pseudo $n$-metric. 
\begin{lemma}
$d_F$ is non-negative.
\end{lemma}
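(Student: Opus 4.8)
The plan is to reduce the non-negativity of $d_F$ directly to the non-negativity of the underlying pseudometric $d$. Since $d$ is a pseudometric, it inherits property \emph{(i)} of Definition~\ref{def1}, namely $d(A,B) \geq 0$ for every pair $A,B \in \Omega$. This is the only ingredient I expect to need.

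First, I would fix an arbitrary sequence $A_{1:n} \in \Omega^n$ together with an arbitrary candidate center $B \in \Omega$. Because each summand $d(A_i,B)$ is non-negative, the finite sum $\sum_{i=1}^{n} d(A_i,B)$ is a sum of non-negative reals and is therefore itself non-negative; crucially, this lower bound holds uniformly in $B$. Second, I would pass to the minimization over $B \in \Omega$: the minimum of a family of quantities each bounded below by $0$ is again bounded below by $0$, so that $d_F(A_{1:n}) = \min_{B \in \Omega} \sum_{i=1}^{n} d(A_i,B) \geq 0$, which is exactly the claim.

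I anticipate no genuine obstacle here, as the result is an immediate consequence of the non-negativity of $d$ combined with the observation that minimizing over the center $B$ cannot drive a non-negative objective below zero. The only point that might warrant a remark is the existence of a minimizer: if one is uneasy about whether the $\min$ is actually attained, the identical chain of inequalities goes through verbatim with $\inf$ replacing $\min$, so attainment of the center $B$ is irrelevant to this particular property and can be deferred to the discussion of the remaining axioms.
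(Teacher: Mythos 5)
Your proposal is correct and follows essentially the same route as the paper: non-negativity of the pseudometric $d$ makes each summand non-negative, hence the sum is non-negative for every $B$, and the minimum (or infimum) over $B$ inherits the lower bound of $0$. Your extra remark about replacing $\min$ with $\inf$ is a reasonable point of care that the paper glosses over, but it does not change the argument.
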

\begin{proof}
If $d$ is a {pseudo} metric, it is non-negative. Thus,~\eqref{eq:fermat} is the sum of non-negative functions, and hence also non-negative.
\end{proof}
\begin{lemma}
	{$d_F$ satisfies the self-identity property.}
\end{lemma}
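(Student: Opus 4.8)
The plan is to verify the self-identity property \eqref{eq:self_identity}, namely that $d_F(A,\dots,A) = 0$ for every $A \in \Omega$. Starting from the definition \eqref{eq:fermat}, I would evaluate $d_F(A,\dots,A) = \min_{B \in \Omega} \sum_{i=1}^n d(A,B)$. The key observation is that the feasible candidate $B = A$ is available in the minimization, so I would first establish an upper bound by plugging in this particular choice.

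The main step is as follows. By choosing $B = A$ in \eqref{eq:fermat}, we obtain $d_F(A,\dots,A) \leq \sum_{i=1}^n d(A,A)$. Since $d$ is a pseudometric, it satisfies $d(A,A) = 0$ (the self-identity of the underlying pseudometric, property from Definition \ref{def1.1}), and hence each summand vanishes, giving $d_F(A,\dots,A) \leq 0$. For the reverse inequality, I would invoke the previous lemma establishing that $d_F$ is non-negative, so $d_F(A,\dots,A) \geq 0$. Combining the two bounds yields $d_F(A,\dots,A) = 0$, as desired.

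I do not expect any genuine obstacle here: the result follows immediately from feasibility of $B = A$ together with the self-identity of the pseudometric $d$ and the already-proven non-negativity of $d_F$. The only point requiring a moment of care is ensuring that $A \in \Omega$ guarantees $B = A$ is an admissible choice in the minimization over $B \in \Omega$, which holds trivially. Thus the proof is a short two-line argument sandwiching $d_F(A,\dots,A)$ between $0$ and $0$.
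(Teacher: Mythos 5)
Your proof is correct and follows essentially the same route as the paper: both evaluate the minimization at the feasible point $B = A$, use $d(A,A)=0$ from the pseudometric property, and conclude via non-negativity. Your version just makes the sandwich between the upper bound and the already-established non-negativity of $d_F$ slightly more explicit.
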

\begin{proof}
{If $A_1=A_2=\ldots=A_n$, then $d_F(A_{1:n}) = \min\limits_{B} n \times d(A_1,B)$, which is zero if we choose $B=A_1\in\Omega$, and \eqref{eq:self_identity} follows}.
\end{proof}
\begin{lemma}
	$d_F$ is symmetric.
\end{lemma}
\begin{proof}
	Property \eqref{h3} simply follows from the commutative property of summation.
\end{proof}
\begin{lemma}\label{lem6}
	$d_F$ satisfies the generalized triangle inequality.
\end{lemma}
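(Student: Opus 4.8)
The plan is to establish the generalized triangle inequality \eqref{h4}, namely $d_F(A_{1:n}) \le \sum_{i=1}^n d_F(A^i_{1:n,n+1})$, by treating the auxiliary point $A_{n+1}$ as a common reference. The argument will be deliberately asymmetric: I would bound the left-hand side from above in the crudest possible way, and bound each summand on the right-hand side from below by a single pairwise distance, so that the two bounds meet.

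First I would upper bound the left-hand side. Since $A_{n+1}\in\Omega$ is a feasible value of the minimizing variable $B$ in \eqref{eq:fermat}, evaluating the objective at $B=A_{n+1}$ gives immediately
\[
d_F(A_{1:n}) \;\le\; \sum_{i=1}^n d(A_i,A_{n+1}).
\]
Next I would lower bound each right-hand term. Fix $k\in[n]$ and any $j\in[n]$ with $j\neq k$. In the sequence $A^k_{1:n,n+1}$ the $k$th slot holds $A_{n+1}$ while the $j$th slot still holds $A_j$, so the objective defining $d_F(A^k_{1:n,n+1})$ contains both $d(A_j,B)$ and $d(A_{n+1},B)$ as nonnegative summands. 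Discarding the remaining nonnegative terms and invoking the symmetry and triangle inequality of the pseudometric $d$ yields, for every $B$, the bound $d(A_j,B)+d(A_{n+1},B)\ge d(A_j,A_{n+1})$, and taking the minimum over $B$ gives $d_F(A^k_{1:n,n+1})\ge d(A_j,A_{n+1})$ for each $j\neq k$.

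The final step is a matching argument. I would pick the fixed-point-free permutation $\rho(k)=(k \bmod n)+1$ of $[n]$ (well defined with $\rho(k)\neq k$ for all $n\ge 2$), instantiate the previous bound with $j=\rho(k)$, and sum over $k$:
\[
\sum_{k=1}^n d_F(A^k_{1:n,n+1}) \;\ge\; \sum_{k=1}^n d(A_{\rho(k)},A_{n+1}) \;=\; \sum_{i=1}^n d(A_i,A_{n+1}),
\]
where the last equality reindexes the sum through the bijection $\rho$. Chaining this with the upper bound from the first step proves \eqref{h4}.

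The main obstacle is resisting the natural but lossy route: taking per-term optimal centers $B_k$ and averaging the feasibility bounds $d_F(A_{1:n})\le\sum_i d(A_i,B_k)$ over $k$ does not close, since it introduces a spurious factor $1/n$ together with an uncontrolled residual $\sum_k d(A_k,A_{n+1})$ whose sign cannot be tamed by the triangle inequality. The key realization is to break this symmetry: use the single feasible center $A_{n+1}$ for the left-hand side, and charge each Fermat term on the right to just \emph{one} pairwise distance to $A_{n+1}$, so that a single fixed-point-free permutation accounts for all $n$ distances exactly once. This keeps the argument uniform in $n$ and, as a sanity check, recovers the ordinary triangle inequality in the case $n=2$.
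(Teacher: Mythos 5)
Your proof is correct, and it is a genuine (if close) variant of the paper's argument rather than a reproduction of it. Both proofs are ``hub'' arguments: bound $d_F(A_{1:n})$ above by the star sum of pairwise distances at a single reference vertex, then charge each Fermat term on the right-hand side to one of those pairwise distances by applying the triangle inequality through that term's own center $B$. The paper uses $A_1$ as the hub: it discards the summand $d(A_{n+1},B_i)$ from each optimal center $B_i$, keeps the two summands $d(A_1,B_i)$ and $d(A_{i+1},B_i)$, and has to treat the indices asymmetrically (the $i=1$ term is dropped entirely and the $i=n$ term is charged to $d(A_1,A_2)$) to recover $\sum_{i=2}^{n}d(A_1,A_i)\geq d_F(A_{1:n})$. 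You instead use the new point $A_{n+1}$ as the hub, keep $d(A_{n+1},B)$ together with exactly one $d(A_{\rho(k)},B)$, and let the fixed-point-free permutation $\rho$ make the charging uniform in $k$; this works precisely because $A_{n+1}$ appears in every one of the $n$ modified sequences, whereas $A_1$ appears in all but one. Your version buys two small things: it avoids the case analysis on $i$, and since your lower bound $d(A_j,B)+d(A_{n+1},B)\geq d(A_j,A_{n+1})$ holds for \emph{every} $B$, you never need the minimum in \eqref{eq:fermat} to be attained, which the paper's selection of optimal centers $B_{1:n}$ in \eqref{eq:fermat_tri_2} implicitly assumes. The argument is complete as stated; the only point worth making explicit is that $A_{n+1}\in\Omega$, so that it is indeed a feasible choice of $B$ in the first step.
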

\begin{proof}
Note that the following proof is a direct adaptation of the one in~\cite{kiss2016generalization}, and is included for the sake of completeness. 
We show that the Fermat distance satisfies \eqref{h4}, i.e.,
\begin{equation}
d_F(A_{1:n}) \leq \sum_{i=1}^{n} d_F(A^i_{1:n,n+1}).
\end{equation}
Consider $B_{1:n} \in \Omega$ such that,
\begin{equation}\label{eq:fermat_tri_2}
d_F(A^i_{1:n,n+1}) =  d(A_{n+1}, B_i) + \sum_{j \in [n]\setminus i} d(A_j, B_i).
\end{equation}
Equation~\eqref{eq:fermat_tri_2} implies that 
\begin{eqnarray}\label{eq:fermat_triangle}
\sum_{i=1}^{n} d_F(A^i_{1:n,n+1}) &\geq& \sum_{i=1}^{n} \sum_{\substack{j \in [n]\backslash i}} d(A_j, B_i) \geq d(A_1,B_n)+\nonumber\\
d(A_2,B_n)&+&\sum_{i=2}^{n-1}(d(A_1,B_i)+d(A_{i+1},B_i)).
\end{eqnarray}
Using triangle inequality, we have $d(A_1,B_n)+
d(A_2,B_n) \geq d(A_1,A_2)$, and, 
$d(A_1,B_i)+d(A_{i+1},B_i) \geq d(A_1,A_{i+1})$. Thus, from \eqref{eq:fermat_triangle},
\begin{eqnarray*}
\sum_{i=1}^{n} d_F(A^i_{1:n,n+1}) \geq \hspace{-1.5mm} \sum_{i=2}^{n}d(A_1,A_i)=\hspace{-1.5mm} \sum_{i=1}^{n}d(A_1,A_i)
\geq& \hspace{-3mm} d_F(A_{1:n}),
\end{eqnarray*}
where we used $d(A_1,A_1)=0$ in the equality. The last inequality follows from Definition~\ref{def6}, and completes the proof.
\end{proof}
\section{Proof of Theorem \ref{thm2}}\label{sec:proof_thm2}
In the following lemmas, we show that the $\mathcal{G}$-align distance function satisfies properties~\eqref{h1},~\eqref{h3},~\eqref{h4}, and {\eqref{eq:self_identity}}, and hence is a pseudo $n$-metric. 

\begin{lemma}
	$d_\mathcal{G}$ is non-negative.
\end{lemma}
\begin{proof}
Since $s$ is a $P$-score, it satisfies {\eqref{ds1}, i.e., $s \geq 0$}, which implies $d_\mathcal{G} \geq 0$, since it is a sum of $P$-scores.
\end{proof}
\begin{lemma}
	{$d_\mathcal{G}$ satisfies the self-identity property}.
\end{lemma}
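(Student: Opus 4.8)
The plan is to show $d_\mathcal{G}(A,\dots,A) = 0$ by exhibiting an explicit feasible point of the minimization in \eqref{eq:galign} that attains the value zero, and then invoking non-negativity to conclude this value is optimal. Since the preceding lemma already establishes $d_\mathcal{G} \geq 0$, it suffices to produce a single $P \in S$ for which the objective vanishes when $A_1 = \cdots = A_n = A$.

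First I would take the identity assignment $P_{i,j} = I$ for all $i,j \in [n]$, and verify that this choice lies in $S$. The definition \eqref{eq:def_of_S} imposes three requirements: that $P_{i,j} \in \mathcal{P}$, which holds because $I \in \mathcal{P}$ (as noted in the first remark following Definition~\ref{def7}); alignment consistency $P_{i,k}P_{k,j} = P_{i,j}$, which reduces to $I \cdot I = I$ and holds trivially; and $P_{i,i} = I$, which is immediate. Hence $\{P_{i,j} = I\} \in S$ is an admissible point of the minimization.

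Next, evaluating the objective at this point with every argument equal to $A$, each summand becomes $s(A_i,A_j,P_{i,j}) = s(A,A,I)$, which equals zero by property \eqref{s3} of a $P$-score. Therefore the sum in \eqref{eq:galign} is zero at this feasible point, so the minimum is at most zero; combined with $d_\mathcal{G} \geq 0$ from the non-negativity lemma, this forces $d_\mathcal{G}(A,\dots,A) = 0$. There is essentially no serious obstacle here — the only point requiring care is confirming that the identity assignment is genuinely admissible in $S$ (in particular that $I \in \mathcal{P}$ and that it respects the consistency constraint), since the minimization ranges only over the consistent set $S$ rather than over arbitrary collections of matchings.
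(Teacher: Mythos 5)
Your proof is correct and follows essentially the same route as the paper: choose $P_{i,j}=I$ for all $i,j\in[n]$ as a feasible point of $S$, use property \eqref{s3} to make every summand vanish, and combine with non-negativity. Your explicit verification that the identity assignment lies in $S$ is a welcome bit of extra care, but the argument is the same.
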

\begin{proof}
If $A_1 = A_2  =\ldots = A_n$, then, if we choose $P\in S$ such that $P_{i,j} = I$ for all $i,j \in [n]$, we have  $s(A_i,A_j,P_{i,j})=0$ by \eqref{s3}, for all $i,j \in [n]$. Therefore, 
\begin{equation*}
0 \leq d_\mathcal{G}(A_{1:n}) \leq \frac{1}{2}\sum_{i,j\in[n]}s(A_i,A_j,P_{i,j}) =0.\vspace{-7mm}
\end{equation*}
\end{proof}
\begin{lemma}
	$d_\mathcal{G}$ is symmetric.
\end{lemma}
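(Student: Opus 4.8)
The plan is to exploit the fact that the objective in \eqref{eq:galign} is a sum over \emph{all} ordered pairs $(i,j)\in[n]^2$, so that permuting the graphs amounts to nothing more than a relabeling of the summation indices together with a matching relabeling of the feasible set $S$. Concretely, I would fix a permutation $\sigma$ of $1{:}n$ and prove $d_\mathcal{G}(A_{\sigma(1:n)}) = d_\mathcal{G}(A_{1:n})$ by exhibiting a bijection of $S$ onto itself that cancels the effect of $\sigma$.

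First, given any $P=\{P_{i,j}\}\in S$, I would define a relabeled family $\tilde P$ by $\tilde P_{i,j}=P_{\sigma^{-1}(i),\sigma^{-1}(j)}$ and verify $\tilde P\in S$. This is the only step that requires care: using the consistency $P_{a,k}P_{k,b}=P_{a,b}$ with $a=\sigma^{-1}(i)$ and $b=\sigma^{-1}(j)$, together with the fact that $k\mapsto\sigma^{-1}(k)$ sweeps over all of $[n]$, gives $\tilde P_{i,k}\tilde P_{k,j}=\tilde P_{i,j}$; likewise $\tilde P_{i,i}=P_{\sigma^{-1}(i),\sigma^{-1}(i)}=I$. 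Since $\sigma$ is a bijection, the map $P\mapsto\tilde P$ is a bijection of $S$ onto $S$, with inverse obtained by replacing $\sigma^{-1}$ by $\sigma$.

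Second, I would rewrite the objective associated with $A_{\sigma(1:n)}$ under this relabeling. Substituting $i'=\sigma(i)$ and $j'=\sigma(j)$ — a reindexing that again ranges over all of $[n]^2$ — and noting $P_{i,j}=P_{\sigma^{-1}(i'),\sigma^{-1}(j')}=\tilde P_{i',j'}$, we obtain
\begin{equation*}
\sum_{i,j\in[n]} s(A_{\sigma(i)},A_{\sigma(j)},P_{i,j}) = \sum_{i',j'\in[n]} s(A_{i'},A_{j'},\tilde P_{i',j'}).
\end{equation*}
Taking the minimum over $P\in S$ on the left, and using that $\tilde P$ ranges over all of $S$ exactly once as $P$ does, the two minimizations coincide, which (after carrying the factor $\tfrac{1}{2}$ through both sides) yields $d_\mathcal{G}(A_{\sigma(1:n)})=d_\mathcal{G}(A_{1:n})$, i.e.\ property \eqref{h3}.

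The argument uses none of the $P$-score axioms beyond the mere existence of $s$ as a map: symmetry is purely combinatorial and flows from the full (unordered) double sum together with the relabeling-invariance of $S$. The one place to be watchful is the verification that $\tilde P$ remains consistent and has identity diagonal, which is exactly where the structure of $S$ from \eqref{eq:def_of_S} is needed; everything else is a change of summation variables.
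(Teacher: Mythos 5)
Your proof is correct and takes essentially the same approach as the paper's, which disposes of the lemma in one line by observing that the objective in \eqref{eq:galign} sums $s(A_i,A_j,P_{i,j})$ over all ordered pairs and is therefore invariant under permuting the $\{A_i\}$. Your version simply makes explicit the step the paper leaves implicit — that the relabeled family $\tilde P_{i,j}=P_{\sigma^{-1}(i),\sigma^{-1}(j)}$ still lies in $S$, so the reindexing is a bijection of the feasible set — which is a worthwhile but not substantively different elaboration.
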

\begin{proof}
	The definition, \eqref{eq:galign},  involves summing $s(A_i,A_j, P_{i,j})$ over all pairs $i, j \in [n]$, which clearly makes $d_\mathcal{G}$ invariant to permuting $\{A_i\}$.
\end{proof}

\begin{lemma}
	$d_\mathcal{G}$ satisfies the generalized triangle inequality.
\end{lemma}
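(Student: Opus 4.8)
The plan is to follow the template of the $n=4$ argument in Section~\ref{sec:n=4} and push it through for arbitrary $n$, establishing $d_\mathcal{G}(A_{1:n}) \leq \sum_{\ell=1}^{n} d_\mathcal{G}(A^\ell_{1:n,n+1})$. First I would fix an optimal consistent matching $P^\ast=\{P^\ast_{i,j}\}\in S$ for the left-hand side, and, for each $\ell\in[n]$, an optimal consistent matching $P^{\ell\ast}=\{P^{\ell\ast}_{i,j}\}\in S$ for the term $d_\mathcal{G}(A^\ell_{1:n,n+1})$ on the right, writing $s^\ast_{i,j}=s(A_i,A_j,P^\ast_{i,j})$ and $s^{\ell\ast}_{i,j}=s(A_i,A_j,P^{\ell\ast}_{i,j})$. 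The identities \eqref{eq:p2} and the $P$-score inequalities \eqref{eq:p3}--\eqref{eq:p4} generalize verbatim from the $n=4$ case, since they use only the $P$-score axioms and the closure of $S$ under inversion.

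The crux is to exhibit a single feasible (consistent) matching for the left-hand problem whose cost is controlled by the right-hand side. I would choose a fixed-point-free permutation $\pi$ of $[n]$ --- the cyclic shift $\pi(1)=n$, $\pi(i)=i-1$ for $i\geq 2$, which is the natural generalization of \eqref{eq:4-3} --- and set $\Gamma_i = P^{\pi(i)\ast}_{i,n+1}$. Because $\pi(i)\neq i$, graph $i$ is active in the $\pi(i)$-th right-hand problem, so each $\Gamma_i$ is well defined; and by \lemref{lem1} the matrices $\Gamma_{i,j}=\Gamma_i\Gamma_j^{-1}$ form an element of $S$, hence a feasible point. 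Optimality of $P^\ast$ together with \lemref{th:d_G_compact_sum} then gives $d_\mathcal{G}(A_{1:n}) = \sum_{i<j}s^\ast_{i,j} \leq \sum_{i<j} s(A_i,A_j,\Gamma_i\Gamma_j^{-1})$, and applying the $P$-score triangle inequality \eqref{s2} through the pivot $A_{n+1}$, followed by symmetry \eqref{eq:p2}, bounds each summand by $s^{\pi(i)\ast}_{i,n+1}+s^{\pi(j)\ast}_{j,n+1}$.

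What remains --- and what I expect to be the main obstacle --- is the combinatorial bookkeeping that reassembles these pairwise bounds into the right-hand side; this is exactly the role played by Fig.~\ref{fig1} in the $n=4$ case, but for general $n$ it needs a clean counting argument rather than a picture. Summing over all pairs, each term $s^{\ell\ast}_{m,n+1}$ with $m=\pi^{-1}(\ell)$ appears exactly $n-1$ times, so the bound becomes $(n-1)\sum_{\ell} s^{\ell\ast}_{\pi^{-1}(\ell),n+1}$. I would then absorb these copies into $d_\mathcal{G}(A^\ell_{1:n,n+1})=\sum_{i<j,\,i,j\neq\ell} s^{\ell\ast}_{i,j}$ one $\ell$ at a time: keep one copy of $s^{\ell\ast}_{m,n+1}$ as the direct edge, and bound each of the remaining $n-2$ copies by a two-edge path $s^{\ell\ast}_{m,k}+s^{\ell\ast}_{k,n+1}$ via \eqref{eq:p4}, letting $k$ range over the $n-2$ remaining active nodes $[n+1]\setminus\{\ell,m,n+1\}$. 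The point to verify carefully is that the edges $\{m,n+1\}$, $\{m,k\}$ and $\{k,n+1\}$ produced this way are pairwise distinct edges of the complete graph on the active set $[n+1]\setminus\{\ell\}$, so that nonnegativity of $s$ makes their total at most $d_\mathcal{G}(A^\ell_{1:n,n+1})$, yielding $(n-1)s^{\ell\ast}_{m,n+1}\leq d_\mathcal{G}(A^\ell_{1:n,n+1})$. Summing over $\ell$ and chaining the inequalities completes the proof; I would double-check that the $\binom{n}{2}$ available edges suffice, i.e. $2n-3\leq\binom{n}{2}$ for $n\geq 3$, which is the only place the hypothesis $n\geq 3$ enters.
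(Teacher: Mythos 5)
Your proposal is correct, and it follows the paper's own skeleton up to and including the key construction: the same cyclic choice $\Gamma_i=P^{i-1\ast}_{i,n+1}$ (with $\Gamma_1=P^{n\ast}_{1,n+1}$), the same appeal to \lemref{lem1} and \lemref{th:d_G_compact_sum} to get $\sum_{i<j}s^{\ast}_{i,j}\leq\sum_{i<j}s(A_i,A_j,\Gamma_i\Gamma_j^{-1})$, and the same application of \eqref{s2} through the pivot $A_{n+1}$ to reach $\sum_{i<j}\bigl(s^{\pi(i)\ast}_{i,n+1}+s^{\pi(j)\ast}_{j,n+1}\bigr)$. Where you genuinely diverge is the final bookkeeping. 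The paper partitions the index pairs into the three classes $I_1$, $I_2$, $I_3$, applies a different triangle-inequality decomposition on each class, and then must verify term-by-term (thirteen nontrivial ``coincidence'' checks) that the resulting terms are pairwise distinct and all appear in $\sum_{\ell}d_\mathcal{G}(A^\ell_{1:n,n+1})$. You instead regroup by node rather than by pair: each $s^{\ell\ast}_{\pi^{-1}(\ell),n+1}$ occurs exactly $n-1$ times, and you absorb all $n-1$ copies into the single term $d_\mathcal{G}(A^\ell_{1:n,n+1})$ by keeping one direct edge and routing the other $n-2$ copies through two-edge paths $\{m,k\},\{k,n+1\}$ over the $n-2$ remaining active nodes $k$, via \eqref{eq:p4}. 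Your distinctness check for the $2n-3$ edges used is correct ($\{m,k\}=\{k',n+1\}$ is impossible since $m,k\neq n+1$), so nonnegativity of $s$ gives $(n-1)s^{\ell\ast}_{\pi^{-1}(\ell),n+1}\leq d_\mathcal{G}(A^\ell_{1:n,n+1})$, and summing over $\ell$ (using that $\ell\mapsto\pi^{-1}(\ell)$ is a bijection, so each right-hand problem is charged exactly once) closes the argument. This is a cleaner and more transparent route than the paper's case analysis, and it scales to general $n$ without any enumeration; the only hypotheses it uses beyond the paper's are that $\pi$ is a fixed-point-free permutation (so each $\Gamma_i$ is well defined) and that the intermediate nodes $k$ avoid $\ell$ (so $s^{\ell\ast}_{m,k}$ and $s^{\ell\ast}_{k,n+1}$ are defined), both of which you handle. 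One cosmetic remark: the inequality $2n-3\leq\binom{n}{2}$ is automatic once the edges are shown distinct, and your construction degenerates gracefully at $n=2$, so no separate small-$n$ case is needed.
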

\begin{proof}

We now show that $d_\mathcal{G}$ satisfies \eqref{h4}, i.e.,
\begin{equation}\label{eq:G_tri_1}
d_\mathcal{G}(A_{1:n}) \leq \sum_{\ell=1}^{n} d_\mathcal{G}(A^i_{1:n,n+1}).
\end{equation}

Let $P^\ast = \{P_{i,j}^{\ast}\} \in S$ be an optimal value for $P$ in the optimization problem corresponding to the l.h.s of~\eqref{eq:G_tri_1}. 
Henceforth, just like Section \ref{sec:n=4}, we {use} $s^{\ast}_{i,j} = s(A_i,A_j,P^{\ast}_{i,j})$ for all $i,j \in~[n]$.
Note that according to \eqref{s3} and \eqref{s1}, we have $s^{\ast}_{i,i} = 0$, and $s^{\ast}_{i,j} = s^{\ast}_{j,i}$, respectively.
From \eqref{eq:th:d_G_compact_sum}, we have,
\begin{equation}\label{eq:G_tri_2}
d_\mathcal{G}(A_{1:n}) = \sum\limits_{{i,j \in [n],~i<j}
} s(A_i,A_j,P_{i,j}^{\ast}) = \sum\limits_{{i,j \in [n],~i<j}
} s^{\ast}_{i,j}.
\end{equation}

Let $P^{k\ast} =\{ P_{i,j}^{k\ast}\}\in S$ be an optimal value for 
$P$ in the optimization problem associated to $d_\mathcal{G}(A^i_{1:n,n+1})$ on the r.h.s of~\eqref{eq:G_tri_1}.
{Henceforth, just like Section \ref{sec:n=4}, we use} $s^{\ell*}_{i,j} = s(A_i,A_j,P^{\ell*}_{i,j})$ for all $i,j \in [n+1], \ell\in[n]\backslash\{i,j\}$.
Note that $s^{\ell\ast}_{i,i}= 0$, and $s^{\ell\ast}_{i,j} = s^{\ell\ast}_{j,i}$.
From \eqref{eq:th:d_G_compact_sum}, we can write,
\begin{equation}\label{eq:G_tri_3}
\sum_{\ell=1}^{n} d_\mathcal{G}(A^i_{1:n,n+1}) = \sum\limits_{\ell=1}^{n} \sum\limits_{\substack{{i,j \in [n+1],~{i<j}}\\\ell \notin \{i,j\}}}
s^{\ell\ast}_{i,j}.
\end{equation}
We will show that,
\begin{equation}\label{eq:G_tri_4}
\sum\limits_{\substack{{i,j \in [n],~i<j}}
} s^{\ast}_{i,j} \leq
\sum\limits_{\ell=1}^{n} \sum\limits_{\substack{{i,j \in [n+1],~{i<j}}\\\ell \notin \{i,j\}}} s^{\ell*}_{i,j}.
\end{equation}
From the definition of $d_\mathcal{G}$ in Lemma~\ref{th:d_G_compact_sum},
\begin{equation}\label{eq:G_tri_5}
\sum\limits_{{{i,j \in [n]},~i<j}
} s^{\ast}_{i,j} \leq
\sum\limits_{\substack{{i,j \in [n]},~i<j}
} s(A_i,A_j,\Gamma_{i,j}),
\end{equation}
for any matrices $\{\Gamma_{i,j}\}_{i,j \in [n]}$ in $S$, where $S$ satisfies Definition \ref{def7}.
Hence, from Lemma~\ref{lem1}, we also know that
\begin{equation}\label{eq:G_tri_6}
\sum\limits_{\substack{{i,j \in [n]},~i<j}} s^{\ast}_{i,j} \leq
\sum\limits_{\substack{{i,j \in [n]},~i<j}} s(A_i,A_j,\Gamma_{i} \Gamma_{j}^{-1}),
\end{equation}
for any invertible matrices $\{\Gamma_{i}\}_{i \in [n]}$ in $\mathcal{P}$.

Consider the following choice for $\Gamma_i$ :
\begin{align}
&\Gamma_i = P_{i,n+1}^{i-1\ast},\qquad 2\leq i \leq n, \label{eq:G_tri_7}\\
&\Gamma_1 = 
P_{1,n+1}^{n\ast}.\label{eq:i1}
\end{align}

\begin{remark}\label{rmk:i_modulus_n}
To simplify notation, we will just use $\Gamma_i = P_{i,n+1}^{i-1\ast}$ for all $i\in[n]$. It is assumed that when we writing
$P_{i,j}^{\ell\ast}$, the index in superscript satisfies $\ell=0 \Leftrightarrow \ell=n$.
\end{remark}

Note that since {$P^{i-1*} \in S$}, then $\Gamma_i=P_{i,n+1}^{i-1\ast}$ is invertible and belongs to $\mathcal{P}$.
Using \eqref{eq:G_tri_7} to replace $\Gamma_{i}$ and $\Gamma_j$ in \eqref{eq:G_tri_6}, and the fact that $(P_{j,n+1}^{j-1\ast})^{-1} = {P_{n+1,j}^{j-1\ast}}$, along with property \eqref{s2} of the $P$-score $s$, we have
\begin{eqnarray*}\label{eq:G_tri_8}
\sum\limits_{\substack{{i,j \in [n]}\\i<j}} s(A_i,A_j,\Gamma_{i} \Gamma_{j}^{-1})&=&\sum\limits_{\substack{{i,j \in [n]}\\i<j}} s(A_i,A_j,P_{i,n+1}^{i-1\ast} {P_{n+1,j}^{j-1\ast}})\\
&\leq&
\sum\limits_{\substack{{i,j \in [n]}\\i<j}} 
s^{i-1\ast}_{i,n+1} + s^{j-1\ast}_{n+1,j}.
\end{eqnarray*}
We now show that
\begin{align}\label{eq:G_tri_9}
&\sum\limits_{\substack{{i,j \in [n]}\\i<j}} s^{i-1\ast}_{i,n+1} + s^{j-1\ast}_{n+1,j} \leq \sum\limits_{\ell=1}^{n} \sum\limits_{\substack{i,j \in [n+1],~i<j\\{\ell \notin \{i,j\}}}} s^{\ell\ast}_{i,j},
\end{align}
which will prove \eqref{eq:G_tri_4} and complete the proof of the generalized triangle inequality for $d_\mathcal{G}$.

To this end, let $I_1 = \{(i,j)\in[n]^2: i<j,~j-1 = i\}$, 
$I_2 = \{(i,j)\in[n]^2:i=1,j=n\}$, 
$I_3 = \{(i,j)\in[n]^2:i<j,~j-1  \neq i \text{ and } (i,j) \neq (1,n)\}$.
We will make use of the following three inequalities, which follow directly from  property \eqref{s2} of the $P$-score $s$.
\begin{align}
&\sum\limits_{\substack{{(i,j)\in I_1 }}} s^{i-1\ast}_{i,n+1} \leq \sum\limits_{\substack{{(i,j)\in I_1 }}} s^{i-1\ast}_{i,j} + s^{i-1\ast}_{j,n+1}.\label{eq:G_tri_10}\\
&\sum\limits_{\substack{{(i,j)\in I_2 }}} s^{j-1\ast}_{n+1,j} \leq \sum\limits_{\substack{{(i,j)\in I_2 }}} s^{j-1\ast}_{n+1,i} + s^{j-1\ast}_{i,j}.\label{eq:G_tri_11}\\
&\sum\limits_{\substack{{(i,j)\in I_3}}} 
s^{i-1\ast}_{i,n+1} + s^{j-1\ast}_{n+1,j} \leq
\sum\limits_{\substack{{(i,j)\in I_3 }}} \Big(s^{i-1\ast}_{i,j} + s^{i-1\ast}_{j,n+1} +\nonumber\\
&~~s^{j-1\ast}_{n+1,i} + s^{j-1\ast}_{i,j}\Big).\label{eq:G_tri_12}
\end{align}
Since $I_1$, $I_2$ and  $I_3$ are pairwise disjoint, we have
\begin{align}\label{eq:G_tri_13}
&~~\sum\limits_{{{i,j \in [n]}}} (\cdot) =
\sum\limits_{{{(i,j)\in I_1}}} (\cdot)+\sum\limits_{{{(i,j)\in I_2}}} (\cdot)+\sum\limits_{{{(i,j)\in I_3}}}(\cdot).
\end{align}
Using  \eqref{eq:G_tri_10}-\eqref{eq:G_tri_12}, and \eqref{eq:G_tri_13} we have
\begin{align}\label{eq:G_tri_14}
&\hspace{-0.5mm} \sum\limits_{\substack{{i,j \in [n]},~i<j}} \hspace{-4mm} s^{i-1\ast}_{i,n+1} + s^{j-1\ast}_{n+1,j}\leq \hspace{-2mm} \sum\limits_{\substack{{(i,j)\in I_1 }}} \hspace{-2mm} s^{i-1\ast}_{i,j} + s^{i-1\ast}_{j,n+1} + s^{j-1\ast}_{n+1,j}+\nonumber\\
&~~\sum\limits_{\substack{{(i,j)\in I_2 }}} s^{i-1\ast}_{i,n+1} + s^{j-1\ast}_{n+1,i} + s^{j-1\ast}_{i,j}+\nonumber\\
&~~\sum\limits_{\substack{{(i,j)\in I_3 }}} s^{i-1\ast}_{i,j} + s^{i-1\ast}_{j,n+1} + s^{j-1\ast}_{n+1,i} + s^{j-1\ast}_{i,j}.
\end{align}
%
%
%
To complete the proof, we show that the r.h.s of \eqref{eq:G_tri_14} is less than, or equal to\vspace{-4mm}
\begin{equation}\label{eq:G_tri_15}
\sum\limits_{\ell=1}^{n} \sum\limits_{\substack{{i,j \in [n+1],~i<j}\\{\ell \notin \{i,j\}}}} s_{i,j}^{\ell\ast}.
\end{equation}
To establish this, we show that each term on the r.h.s of \eqref{eq:G_tri_14} is: (i) not repeated; and (ii) is included in \eqref{eq:G_tri_15}.

\begin{definition}
We call two $P$-scores, $s_{a_1,b_1}^{c_1\ast}$ and $ s_{a_2,b_2}^{c_2\ast}$, \emph{coincident}, and denote it by 
$s_{a_1,b_1}^{c_1\ast} \sim s_{a_2,b_2}^{c_2\ast}$,  if and only if $c_1 = c_2$, \emph{and} $\{a_1,b_1\}=\{a_2,b_2\}$.
\end{definition}
\vspace{-1mm}
Checking (i) amounts to verifying that there are no  coincident terms on the r.h.s. of \eqref{eq:G_tri_14}. Checking (ii) amounts to verifying that for each $P$-score $s_{a_1,b_1}^{c_1\ast}$ on the r.h.s. of \eqref{eq:G_tri_14}, there exists a $P$-score 
$s_{a_2,b_2}^{c_2\ast}$ in \eqref{eq:G_tri_15} such that ~$s_{a_1,b_1}^{c_1\ast} \sim s_{a_2,b_2}^{c_2\ast}$.

Note that the r.h.s of \eqref{eq:G_tri_14} consists of three summations. 
To verify (i), we first compare the terms within each summation, and then compare the terms among different summations.
Consider the first summation on the r.h.s of \eqref{eq:G_tri_14}. We have $s^{i-1\ast}_{i,j} \not \sim s^{i-1\ast}_{j,n+1}$ because $i \in [n]$ and therefore $i \neq n+1$. We have 
$s^{i-1\ast}_{i,j} \not \sim s^{j-1\ast}_{n+1,j}$ because $i-1 \neq j-1$ in this case, since $i<j$. We can similarly infer that $s^{i-1\ast}_{j,n+1} \not \sim s^{j-1\ast}_{n+1,j}$. 

Now consider the second summation on the r.h.s of \eqref{eq:G_tri_14}. 
Taking the definition of $I_2$ and \eqref{eq:i1} into account, we can rewrite this summation as,
\begin{align}\label{eq:G_tri_16}
s^{n\ast}_{1,n+1} + s^{n-1\ast}_{n+1,1} + s^{n-1\ast}_{1,n}.
\end{align}
Since $n \neq n-1$, we have  $s^{n\ast}_{1,n+1} \not \sim s^{n-1\ast}_{n+1,1}$, and $s^{n\ast}_{1,n+1} \not \sim s^{n-1\ast}_{1,n}$. Also, since $n \neq n+1$ we have $s^{n-1\ast}_{n+1,1} \not \sim s^{n-1\ast}_{1,n}$.

Finally, consider the third summation on the r.h.s of \eqref{eq:G_tri_14}. Since $i<j$, by comparing the superscripts we immediately see that the first and second terms in the summation cannot be equal to either the third or the forth term. On the other hand, since $n+1\neq i\in[n]$ and $ n+1 \neq j \in[n]$, we have
$s^{i-1\ast}_{i,j} \not \sim s^{i-1\ast}_{j,n+1}$, and
$s^{j-1\ast}_{n+1,i} \not \sim s^{j-1\ast}_{i,j}$, respectively. 

We proceed by showing that  the summands are not coincident among three summations. We first make the following observations:

{\bf{Observation $\bf1$}}: since in all summations $i,j \in [n]$, we have $i \neq n+1,~j \neq n+1$, and therefore 
each term with $n+1$ in the subscript is not coincident with any  term with $\{i,j\}$ in the subscript, e.g., on the r.h.s of \eqref{eq:G_tri_14}, the first terms in the first and second summations cannot be coincident.

{\bf{Observation $\bf2$}}: since $I_1$, $I_2$ and $I_3$ are pairwise disjoint, any two terms from different summations with the same indices cannot be coincident, e.g.,  on the
r.h.s  of \eqref{eq:G_tri_14}, the third term in the second summation cannot be coincident with the third term in third summation.

Considering the above observations, the number of pairs we need to compare reduces from $3\times 7+3\times 4=33$ (in \eqref{eq:G_tri_14}) pairs to only $13$ pairs, whose distinction may not seem trivial. To be specific, Obs. 1, excludes $16$ comparisons and Obs. 2 excludes $4$ comparisons. We now rewrite the r.h.s of 
\eqref{eq:G_tri_14} as
\begin{align}\label{eq:G_tri_17}
&~\sum\limits_{\substack{{(i,j)\in I_1 }}}~s^{i-1\ast}_{i,j} + s^{i-1\ast}_{j,n+1} + s^{j-1\ast}_{n+1,j}+\nonumber\\[-3\jot]
&~~~~~~~~~~~s^{n\ast}_{1,n+1} + s^{n-1\ast}_{n+1,1} + s^{n-1\ast}_{1,n}+\nonumber\\
&~\hspace{-0.5mm}\sum\limits_{\substack{{(i',j') \in I_3 }}} s^{i'-1\ast}_{i',j'} + s^{i'-1\ast}_{j',n+1} + s^{j'-1\ast}_{n+1,i'} + s^{j'-1\ast}_{i',j'}.
\end{align}
\vspace{-1mm}
In what follows, we discuss the non-trivial comparisons, and refer to the first, second and third summations in \eqref{eq:G_tri_17} as $\Sigma_1$, $\Sigma_2$, and $\Sigma_3$, respectively.
\begin{enumerate}
\item $s^{i-1\ast}_{i,j}$ in $\Sigma_1$ vs. $s^{n-1\ast}_{1,n}$ in $\Sigma_2$: for these two terms to be coincident we need $i=n$. We also need $\{n,j\}=\{1,n\}$, i.e., $j=1$, which cannot be true, since in $S_1$ we have $i=j-1$ according to~$I_1$.
\item {$s^{i-1\ast}_{i,j}$ in $\Sigma_1$ vs. $s^{j'-1\ast}_{i',j'}$ in $\Sigma_3$:} since $(i,j) \in I_1 = \{(i,j)\in[n]^2: i<j,~j-1 = i\}$, we have $j=i+1$. Thus, we can write the first term as $s^{i-1\ast}_{i,i+1}$. For the two terms to be coincident, their superscripts must be the same so 
 $i=j'$. On the other hand, for their subscripts to match, we need $j =i+1= i'$. The last two equalities imply that $i' = j'+1$, which contradicts $(i',j')\in I_3$.
\item $s^{i-1\ast}_{j,n+1}$ in $\Sigma_1$ vs. $s^{n\ast}_{1,n+1}$ in $\Sigma_2$: for the superscripts to match, we need $i=1$. We also need $j=1$ for the equality of subscripts, which cannot be true since $i < j$.
\item $s^{i-1\ast}_{j,n+1}$ in $\Sigma_1$ vs.
$s^{n-1\ast}_{n+1,1}$ in $\Sigma_2$: we need $i=n$ for the equality of superscripts, and $j=1$ for the equality of subscripts, which cannot be true since $(i,j) \in I_1$, and therefore $i=j-1$.
\item $s^{i-1\ast}_{j,n+1}$ in $S_1$ vs. $s^{j'-1\ast}_{n+1,i'}$ in $S_3$:  we can write the first term as $s_{i+1,n+1}^{i-1}$. 
The equality of superscripts requires $i=j'$. The equality of subscripts requires $i'=i+1$. Therefore, $i'=j'+1$, which contradicts $(i',j')\in I_3$.
\item $s^{j-1\ast}_{n+1,j}$ in $\Sigma_1$ vs.
$s^{n\ast}_{1,n+1}$ in $\Sigma_2$: the equality of superscripts requires $j=1$, which is impossible since $j > i \in [n]$.
\item $s^{j-1\ast}_{n+1,j}$ in $\Sigma_1$ vs.
$s^{n-1\ast}_{n+1,1}$ in $\Sigma_2$: for the equality of superscripts, we need $j=n$, in which case the subscripts will not match, since $\{n+1,n\}\neq\{n+1,1\}$.
\item $s^{j-1\ast}_{n+1,j}$ in $\Sigma_1$ vs.
$s^{i'-1\ast}_{j',n+1}$ in $\Sigma_3$: the equality of superscripts requires $i'=j$. The equality of the subscripts requires $j'=j$. The two equalities imply that $i' = j'$, which contradicts $i'<j'$.
\item $s^{j-1\ast}_{n+1,j}$ in $\Sigma_1$ vs.
$s^{j'-1\ast}_{n+1,i'}$ in $\Sigma_3$: the equality of superscripts requires $j'=j$. The equality of the subscripts requires  $i'=j$. The two equalities imply that  $i' = j'$, which contradicts $i'<j'$.
\item $s^{n\ast}_{1,n+1}$ in $\Sigma_2$ vs. $s^{i'-1\ast}_{j',n+1}$ in $\Sigma_3$: for the equality of superscripts, we need $i'=1$, and for the equality of subscripts, we need $j'=1$. This contradicts $i' < j'$.
\item $s^{n\ast}_{1,n+1}$ in $\Sigma_2$ vs.
$s^{j'-1\ast}_{n+1,i'}$ in the $\Sigma_3$: for equality of superscripts, we need $j'=1$. For the equality of subscripts, we need $i'=1$, which contradicts $i'\neq j'$.
\item $s^{n-1\ast}_{n+1,1}$ in $\Sigma_2$ vs. 
$s^{i'-1\ast}_{j',n+1}$ in $\Sigma_3$: for equality of superscripts, we need $i'=n$. For the equality of subscripts, we need $j'=1$, which contradicts $i' < j'$.
\item $s^{n-1\ast}_{1,n}$ in $\Sigma_2$ vs.
$s^{i'-1\ast}_{i',j'}$ in $\Sigma_3$: for the equality of superscripts, we need $i'=n$. This in turn requires $j'=1$ for the equality of subscripts, which contradicts $i'<j'$.
\end{enumerate}
%
%
What is left to show is (ii), i.e., that all terms in \eqref{eq:G_tri_17} are included in the summation in \eqref{eq:G_tri_15}. To this aim, we will show that for each $s_{a,b}^{c}$ in \eqref{eq:G_tri_17}, the indices $\{a,b,c\}$ satisfy 
\vspace{-2mm}
\begin{equation}\label{eq:proof_prop_ii}
c \in [n], a,b \in [n+1]\backslash\{c\} \text{ and } a \neq b,
\end{equation}
which is enough to prove that either $s_{a,b}^{c}$ or $s_{b,a}^{c}$ exist in~\eqref{eq:G_tri_15}.

We first note that the superscripts in \eqref{eq:G_tri_17} are in $[n]$, see Remark \ref{rmk:i_modulus_n}.
Moreover, all the subscripts in \eqref{eq:G_tri_17} are either $1$, $n+1$, or $i,j,i',j' \in [n]$.
Thus, for any $s_{a,b}^{c}$ in \eqref{eq:G_tri_17}, we have $a,b \in [n+1]$.
Also note that, for any $s_{a,b}^{c}$ in \eqref{eq:G_tri_17}, we have $a \neq b$, since the definition of $I_1$, $I_2$ and $I_3$ implies that $i<j$, $i'<j'$ and $i,j,i',j'< n+1$. Therefore, all we need to verify is that for any $s_{a,b}^{c}$ in \eqref{eq:G_tri_17}, $a \neq c$ and $b \neq c$.

We start with the first summation, where the first term is $s_{i,j}^{i-1\ast}$. Clearly $i \neq i-1$ and $j \neq i-1$, from the definition of~$I_1$. In the second term, $s_{j,n+1}^{i-1\ast}$, $j \neq i-1$, from the definition of $I_1$, and $i-1 \neq n+1$, because otherwise 
$i=n+2\notin [n]$. In the third term, $s_{n+1,j}^{j-1\ast}$, we have $n+1 \neq j \in [n]$. Moreover, clearly $j \neq j-1$.

For any term $s_{a,b}^{c}$, in the second summation, we clearly see in \eqref{eq:G_tri_17} that $a \neq c$ and $b \neq c$. 

We now consider the last summation in \eqref{eq:G_tri_17}. In the first term, $s^{i'-1\ast}_{i',j'}$, clearly $i' \neq i'-1$. Moreover, $i'-1<i'<j'$, since $(i',j')\in I_3$. 
In the second term, $s^{i'-1\ast}_{j',n+1}$, $j' \neq i'-1$,
because since $i'-1<i'<j'$. Moreover, $n+1 \neq i'-1$ because otherwise $i'=n+2 \not\in [n]$.
In the third term, $s^{j'-1\ast}_{n+1,i'}$, we have $n+1 \neq j'-1$ because otherwise $ j' = n+2 \not \in [n]$. On the other hand, $i' \neq j'-1$ since $(i',j') \in I_3$.
In the fourth term, $s^{j'-1\ast}_{i',j'}$, we have $i' \neq j'-1$ since $(i',j') \in I_3$. Also, clearly $j' \neq j'-1$. 
\end{proof}
%
%
%
%
\section{Proof of Theorem~\ref{thm:5}}\label{App_thm5}
To show that $d_F$ is a pseudo $n$-metric, it suffices to show that $d$ is a pseudometric, and evoke Theorem \ref{thm1}. To show that $d$ is a pseudometric, we can evoke {Theorem $3$}
in~\cite{bento2018family}.

To show that $d_\mathcal{G}$ is a pseudo $n$-metric, it suffices to show that $s$ is a $P$-score, and evoke Theorem~\ref{thm2}. Clearly, $s$ is non-negative, and also $s(A,A,I)=0$. 
Recall that, if $P$ is orthogonal
then, for any matrix $M$, we have {$\vvvert PM \vvvert=\vvvert MP \vvvert=\vvvert M \vvvert$}. Thus,
\begin{align*}
s(A,B,P) &= \vvvert AP-PB \vvvert=\vvvert P^{-1}(AP-PB)P^{-1} \vvvert\\
&=\vvvert P^{-1}A-BP^{-1} \vvvert=s(B,A,P^{-1}).
\end{align*}
Finally, for any $P,P'\in\mathcal{P}$,
\begin{align*}
&s(A,B,PP')=\vvvert APP'-PP'B \vvvert = \\
&\vvvert APP'-PCP'+PCP'-PP'B \vvvert \leq \\ &\vvvert APP'-PCP' \vvvert+\vvvert PCP'-PP'B \vvvert =\\
&\vvvert AP-PC \vvvert+\vvvert CP'-P'B \vvvert=\\
&s(A,C,P)+s(C,B,P'). \qedhere
\end{align*}
\vspace{-4mm}
%
%
%
\vspace{-0.6cm}
\section{Proof of Theorem~\ref{th:orthogonal_form_for_df}}\label{App_thm6}
The proof uses the following lemmas by \cite{hoffman1953variation} and \cite{bento2018family}.
\begin{lemma}\label{th:ortho_matrix_norm_mult} For any matrix $M\in\mathbb{R}^{m\times m}$, 
and any orthogonal matrix $P\in\mathbb{R}^{m \times m}$,
{we have that $\vvvert PM \vvvert = \vvvert MP \vvvert = \vvvert M \vvvert$.}
\end{lemma}
\begin{lemma}\label{th:aux_lemma_eigen_values}
Let $\vvvert \cdot \vvvert$ be the Frobenius norm.
If $A$ and $B$ are Hermitian matrices
with eigenvalues $a_1 \leq a_2 \leq ...\leq a_m$ and $b_1 \leq b_2 \leq ...\leq b_m$ then 
\begin{equation}
{\vvvert A-B \vvvert} \geq \sqrt{{\sum_{i \in [m]} (a_i - b_i)^2}}.
\end{equation}
\end{lemma}
\begin{lemma}\label{th:aux_lemma_eigen_values_op_norm}
Let $\vvvert \cdot \vvvert$ be the operator $2$-norm.
If $A$ and $B$ are Hermitian matrices
with eigenvalues $a_1 \leq a_2 \leq ...\leq a_m$ and $b_1 \leq b_2 \leq ...\leq b_m$ then 
\begin{equation}
{\vvvert A-B \vvvert} \geq \max_{i\in[m]} |a_i - b_i|.
\end{equation}
\end{lemma}
\vspace{-2mm}
We also need the following result.\vspace{-2mm}
\begin{corollary}\label{th:better_to_sort_first}
If $a\in\mathbb{R}^m$, with $a_1\leq a_2 \leq \dots \leq a_m$, $b\in\mathbb{R}^m$, with $b_1\leq b_2 \leq \dots \leq b_m$, and $P\in \mathbb{R}^{m\times m}$ is a permutation matrix, then
\begin{equation}
\|a - b\| \leq \|a - Pb\|.
\end{equation}
%
\end{corollary}
\begin{proof}
This follows directly from Lemma \ref{th:aux_lemma_eigen_values} and Lemma \ref{th:aux_lemma_eigen_values_op_norm} by letting $A$ and $B$ be diagonal matrices with  $a$ and $Pb$ in the diagonal, respectively.\vspace{-1mm}
\end{proof}

We now proceed with the proof of Theorem \ref{th:orthogonal_form_for_df}.
Let $A_i = U_i \text{diag}(\Lambda_{A_i}) U^{-1}_i$ and $C = V \text{diag}(\Lambda_{C}) V^{-1}$ be the eigendecomposition of the real and symmetric matrices $A_i$ and $C$, respectively. 
The eigenvalues in the  vectors $\Lambda_{A_i} $ and $\Lambda_{C} $ are ordered in increasing order, and $U_i$ and $V$ are orthonormal matrices.
Using Lemma \ref{th:ortho_matrix_norm_mult}, we have that
\begin{align}\label{eq:proof_for_ortho_for_df}
&{\vvvert A_i P_{i} - P_{i} C \vvvert}=\vvvert (A_i  - P_{i} C (P_{i})^{-1})P_{i} \vvvert\\
&=\vvvert A_i  - P_{i} C (P_{i})^{-1} \vvvert \nonumber\\
&=\vvvert U_i(\text{diag}(\Lambda_{A_i})  - U^{-1}_i P_{i} C (P_{i})^{-1}U_i)U^{-1}_i \vvvert \nonumber\\
&=\vvvert \text{diag}(\Lambda_{A_i})  - U^{-1}_i P_{i} C (P_{i})^{-1}U_i \vvvert \geq {\| \Lambda_{A_i} - \Lambda_C \|} \nonumber,
\end{align}
where the last inequality follows from Lemma \ref{th:aux_lemma_eigen_values} or Lemma \ref{th:aux_lemma_eigen_values_op_norm} (depending on the norm). 

It follows from \eqref{eq:proof_for_ortho_for_df} that 
$
d_F(A_{1:n}) \geq \min_{\Lambda_C\in\mathbb{R}^m:(\Lambda_C)_i \leq (\Lambda_C)_{i+1}} \sum^n_{i=1} \|\Lambda_{A_i}-\Lambda_C\|=\min_{\Lambda_C\in\mathbb{R}^m} \sum^n_{i=1} \|\Lambda_{A_i}-\Lambda_C\|$,
where the last equality follows from Corollary \ref{th:better_to_sort_first}.

Finally, notice that, by the equalities in \eqref{eq:proof_for_ortho_for_df}, we have
\begin{align}
&d_F(A_{1:n}) = \min_{P\in\mathcal{P}^n,C\in\Omega} \sum^n_{i=1} {\vvvert \text{diag}(\Lambda_{A_i})  - U^{-1}_i P_{i} C (P_{i})^{-1}U_i\vvvert}\nonumber\\
&\leq {\vvvert\text{diag}(\Lambda_{A_i}) - \text{diag}(\Lambda_{C})\vvvert},
\end{align}
where the inequality follows from upper bounding $\min_{C\in\Omega} (\cdot)$ with the particular choice of 
$C=P_{i}^{\top}U_i\text{diag}(\Lambda_{C})U^{\top}_i P_{i} \in\Omega$.

{Since $\vvvert\text{diag}(\Lambda_{A_i}) - \text{diag}(\Lambda_{C})\vvvert_{\text{Frobenius}} = \|\Lambda_{A_i} - \Lambda_{C}\|_{\text{Eucledian}}$ and $\vvvert \text{diag}(\Lambda_{A_i}) - \text{diag}(\Lambda_{C})\vvvert_{\text{operator}} = \|\Lambda_{A_i} - \Lambda_{C}\|_{\text{$\infty$-norm}}$, the proof follows.}
%
%
%
%
%
\section{Proof of Theorem~\ref{thm:7}}\label{App_thm7}
Let $A_i = U_i \text{diag}(\Lambda_{A_i}) U^{-1}_i$ be the eigendecomposition of the real and symmetric matrix $A_i$. 
The eigenvalues in the  vector $\Lambda_{A_i} $ are ordered in increasing order, and $U_i$ is an orthonormal matrix.
Using Lemma \ref{th:ortho_matrix_norm_mult}, we get
\begin{align}\label{eq:proof_for_ortho_for_d_G}
&{\vvvert A_i P_{i,j} - P_{i,j} A_j \vvvert=\vvvert (A_i  - P_{i,j} A_j (P_{i,j})^{-1})P_{i,j} \vvvert}\\
&=\vvvert A_i  - P_{i,j} A_j (P_{i,j})^{-1} \vvvert \nonumber\\
&=\vvvert U_i(\text{diag}(\Lambda_{A_i})  - U^{-1}_i P_{i,j} A_j (P_{i,j})^{-1}U_i)U^{-1}_i \vvvert\nonumber\\
&=\vvvert\text{diag}(\Lambda_{A_i})  - U^{-1}_i P_{i,j} A_j(P_{i,j})^{-1}U_i\vvvert \hspace{-1mm}\geq \hspace{-1mm} \|\Lambda_{A_i} - \Lambda_{A_j}\|\nonumber,
\end{align}
where the last inequality follows from Lemma \ref{th:aux_lemma_eigen_values} or Lemma \ref{th:aux_lemma_eigen_values_op_norm} (depending on the norm). 

From \eqref{eq:proof_for_ortho_for_d_G} we have $d_\mathcal{G}(A_{1:n}) \geq  \frac{1}{2}\sum_{i,j\in[n]} \|\Lambda_{A_i}-\Lambda_{A_j}\|$.

At the same time, \textcolor{black}{$d_{\mathcal{G}}(A_{1:n}) = $}
\begin{align}
&\min_{P\in S} \frac{1}{2}\hspace{-1mm}\sum_{i,j\in[n]} \hspace{-1mm} \vvvert \text{diag}(\Lambda_{A_i})  - U^{-1}_i P_{i,j} A_j (P_{i,j})^{-1}U_i \vvvert\nonumber\\
&\leq {\vvvert\text{diag}(\Lambda_{A_i}) - \text{diag}(\Lambda_{A_j})\vvvert},
\end{align}
where the inequality follows from upper bounding $\min_{P\in S} (\cdot)$ by choosing
$P=\{P_{i,j}\}_{i,j\in[n]}$ such that $P_{i,j} = U_i U^{-1}_j$, which by Lemma \ref{lem1} implies that $P\in S$.

{Since $\vvvert\text{diag}(\Lambda_{A_i}) - \text{diag}(\Lambda_{A_j})\vvvert_{\text{Frobenius}} = \|\Lambda_{A_i} - \Lambda_{A_j}\|_{\text{Eucledian}}$ and $\vvvert \text{diag}(\Lambda_{A_i}) - \text{diag}(\Lambda_{A_j})\vvvert_{\text{operator}} = \|\Lambda_{A_i} - \Lambda_{A_j}\|_{\text{$\infty$-norm}}$, the proof follows.}

\section{Proof of Theorem \ref{th:quotient_1}}
\label{app:proofs_for_quotient_spaces_1}

{We first show that \eqref{eq:dF_equiv_class} is well defined.}
Let~$A'_i \in [A_i]$. Since  $d$ satisfies the triangle inequality
we have
\begin{align*}
&d'_F([A']_{1:n})=d_F(A'_{1:n}) = \min_{B\in {\Omega}} \sum_{i\in[n]}d(A'_i,B) \\
&\leq \min_{B\in \Omega} \sum_{i\in[n]}d(A'_i,A_i) + d(A_i,B) \nonumber = \min_{B\in \Omega}  \sum_{i\in[n]} d(A_i,B) \\
&{=d_F(A_{1:n})}=d'_F([A]_{1:n}),
\end{align*}
where in the last equality we used $d(A'_i,A_i)=0$, since $A'_i \in [A_i]$. Similarly, we can show that $d'_F([A]_{1:n}) \leq d'_F([A']_{1:n})$. It follows that $d'_F([A]_{1:n}) = d'_F([A']_{1:n})$, and hence \eqref{eq:dF_equiv_class} is well defined.

We now prove that $d'_F$ satisfies \eqref{h2-n}. Recall that, by Theorem \ref{thm1}, $d_F$ is a pseudo $n$-metric.
If $[A_1]=\dots=[A_n]$, then 
\begin{equation*}
d'_F([A]_{1:n})=d'_F([A_1],\dots,[A_1]) = d_F(A_1,\dots,A_1)=0,
\end{equation*}
%
%
since, $d_F$ is a pseudometric, and hence satisfies the property of self-identity \eqref{eq:self_identity}.

On the other hand, if $d'_F([A]_{1:n})=d_F(A_{1:n})=0$, then there exists $B\in\Omega$, such that $d(A_i,B)=0$ for all $i\in[n]$. Since $d$ is non-negative and symmetric, and also satisfies the triangle inequality, it follows that
\begin{align*}
0 \leq d(A_i,A_j) &\leq d(A_i,B) + d(B,A_j)\\
&= d(A_i,B) + d(A_j,B) = 0. 
\end{align*}
%
Hence, $[A_i] = [A_j]$ for all $i,j\in[n]$.
%


\section{Proof of Theorem \ref{th:quotient_2}}
\label{app:proofs_for_quotient_spaces_2}

In the proof,  {we let $S_2$ denote the set $S$ in definition \eqref{eq:def_of_S} for the distance $d$ on two graphs and we let $S_n$ denote the set $S$ in definition \eqref{eq:def_of_S} for the distance $d_\mathcal{G}$ on $n$ graphs.}

We first verify that \eqref{eq:dG_equiv_class} is well defined. Let $A'_i\in[A_i]$. Let $\{I,P^*_i,(P^*_i)^{-1}\}\in S_2$ be such that \vspace{-2mm}
\begin{align*}
{d_{\mathcal{G}_2}(A_i,A'_i)}\equiv \frac{1}{2}(&s(A_i,A_i,I)\hspace{-1mm}+\hspace{-1mm}s(A'_i,A'_i,I)+\nonumber\\
&s(A'_i,A_i,P^*_i)\hspace{-1mm}+\hspace{-1mm}s(A_i,A'_i,(P^*_i)^{-1}))=0.
\end{align*}

\vspace{-3mm}
Since $s$ is a $P$-score, $s(A'_i,A_i,P^*_i)=0$. 
For any $\tilde{P}=\{\tilde{P}_{i,j}\}_{i,j\in[n]}\in S$ we have $\{ P^*_i \tilde{P}_{i,j}(P^*_j)^{-1}\}_{i,j\in[n]} \in S$. Thus,
\vspace{-2mm}
\begin{align*}
d'_\mathcal{G}([A']_{1:n})&=d_\mathcal{G}(A'_{1:n})=\min_{P \in S}\frac{1}{2} \sum_{i,j\in[n]} s(A'_i,A'_j,P_{i,j}) \nonumber\\
&\leq \frac{1}{2} \sum_{i,j\in[n]} s(A'_i,A'_j, P^*_i \tilde{P}_{i,j}(P^*_j)^{-1}).
\end{align*}
By property \eqref{s2} and the fact that $s(A'_i,A_i, P^*_i )=s(A_i,A'_i, (P^*_i)^{-1} )=0$ for all $i\in[n]$, we can write
\begin{align*}
&\frac{1}{2} \hspace{-1mm} \sum_{i,j\in[n]} \hspace{-1mm} s(A'_i,A'_j, P^*_i \tilde{P}_{i,j}(P^*_j)^{-1}) \leq \frac{1}{2} \hspace{-1mm} \sum_{i,j\in[n]} \hspace{-1mm} \Big(s(A'_i,A_i, P^*_i )\nonumber\\
&+s(A_i,A_j, \tilde{P}_{i,j})+s(A_j,A'_j, (P^*_j)^{-1} \Big)=s(A_i,A_j, \tilde{P}_{i,j})\nonumber. 
\end{align*}
%

Taking the minimum of the r.h.s. of the above expression over $\tilde{P}$ we get $d'_\mathcal{G}([A']_{1:n}) \leq d_\mathcal{G}(A_{1:n}) =d'_\mathcal{G}([A]_{1:n}).$
%
Similarly, we can prove $d'_\mathcal{G}([A]_{1:n}) \leq d'_\mathcal{G}([A']_{1:n})$. It follows that $d'_\mathcal{G}([A]_{1:n}) = d'_\mathcal{G}([A']_{1:n})$, and hence \eqref{eq:dG_equiv_class} is well defined.

Now we show that $d'_{\mathcal{G}}$ satisfies \eqref{h2-n}.
Recall that, by Thm.~\ref{thm2}, $d_\mathcal{G}$ is a pseudo $n$-metric.
If $[A_1]=\dots=[A_n]$, then 
\begin{align*}
d'_G([A]_{1:n})=d'_\mathcal{G}([A_1],\dots,[A_1]) = d_\mathcal{G}(A_1,\dots,A_1)=0,
\end{align*}
since, $d_\mathcal{G}$ is a pseudometric, and hence satisfies the property of self-identity \eqref{eq:self_identity}.

On the other hand, if $d'_G([A]_{1:n})=d_\mathcal{G}(A_{1:n})=0$, then, for any $i,j\in[n]$, we have that $s(A_i,A_j,P_{i,j})=0$ for some $P_{i,j}$, and hence $d(A_i,A_j)=0$. This implies that $[A_i]=[A_j]$ for all $i,j\in~[n]$.

\section{Proof of Theorem \ref{thm:pos_def}}

The following lemma will be used later.
\begin{lemma}\label{lm:towards_tract}
Let $\Gamma_i \in \mathbb{R}^{m  \times m}$, $\vvvert \Gamma_i \vvvert_2 \leq 1$ for all $i \in [n]$. Let ${\bf P} \in \mathbb{R}^{nm\times nm}$ have $n^2$ blocks such that the $(i,j)$th block is $\Gamma_i\Gamma^\top_j$ if $i \neq j$, and $I$ otherwise. We have that ${\bf P} \succeq 0$, and that $\vvvert {\bf P} \vvvert_* \leq mn$.
\end{lemma}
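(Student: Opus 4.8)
The plan is to exhibit ${\bf P}$ as a sum of two manifestly positive semi-definite matrices, and then exploit the fact that for a symmetric PSD matrix the nuclear norm coincides with the trace. This mirrors the argument already used in Lemma~\ref{th:nuc_norm_for_orth_matrices}, except that there the $\Gamma_i$ were orthogonal (so $\Gamma_i\Gamma_i^\top = I$), whereas here the weaker hypothesis $\vvvert\Gamma_i\vvvert_2\leq 1$ forces a correction term on the diagonal.

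First I would stack the blocks into a tall matrix ${\bf Q} = [\Gamma_1;\dots;\Gamma_n]\in\mathbb{R}^{nm\times m}$ and observe that the $(i,j)$th block of ${\bf Q}{\bf Q}^\top$ is exactly $\Gamma_i\Gamma_j^\top$. Hence ${\bf Q}{\bf Q}^\top$ agrees with ${\bf P}$ on every off-diagonal block, but carries $\Gamma_i\Gamma_i^\top$ rather than $I$ on its $i$th diagonal block. I would therefore write ${\bf P} = {\bf Q}{\bf Q}^\top + D$, where $D$ is block-diagonal with $i$th diagonal block $I - \Gamma_i\Gamma_i^\top$. Note that ${\bf P}$ is symmetric, since the $(j,i)$th block is $\Gamma_j\Gamma_i^\top = (\Gamma_i\Gamma_j^\top)^\top$ and each diagonal block is symmetric.

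The crucial step is to argue $D\succeq 0$. Since $\vvvert\Gamma_i\vvvert_2\leq 1$, the largest eigenvalue of $\Gamma_i\Gamma_i^\top$ is at most $1$, so $\Gamma_i\Gamma_i^\top\preceq I$ and each diagonal block $I-\Gamma_i\Gamma_i^\top$ of $D$ is PSD; being block-diagonal, $D$ is itself PSD. As ${\bf Q}{\bf Q}^\top\succeq 0$ trivially, we conclude ${\bf P} = {\bf Q}{\bf Q}^\top + D\succeq 0$, establishing the first claim.

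For the nuclear-norm bound I would use that a symmetric PSD matrix has singular values equal to its (non-negative) eigenvalues, so $\vvvert{\bf P}\vvvert_* = \text{tr}({\bf P})$. The trace of ${\bf P}$ is the sum of the traces of its $n$ diagonal blocks, each equal to $\text{tr}(I) = m$, giving $\vvvert{\bf P}\vvvert_* = nm$, which in particular yields $\vvvert{\bf P}\vvvert_*\leq mn$. I do not expect a genuine obstacle here: the only real insight is spotting the decomposition ${\bf P} = {\bf Q}{\bf Q}^\top + D$ and recognizing that the operator-norm hypothesis is precisely what makes the diagonal correction $D$ positive semi-definite; the remaining steps are routine verifications of block structure and the trace computation.
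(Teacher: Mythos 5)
Your proof is correct and is essentially the paper's own argument in slightly different packaging: the decomposition ${\bf P} = {\bf Q}{\bf Q}^\top + D$ with $D$ block-diagonal having blocks $I - \Gamma_i\Gamma_i^\top$ is exactly what the paper's direct evaluation of ${\bf v}^\top {\bf P}{\bf v} = \|\sum_i\Gamma_i^\top v_i\|^2 + \sum_i\|v_i\|^2 - \sum_i\|\Gamma_i^\top v_i\|^2$ expands, and the nuclear-norm step (trace equals $mn$ for a symmetric PSD matrix) is identical. No gaps.
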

\vspace{-2mm}
\begin{proof}
Let us first prove that ${\bf P} \succeq 0$.
Let ${\bf v} \in \mathbb{R}^{nm}$ have $n$ blocks, the $i$th block being $v_i \in \mathbb{R}^m$. Since $\vvvert \Gamma_i \Gamma^\top_i  \vvvert_2 \leq \vvvert \Gamma_i\vvvert_2 \vvvert \Gamma^\top_i  \vvvert_2 \leq 1$, we have that $\|\Gamma_i^\top v_i\|^2_2 = \| v^\top _i\Gamma_i \Gamma^\top_i  v_i\|_2 \leq \|v_i\|^2_2$ for all $i \in[n]$. Therefore, we
have ${\bf v}^\top {\bf P} {\bf v} = \|\sum_{i \in [n]} \Gamma^\top_i v_i\|^2_2 + \sum_{i \in [n]} \|v_i\|^2_2 - \sum_{i \in [n]} \|\Gamma_i^\top v_i\|^2_2 \geq 0$, for any ${\bf v}$, which implies that ${\bf P } \succeq 0$.
We now prove that $\vvvert {\bf P} \vvvert_* \leq mn$. Let $\sigma_r$ and $\lambda_r$ be the $r$th singular value and $r$th eigenvalue of ${\bf P}$ respectively. Since $ {\bf P}$ is real-symmetric and positive semi-definite, we have that $\vvvert {\bf P} \vvvert_* = \sum_r \sigma_r = \sum_r |\lambda_r| = \sum_r \lambda_r = \text{tr}({\bf P}) = mn$.
\end{proof}
\vspace{-2mm}

{\it Proof of Theorem \ref{thm:pos_def}.}

(Non-negativity): Since $s$ is a modified $P$-score, it satisfies {\eqref{mod:ds1}, i.e., $s \geq 0$}, which implies $d_{sc\mathcal{G}} \geq 0$, since the objective function on the r.h.s of \eqref{eq:g_align_sym_and_cont} is a sum of modified $P$-scores.

(Self-identity): If $A_1 = A_2  =\ldots = A_n$, then, if we choose $P_{i,j} = I$ for all $i,j \in [n]$, we have  $s(A_i,A_j,P_{i,j})=0$ by \eqref{mod:s3}, for all $i,j \in [n]$. Note that from the definition of $d_{sc\mathcal{G}}$, we are assuming that $I \in \mathcal{P}$. Furthermore, ${\bf P}$ defined using these $P_{i,j}$'s satisfies $\vvvert {\bf P }\vvvert_* \leq mn$. Therefore, this choice of $P_{i,j}$'s satisfies the constraints in the minimization problem in the  definition of $d_{sc\mathcal{G}}(A_{1:n})$. Therefore, $d_{sc\mathcal{G}}(A_{1:n})$ is upper-bounded by $0$, which along with its non-negativity leads to $d_{sc\mathcal{G}}(A_{1:n})=0$.

(Symmetry): The optimization problem in \eqref{eq:g_align_sym_and_cont}, involves summing $s(A_i,A_j, P_{i,j})$ over all pairs $i, j \in [n]$. Thus, permuting the matrices $\{A_i\}$ is the same as solving \eqref{eq:g_align_sym_and_cont} with $P_{i,j}$ replaced by $P_{\sigma(i),\sigma(j)}$ for some permutation $\sigma$. Thus, all that we need to show is that ${\bf P} \succeq 0$ if and only if ${\bf P}'  \succeq 0$, where ${\bf P}'$ is just like ${\bf P}$ but with its blocks' indexes permuted. To see this, note that the eigenvalues of a matrix $M$ do not change if $M$ is then permuted under some permutation matrix~$T$. 

(Generalized triangle inequality): We will follow exactly the same argument as in the proof of the generalized triangle inequality for Theorem~\ref{thm2}, which is provided in Appendix~\ref{sec:proof_thm2}. The only modification is in equation \eqref{eq:G_tri_6}, and in a couple of steps afterwards. 

Equation \eqref{eq:G_tri_6} should be replaced with 
\vspace{-2mm}
\begin{equation} \label{eq:neq_G_tri_6}
\sum\limits_{i \neq j} s(A_i,A_j,P^{\ast}_{i,j}) \leq \sum\limits_{i \neq j} s(A_i,A_j,\Gamma_{i} \Gamma_{j}^{\top}),
\end{equation}
where $\{\Gamma_{i}\}_{i \in [n]}$ are matrices in $\mathcal{P}$. This inequality holds because $P_{i,j}$ defined by $P_{i,j} = \Gamma_i \Gamma_j^\top$ $\forall i\neq j$, and  $P_{i,i} = I \forall i$, satisfies the constraints in \eqref{eq:g_align_cont}, and hence the r.h.s. of \eqref{eq:neq_G_tri_6} upper bounds the optimal objective value for \eqref{eq:g_align_cont}. Indeed, since $\Gamma_i \in \mathcal{P}$, and since, by assumption, $\mathcal{P}$ is closed under multiplication and transposition, it follows that $\Gamma_i \Gamma_j^\top\in \mathcal{P}$. Furthermore, if we define ${\bf P}$ to have as the $(i,j)$th block, $i\neq j$, $\Gamma_i \Gamma_j^\top$, and have as the $(i,i)$th block the identity $I$, then, by Lemma \ref{lm:towards_tract}, we know that ${\bf P}\succeq 0$.

Starting from \eqref{eq:neq_G_tri_6}, we use \eqref{mod:s2} and \eqref{mod:s1} from the modified $P$-score properties and obtain
\begin{align}
&\sum\limits_{i \neq j} s(A_i,A_j,\Gamma_{i} \Gamma_{j}^{\top}) \leq
\sum\limits_{i \neq j} s(A_i,A_{n+1},\Gamma_{i}) +\nonumber\\
&\sum\limits_{i \neq j} s(A_{n+1},A_j,\Gamma_{j}^{\top}) = s(A_i,A_{n+1},\Gamma_{i})
\\
&+\sum\limits_{i \neq j} s(A_j,A_{n+1},\Gamma_{j}).
\end{align}
The rest of the proof follows by choosing $\Gamma_i$ has in \eqref{eq:G_tri_7} and \eqref{eq:i1}, and noting that the new definition of $s^*_{i,j}$ and $s^{\ell*}_{i,j}$ satisfies the same properties as in the proof of Theorem \ref{thm2}. In particular, we have that $s^*_{i,j} =s^*_{j,i}$ and $s^{\ell*}_{i,j} = s^{\ell*}_{j,i}$, because ${\bf P}$ in \eqref{eq:g_align_sym_and_cont} is symmetric, and because we are assuming that \eqref{mod:s1} holds.
%

%
%
\section{Distribution of AQ and AC for the alignment experiment} \label{app:distribution_align_exp}

\begin{figure}[h!]
\includegraphics[trim=2.0cm 6.0cm 1.5cm 5.5cm, clip=true,width=\linewidth]{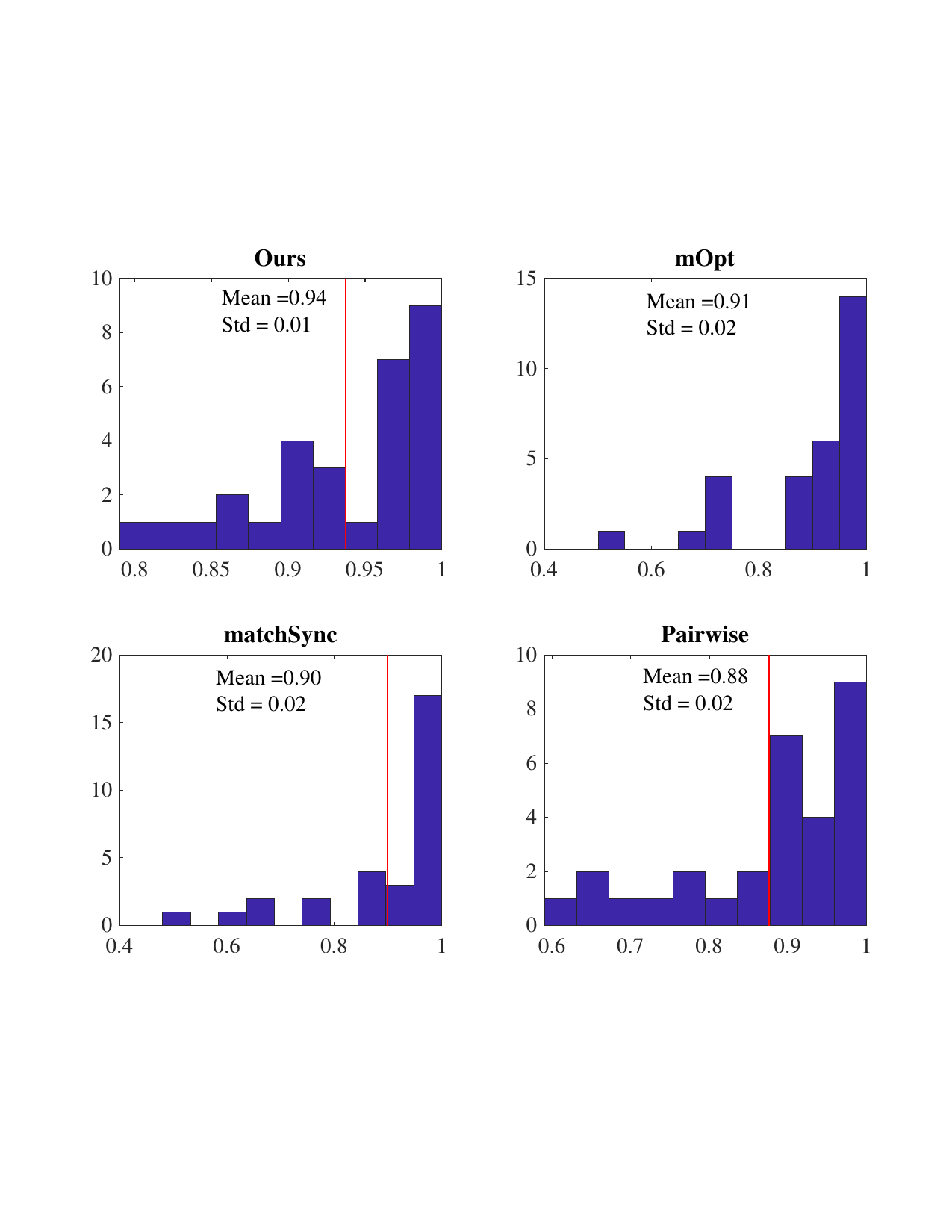}
\caption{Distribution of alignment quality (AQ) for the $30$ tests in Section \ref{sec:align_exp}.}
\vspace{-0.7cm}
\end{figure}
%
%
\begin{figure}[h!]
\includegraphics[trim=2.0cm 10.cm 1.5cm 9.cm, clip=true,width=\linewidth]{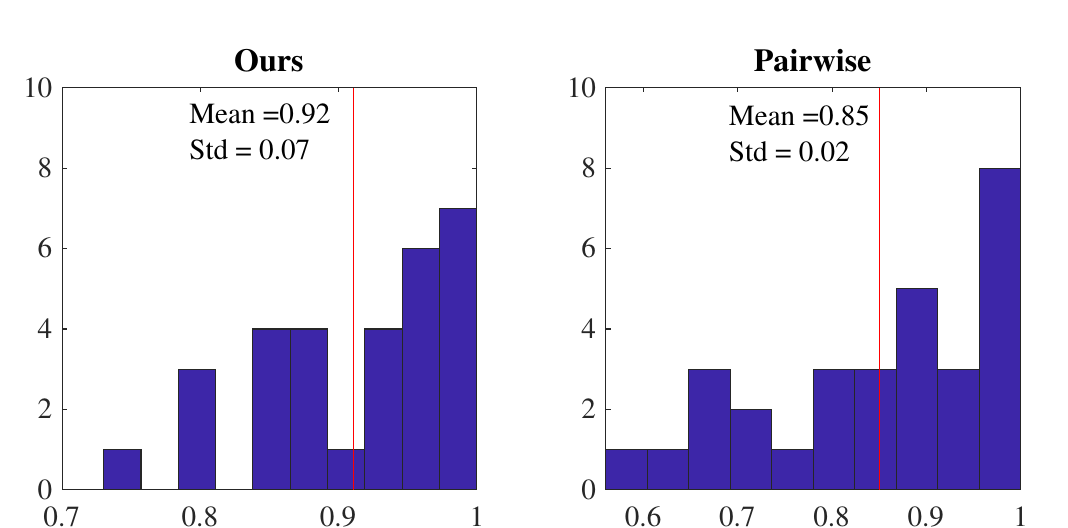}
\caption{Distribution of alignment consistency (AC) for the $30$ tests in Section \ref{sec:align_exp}. Note that, by construction, \emph{mOpt} and \emph{matchSync} always have AC = 1.}
\end{figure}

%
%
\section{Distribution of clustering errors for the clustering experiment} \label{app:distribution_clust_exp}
\begin{figure}[h!]
\includegraphics[trim=2.0cm 6.cm 1.5cm 5.cm, clip=true,width=\linewidth]{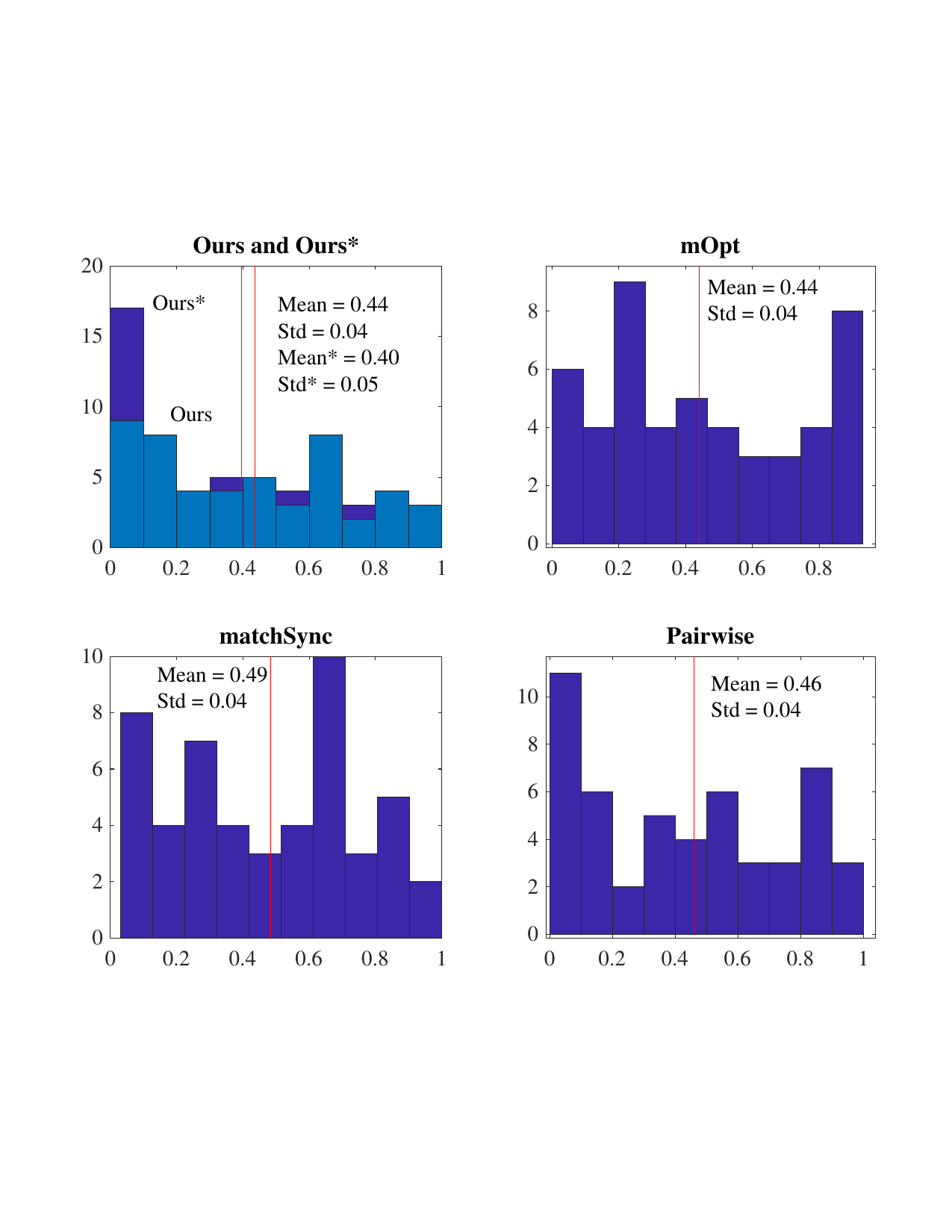}
\caption{Distribution of errors for clustering for the $50$ tests in Section \ref{sec:clust_exp}. Recall that the error is the fraction of misclassified graphs times the number of clusters, which is $2$ in our case.
A random guess gives an average clustering error of $1$.}
\end{figure}

\end{document}